\documentclass[11pt]{article}
\usepackage{graphicx} 
\usepackage[normalem]{ulem}
\input{macros}
\RequirePackage[numbers]{natbib}

\usepackage{multirow} %
\usepackage{booktabs}  
\usepackage{siunitx} 
\usepackage{bm}

\newtheorem*{theorem*}{Theorem}
\newcommand{\deteqv}[1]{\tilde{#1}}
\DeclareMathOperator{\Tr}{Tr}
\DeclareMathOperator{\Id}{Id}

\iaincommentsfalse

\title{Method of Moments Estimation of High-Dimensional Covariance Using a Parametric Model }
\author{%
Iain M. Johnstone\thanks{Department of Statistics, Stanford University; email: \texttt{imj@stanford.edu}.}\and
	Yuchen Wu\thanks{School of Operations Research and Information Engineering, Cornell University; email: \texttt{yuchen.wu@cornell.edu}.}\and
 Ran Xie\thanks{Department of Statistics, Stanford University; email: \texttt{ranxie@stanford.edu}.}\and
}

\begin{document}
\maketitle

\begin{abstract}
    We propose method-of-moments estimators for the eigenvalues of variance component covariance matrices in multivariate mixed effects models. Assuming a parametric form for the eigenvalue distribution, we focus on the high-dimensional regime where the number of predictors is large and comparable to the number of realizations of each random effect. In this setting, we show that the empirical moments of sum-of-squares matrices (e.g., MANOVA estimators of the covariance matrices) can be closely approximated by deterministic functions of the underlying parameters. This relationship enables the construction of consistent and asymptotically normal estimators via moment matching. Our approach is motivated by applications in quantitative genetics, where estimating genetic covariance components across multiple phenotypic traits is of central interest. We implement our method in a new python
    package \texttt{mlmm-mom}, and demonstrate how our method adapts to several common experimental designs in this domain.
    \end{abstract}

\tableofcontents    
\section{Introduction}

Linear mixed models (LMMs) are foundational tools in quantitative genetics, providing a flexible framework for modeling genetic and environmental sources of variation in complex traits \cite{fisher1919xv,henderson1975best,henderson1984applications,lynch1998genetics}. In particular, multi-trait LMMs are increasingly relevant in modern applications where phenotypes are collected on a large scale over thousands of traits simultaneously \cite{lande1979quantitative,jiang1995multiple,blows2015phenome}.

In this paper, we study multi-trait LMMs in a high-dimensional regime,
where the number of traits $p$ may be comparable to or exceed the
number of individuals $n$.

\begin{model}
\iain{ (i) Consider a $k$-level multi-trait  linear mixed model of the form
\begin{align}\label{eq:general_model}
    Y = X\beta + \sum_{r=1}^k U_r \alpha_r, \quad \alpha_r \sim N(0, \Id_{I_r} \otimes \Sigma_r),
\end{align}
where $Y \in \mathbb{R}^{n \times p}$ is the observed trait matrix, $X
\in \mathbb{R}^{n \times I_0}$ is the fixed effects design matrix,
$\beta \in \mathbb{R}^{I_0 \times p}$ is the fixed effect coefficient
matrix, $U_r \in \mathbb{R}^{n \times I_r}$ are known design matrices,
and $\alpha_r \in \mathbb{R}^{I_r \times p}$ are independent Gaussian random
effect matrices.}[MLMM][Perhaps it is better to give the model a name
so that it can be referenced in theorems later.]
\end{model}

Given observations $Y$ and known design matrices $(X, \{U_r\}_{r =
  1}^k)$, our goal is to estimate the unknown covariance matrices
$\{\Sigma_r\}_{r = 1}^k \subseteq \mathbb{R}^{p \times p}$. 
In high-dimensional regimes, these covariance matrices are high-dimensional objects and cannot be consistently estimated without imposing additional structural assumptions. 
In this work, we propose a simple yet effective parameterization of these covariance matrices. 
Specifically, we assume that the eigenvalue distribution of each covariance matrix $\Sigma_r$ is parameterized by a finite-dimensional parameter vector $\theta_r \in \mathbb{R}^{d_r}$.
Then the original estimation problem reduces to estimating a combined parameter vector $\theta = (\theta_1,\ldots,\theta_k) \in \mathbb{R}^D$, where $D = \sum_r d_r$.

A primary motivation for this work is the estimation of the additive genetic covariance matrix, typically denoted by $G$. The $G$-matrix is a central object in quantitative genetics, as it characterizes the genetic architecture of multivariate traits and determines the rate and direction of phenotypic evolution \cite{lande1979quantitative, walsh2009abundant}. In the context of the multi-trait mixed model \eqref{eq:general_model}, the estimation of $G$ often reduces to estimating specific covariance components $\{\Sigma_r\}_{r=1}^k$.

As a motivating example, consider a full-sibling design, a classical breeding scheme where traits are measured on offspring from multiple families. In this design, individuals within the same family share a common genetic background, allowing the total phenotypic variance to be partitioned into between-family and within-family components \cite{falconer1996introduction}. Specifically, the traits of the $j$-th sibling in the $i$-th family are modeled as
\begin{align}\label{eq:full.sib.ij}
    y_{ij} = \mu + a_i + e_{ij}\in\mathbb{R}^p, \quad a_i \sim N(0, \Sigma_1), \quad e_{ij} \sim N(0, \Sigma_2).
\end{align}
Here, $\mu \in \mathbb{R}^p$ is the global mean vector, $a_i$ represents the family-level random effect shared by all siblings in family $i$, and $e_{ij}$ is the residual effect representing individual-specific variation. Under the full sibling model, the $G$-matrix is related to the between-family covariance $\Sigma_1$ via $G \approx 2\Sigma_1$ \cite{falconer1996introduction, searle2009variance}. Thus, the problem of estimating the spectral features of $G$ is equivalent to making inference on the eigenvalues of $\Sigma_1$.

Computational scalability and theoretical characterization remain the primary challenges in estimating genetic covariance matrices within high-dimensional regimes. The advent of high-throughput technologies, such as microarrays, has enabled the simultaneous measurement of thousands of quantitative phenotypes, offering unprecedented opportunities to explore the extent of pleiotropy and the effective dimensionality of evolutionary responses \cite{lockhart2000genomics, blows2015phenome, mcguigan2014nature}. However, traditional statistical frameworks are ill-equipped for such data. Restricted Maximum Likelihood (REML), while standard in quantitative genetics, is typically restricted to low-dimensional settings ($p \leq 20$) due to the computational burden of the underlying non-convex optimization. Furthermore, REML estimators often suffer from significant bias when estimating key spectral features, such as the leading eigenvalues or the null space, in high dimensions \cite{pavlyshyn2022}. While the Multivariate Analysis of Variance (MANOVA) provides a simpler alternative, it frequently yields estimates that are not positive semi-definite and exhibits eigenvalue spreading, leading to inconsistent estimates of the spectral distribution \cite{fan2017eigenvalue}.

To address these computational challenges and establish a rigorous
theoretical foundation for high-dimensional inference, we propose a
method-of-moments (MoM) framework. Our approach utilizes moments of
quadratic forms in $Y$ that are that are designed to cancel out the
fixed effects.

\begin{model}
\iain{  (ii) Let  $B \in \mathbb{R}^{n \times n}$ be a non-zero positive
  semi-definite matrix that satisfies $BX = 0$. We study quadratic
  forms and empirical moments
  \begin{equation*}
    S_n = p^{-1} Y^T B Y, \qquad
    \hat{\mu}_{nl} = p^{-1} \Tr(S_n^l)
  \end{equation*}
Let $\hat{\mu}_n = (\hat{\mu}_{n1}, \ldots, \hat{\mu}_{nD})$ be the
first $D$ empirical moments of $S_n$.}[MLMMctd][Again highlight the
main definitions for future reference]
\end{model}

We show that in a high-dimensional regime where $n, p, I_1, \cdots, I_k \to \infty$, there exists an explicit deterministic function $\mathcal{F}_n$ such that
\begin{align*}
    \hat{\mu}_n=\mathcal{F}_n(\theta)+o_{a.s.}(1),
\end{align*}
which naturally leads to the following MoM estimator
\begin{align*}
    \hat{\theta}_n=\mathcal{F}_n^{-1}(\hat{\mu}_n).
\end{align*}
Under mild  assumptions, we show that the proposed MoM estimator is consistent and asymptotically normal, providing a principled approach to statistical inference for the spectral parameters of high-dimensional covariance components in multi-trait mixed linear models.

\subsection{Motivation from quantitative genetics}\label{sec:motivation}
In evolutionary genetics, 
a central object of interest is the additive genetic covariance matrix  $G$, 
which measures how traits genetically covary with each other. 
Because selection acts on whole organisms rather than on individual traits in isolation, the multivariate structure of genetic variation critically shapes evolutionary response. In particular, multivariate trait combinations with low or no genetic variation are expected to evolve slowly or not at all. 

Consider a vector of $p$ quantitative traits $z\in\mathbb{R}^p$. Let $P\in\mathbb{R}^{p\times p}$ denote the total phenotypic covariance matrix, which can be partitioned into several constituent components, including additive genetic, residual non-additive genetic and environmental effects \cite{falconer1996introduction}. 
If an episode of selection induces a change $s\in\mathbb{R}^p$ in the mean of $z$ in a population, denoted by $\bar{z}\in\mathbb{R}^p$, then the change inherited by the next generation is predicted by the multivariate breeder's equation \cite{lande1979quantitative}
\begin{align*}
    \Delta \bar{z} = GP^{-1}s.
\end{align*}

In the univariate case with $p=1$, the ratio $G/P$ expresses the extent to which phenotypes are determined by genes transmitted from the parents. This corresponds to the classical notion of heritability, and is of the greatest importance in breeding programs \cite{falconer1996introduction}. Writing $\beta=P^{-1}s$, commonly referred to as the vector of selection gradients \cite{lande1979quantitative}, this relationship can be expressed more compactly as $ \Delta \bar{z} = G\beta.$

Reliable estimation of the spectral structure of $G$ is therefore central to understanding evolutionary dynamics. To illustrate this, consider the spectral decomposition $G = \sum_i \lambda_i g_i g_i^{\top}$, which yields the representation \cite{walsh2009abundant}
\begin{align*}
    \Delta \bar{z} = \sum_{i}\lambda_ig_ig_i^T\beta,
\end{align*}
where $\lambda_i$ and $g_i$ denote the eigenvalues and corresponding eigenvectors of $G$, respectively. Closely related quantities of interest include the genetic variance along an arbitrary linear combination of traits $b \in \mathbb{R}^p$ \cite{lin1977heritability}, given by
\begin{align*}
    \sigma^2_g(b)=b^TGb.
\end{align*}
After normalization, this leads to the notion of evolvability along the direction of $b$ \cite{hansen2008measuring}, defined as 
\begin{align*}
    e(b)=\frac{b^TGb}{b^Tb}.
\end{align*}

Another quantity of fundamental interest is the rank of $G$, or equivalently the dimension of its null space, which encodes absolute biological constraints. A zero eigenvalue indicates the absence of additive genetic variance and hence evolutionary response in a particular direction of the trait space \cite{lande1979quantitative, walsh2009abundant}.
More generally, the rank of $G$ provides a natural criterion for assessing the extent of pleiotropy, the influence of a single gene on multiple traits. As a milder alternative to rank,  Kirkpatrick \cite{kirkpatrick2009patterns} introduced the effective number of dimensions, 
\begin{align*}
    n_D=\sum_{i=1}^n\frac{\lambda_i}{\lambda_n},
\end{align*}
where \( \lambda_1 \) is the largest eigenvalue of \( G \). This quantity measures the degree of ill-conditioning of \( G \) and can be interpreted as the reciprocal of the proportion of total genetic variance explained by largest eigenvalue.

In summary, estimating the additive genetic covariance matrix $G$ is central to quantitative genetics, and developing reliable and efficient methods to estimate $G$ from data remains a fundamental statistical challenge.

\subsection{Related Work}
The computational and statistical challenges of Restricted Maximum Likelihood (REML) have motivated several algorithmic developments for specific experimental designs. For balanced designs, Calvin and Dykstra \cite{calvin1991maximum} introduced an iterative algorithm based on convex duality that achieves rapid convergence in low-dimensional settings. More recently, Pavlyshyn et al. \cite{pavlyshyn2022} demonstrated that this approach can render REML feasible for $p=1000$ in the special case of balanced half-sib designs. Calvin and Dykstra \cite{CalvinDykstraREML1995} also extended their framework to include parametric covariance models for certain nested balanced designs. However, whether these algorithmic improvements can be generalized to high-dimensional regimes with more complex, unbalanced designs remains an open question.

Our work is also related to the literature on moment-based estimation for high-dimensional sample covariance matrices. In particular, Bai et al. \cite{baiyaochen2010} proposed a method-of-moments approach for Wishart-type sample covariance matrices where the population covariance has a fixed number of distinct eigenvalues, establishing the consistency and asymptotic normality of the resulting estimators. This setting can be viewed as a restricted instance of our framework with a single level of random effects ($k = 1$) and a specific choice of design matrix. Our framework extends this approach to multi-level nested random effects models ($k > 1$), where the combinatorial analysis of the moments becomes significantly more complex due to the interaction between multiple high-dimensional covariance components. Additional details are in the Supplementary Materials Sections
\ref{sec:matrix-normal}--\ref{app:half.sib}.

\subsection{Outline of paper}\label{sec:outline}
Section~\ref{sec:main} formalizes the model setup and states the main results. In Section~\ref{sec:main.results}, we state the proposed MoM estimators and establish their asymptotic properties under the assumption that $\Sigma_r$, $r=1,\ldots,k$, are simultaneously diagonalizable. When this assumption fails, we propose an empirical fix in Section~\ref{subsec:empirical.fix}. Section~\ref{sec:sequential.nested} specializes the construction of the estimators to nested models, yielding a computationally more efficient implementation, and verifies the regularity conditions required for consistency and asymptotic normality under three parametric eigenvalue decay models for the $\Sigma_r$'s. In Section~\ref{subsec:exp}, we present simulation studies under commonly used quantitative genetic designs to assess the finite-sample performance of the proposed estimators. Section~\ref{sec:free.prob} introduces preliminary results from free probability theory and random matrix theory. Section~\ref{sec:proof.main} contains the proofs of the main results.

\textit{Notations.} Unless otherwise stated, $\|\cdot\|$ denotes the
operator norm of a matrix.
The Jacobian matrix of $\mathcal{F}: \Theta \subset \R^D \to \R^D$ at $\theta$ is
denoted by $J \mathcal{F}(\theta)$.  
The notation $X_n(\omega)$ is used, when helpful, to highlight the
dependence of a random variable $X_n$ on the sample outcome $\omega$.
The indices $p, I_1, \ldots, I_k$ and matrices $Y, X, U_r, \Sigma_r,
B, F_r, \Lambda_r$, etc. are understood to depend on the sample size
$n$, while constants $c, C, D, L$ (not necessarily the same at each
appearance) do not.

\section{Model and main results}\label{sec:main}
Throughout the paper, we work under the following assumptions:
\begin{assum}\label{ass:minimum}
There exist constants $c,C,L> 0$ such that:
\begin{enumerate}
    \item As $n, p, I_1, \dots, I_k \to \infty$, the ratios $p/n$ and $I_r/n$ for each $r = 1, \dots, k$ satisfy $c < p/n < C$ and $c < I_r/n < C$.
    \item The matrices $B$, $\Sigma_r$, and $U_r$ have bounded operator norms, specifically $\|B\| < L$, $\|\Sigma_r\| < L$, and $\|U_r\| < L$ for all $r = 1, \dots, k$.
 \end{enumerate}
\end{assum}
Unless otherwise stated, $\|\cdot\|$ denotes the operator norm of a matrix. We further assume that the covariance matrices $\Sigma_r$ exhibit a well-behaved limiting spectral structure:
\begin{assum}\label{ass:param}
    \begin{enumerate}
        \item Assume that $\Sigma_1,\ldots,\Sigma_k$ are simultaneously diagonalizable by an orthonormal matrix $V$. Let $\Lambda_1, \ldots, \Lambda_k$ denote the resulting diagonal matrices, where $\Lambda_r = V^T \Sigma_r V$ for each $r = 1, \ldots, k$. 
 
        \item For each $r$, let $\Lambda_r =
            \diag(\lambda_{j,r})$. There exists a 
          piecewise continuous $g_r: [0,1] \to [0,\infty)$ such that
          $\lambda_{\lfloor px\rfloor,r} \to g_r(x)$ at continuity
          points of $g_r$. 
    \end{enumerate}
\end{assum}

In our simulations, we take $\lambda_{j,r} = g_r(j/p)$.

An immediate criterion for verifying simultaneous diagonalizability is provided by the following lemma:
\begin{lem}\label{lem:simu.diag}
    Let $\mathcal{G}$ be a family of real symmetric matrices. There
exists an orthonormal matrix $V$ such that $VAV^T$ is diagonal for all $A\in\mathcal{G}$ if and only if $AB=BA$ for all $A,B\in\mathcal{G}$.
\end{lem}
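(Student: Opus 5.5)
The forward direction is a direct computation, and the converse goes by induction on the dimension, splitting $\mathbb{R}^p$ into eigenspaces of one non-scalar member of $\mathcal{G}$. We take all matrices in $\mathcal{G}$ to be $p\times p$, and $\mathcal{G}$ may be infinite.

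\emph{Necessity.} Suppose $VAV^T = D_A$ is diagonal for every $A\in\mathcal{G}$. Then $A = V^T D_A V$, so
\[
AB = V^T D_A D_B V = V^T D_B D_A V = BA,
\]
because diagonal matrices commute.

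\emph{Sufficiency.} We induct on $p$; the case $p=1$ is trivial. If every $A\in\mathcal{G}$ is a scalar multiple of $\Id_p$, any orthonormal $V$ works. Otherwise pick some $A_0\in\mathcal{G}$ that is not scalar. By the spectral theorem,
\[
\mathbb{R}^p = E_1\oplus\cdots\oplus E_m,
\]
an orthogonal sum of the eigenspaces of $A_0$ with distinct eigenvalues $\mu_1,\dots,\mu_m$. Since $A_0$ is not scalar, $m\ge 2$, so each $\dim E_i<p$.

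The key step is that every $B\in\mathcal{G}$ preserves each $E_i$. Indeed, if $x\in E_i$ then
\[
A_0Bx = BA_0x = \mu_i Bx,
\]
so $Bx\in E_i$.

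Now fix $i$ and an orthonormal basis $Q_i$ of $E_i$. The restrictions $Q_i^T B Q_i$, for $B\in\mathcal{G}$, form a family of real symmetric $\dim E_i\times\dim E_i$ matrices. This family is pairwise commuting: since $E_i$ is invariant, $BQ_i = Q_iQ_i^TBQ_i$, and hence
\[
Q_i^TBQ_i\,Q_i^TCQ_i = Q_i^T BC Q_i = Q_i^T CB Q_i = Q_i^TCQ_i\,Q_i^TBQ_i .
\]
The induction hypothesis therefore gives an orthonormal $W_i$ that diagonalizes all of these restrictions simultaneously.

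Set $V^T = [Q_1W_1,\dots,Q_mW_m]$. Its columns are orthonormal because the $E_i$ are mutually orthogonal. Each column is an eigenvector of every $B\in\mathcal{G}$, since $B$ maps $E_i$ into itself and acts on it diagonally in the basis $Q_iW_i$. Hence $VBV^T$ is diagonal for every $B\in\mathcal{G}$.

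The only delicate point is the invariance of the eigenspaces together with the fact that commutativity passes to the restrictions. An infinite $\mathcal{G}$ causes no difficulty, because the induction is on the dimension $p$, not on the size of the family.
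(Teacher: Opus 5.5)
Your proof is correct. The paper gives no argument of its own: it treats the lemma as a direct corollary of Theorem 4.1.6 in Horn and Johnson, a standard textbook result on simultaneous diagonalization of commuting families.

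You replace that citation with a self-contained strong induction on the dimension $p$. A non-scalar $A_0\in\mathcal{G}$ splits $\mathbb{R}^p$ into its eigenspaces, and commutativity makes every member of $\mathcal{G}$ leave each eigenspace invariant. The compressions $Q_i^TBQ_i$ stay symmetric and pairwise commuting, so the induction hypothesis applies block by block.

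The citation buys brevity. Your argument produces the diagonalizing matrix as a real orthogonal $V$ directly, which is what the paper uses when it writes $\Lambda_r = V^T\Sigma_r V$. If the cited theorem is read in its Hermitian, unitary-similarity form, a small extra remark is needed to pass from a unitary matrix to a real orthogonal one. Your argument also makes explicit that $\mathcal{G}$ may be infinite, since the induction runs over the dimension rather than over the family.

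One cosmetic point concerns the transpose convention. Your $W_i$ satisfies that $W_i^T(Q_i^TBQ_i)W_i$ is diagonal, so it is the transpose of the matrix the induction hypothesis (in the $VAV^T$ form of the statement) literally supplies. This is harmless.
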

\begin{proof}
    This is a direct corollary of Theorem 4.1.6 in \cite{Horn_Johnson_1985}. 
\end{proof}

\begin{assum} \label{ass:theta}
    Suppose each function $g_r(\cdot)$ is parameterized by $\theta_r \in
    {\Theta_r \subset} \mathbb{R}^{d_r}$. Let $\theta := (\theta_1, \dots,
    \theta_k) \in 
    {\Theta = \prod_r \Theta_r \subset}
    \mathbb{R}^D$ denote the joint parameter vector, where $D =
    \sum_{r=1}^k d_r$ is the total number of parameters to be
    estimated. 
\end{assum}

\subsection{Main results}\label{sec:main.results}

To derive the deterministic equivalents for these moments and
characterize the mapping between the parameters $\theta$ and the
moments of $S_n$, we employ combinatorial tools from free probability
theory.
Some definitions and needed results are recalled in Section
  ~\ref{sec:free.prob}.
Let $\text{NC}_2(2l)$ denote the set of non-crossing pair partitions
of $[2l] = \{1, \dots, 2l\}$, and for each partition $\pi \in
\text{NC}_2(2l)$,
let $K(\pi) \in \text{NC}_2(2l)$ denote its
Kreweras complement 
\cite{bose2021random}. Each block of $K(\pi)$ consists entirely of
either odd or even indices; we call such blocks odd or even.
Let $K(\pi)[1]$ denote the collection of
odd blocks and $K(\pi)[2]$ the set of even ones.
The essential deterministic approximation 
for the moments of $S_n$, valid asymptotically, is given in the next
lemma, which is proved in Section \ref{sec:proof.main}.

\begin{lem}\label{lem:map.asym} 
    \begin{itemize}
        \item[(a)]  Under Model MLMM and Assumption \ref{ass:minimum}, 
        define $F_i=BU_iU_i^T$, $i=1,\ldots,k$,  and let $\varphi=p^{-1}\Tr $. Then as $n\rightarrow\infty$,
        \begin{align*}
            \hat{\mu}_{nl}
            =&\sum_{r_1,\ldots,r_l=1}^k
            \sum_{\pi\in\mathrm{NC}_2(2l)}
            \prod_{V\in K(\pi)[1]}
            \varphi\!\left({\textstyle\prod}_{v\in V} F_{r_{(i+1)/2}}\right)
            \prod_{V'\in K(\pi)[2]}
            \varphi\!\left({\textstyle\prod}_{v\in V'} \Sigma_{r_{v/2+1}}\right)
            + o_{a.s.}(1),
        \end{align*}
        where $r_{l+1}:=r_1$.
        \item[(b)] If in addition  Assumption \ref{ass:param} holds,
          let $\phi=\int_0^1 dx$, then the previous display holds with
          \begin{equation*}
         \varphi\!\left({\textstyle\prod}_{v\in V'}
           \Sigma_{r_{v/2+1}}\right)
         \qquad \text{replaced with} \qquad
         \phi(\textstyle{\prod}_{v\in V'} g_{r_{v/2+1}}).
          \end{equation*}
    \end{itemize}
\end{lem}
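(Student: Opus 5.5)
Since $BX=0$, the fixed effects drop out: $BY=\sum_r BU_r\alpha_r$. Write $\alpha_r = Z_r\Sigma_r^{1/2}$ with $Z_r\in\mathbb{R}^{I_r\times p}$ having i.i.d.\ $N(0,1)$ entries, mutually independent across $r$. Then
\begin{equation*}
S_n = p^{-1}\sum_{r,s}\Sigma_r^{1/2}Z_r^TU_r^TBU_sZ_s\Sigma_s^{1/2}.
\end{equation*}
Only even products of Gaussians survive in expectation. To first order, the $l$-th moment pairs each $Z_r$ with its own transpose, so cross terms $r\neq s$ contribute only through pairings of equal-index Gaussians.

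It is convenient to factor $B=B^{1/2}B^{1/2}$ and write $S_n=p^{-1}W^TW$ with
\begin{equation*}
W=\sum_r B^{1/2}U_rZ_r\Sigma_r^{1/2}\in\mathbb{R}^{n\times p}.
\end{equation*}
The plan has three steps: (1) compute $\mathbb{E}\hat\mu_{nl}$ by Wick's formula; (2) show $\mathrm{Var}(\hat\mu_{nl})=O(p^{-2})$ and upgrade to almost sure convergence; (3) replace $\varphi(\Sigma$-words$)$ by integrals of the $g_r$ for part (b).

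\textbf{Step 1 (expectation).} Expand
\begin{equation*}
\Tr(W^TW)^l=\sum_{r_1,s_1,\dots,r_l,s_l}\Tr\prod_{i=1}^l \Sigma_{r_i}^{1/2}Z_{r_i}^TU_{r_i}^TBU_{s_i}Z_{s_i}\Sigma_{s_i}^{1/2}.
\end{equation*}
This is a cyclic word in $2l$ Gaussian matrices. Wick's theorem writes the expectation as a sum over pair partitions $\pi$ of the $2l$ Gaussian positions, restricted to pairs with equal random-effect index. A pair $(a,b)$ contracts $Z_r$ against $Z_r^T$ through $\mathbb{E}[(Z_r)_{ij}(Z_r)_{i'j'}]=\delta_{ii'}\delta_{jj'}$. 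Standard genus counting, as in the moment method for Wishart matrices, shows that each $\pi$ contributes $p^{\,l+1-2g(\pi)}$ times a bounded quantity, since all operator norms are bounded by Assumption \ref{ass:minimum}. So after the normalization $p^{-l-1}$, only non-crossing $\pi$ survive.

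For $\pi\in\mathrm{NC}_2(2l)$ the index contractions factor along the blocks of the Kreweras complement $K(\pi)$. The odd blocks collect the $n$-side matrices. Around a contracted pair these give words in $U_r^TBU_s$, which under the equal-index constraint and the cyclic structure collapse to traces of products of $F_r=BU_rU_r^T$. The even blocks collect the $p$-side matrices, giving traces of products of $\Sigma_r$.

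The equal-index constraint forces $s_i=r_{i+1}$ in the cyclic labelling. This reduces the double sum over $(r_i,s_i)$ to a single sum over $r_1,\dots,r_l$ with $r_{l+1}=r_1$, which produces the displayed indexing. I expect this bookkeeping to be the main obstacle: verifying that the indices attached to each odd and even block are exactly $F_{r_{(i+1)/2}}$ and $\Sigma_{r_{v/2+1}}$, and that the non-surviving label configurations are of lower order. I would first do it for $l=1,2$ by hand, then argue in general via the standard bijection between non-crossing pairings and planar maps. An alternative route is to invoke asymptotic freeness of independent Gaussian matrices from deterministic matrices, then use the operator-valued or moment–cumulant formula $\varphi(a_1b_1\cdots a_lb_l)=\sum_\pi \kappa_\pi[a]\varphi_{K(\pi)}[b]$ recalled in Section~\ref{sec:free.prob}.

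\textbf{Step 2 (concentration).} The variance of $p^{-1}\Tr(S_n^l)$ is bounded by $O(p^{-2})$. This follows either from the Gaussian Poincaré inequality applied to the Lipschitz (on high-probability sets with bounded operator norm) function $Z\mapsto p^{-1}\Tr(S_n^l)$, or from Wick counting of connected pairings of two traces. Summability of $p^{-2}$ together with Borel–Cantelli gives the $o_{a.s.}(1)$ error. The fluctuations of $\|Z_r\|/\sqrt p$ are handled by the standard a.s.\ bound $\|Z_r\|\le C\sqrt p$ eventually.

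\textbf{Step 3 (part (b)).} Under Assumption \ref{ass:param}, every $\Sigma$-word is diagonalized by the common matrix $V$, so
\begin{equation*}
\varphi\Bigl(\prod\Sigma_{r}\Bigr)=p^{-1}\sum_j\prod\lambda_{j,r},
\end{equation*}
which is a Riemann sum of $\prod g_r(j/p)$. Since each $g_r$ is bounded (by $L$) and piecewise continuous, and $\lambda_{\lfloor px\rfloor,r}\to g_r(x)$ at continuity points, dominated convergence gives convergence to $\phi(\prod g_r)$. Because the remaining factors in each summand are bounded, the error stays $o(1)$ after summing over the finitely many $\pi$ and labels.
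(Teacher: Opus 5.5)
Your route is viable, but as written it has a gap in the step that carries the real content of part (a): the index bookkeeping. The mechanism you state for it is wrong.

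\textbf{The index constraint.} With Gaussian letters $Z_{r_i}^T$ at position $2i-1$ and $Z_{s_i}$ at position $2i$, a non-crossing pairing joins each even position $2i$ to some odd position $2i'-1$. Independence therefore forces $s_i=r_{i'}$, where the map $i\mapsto i'$ depends on $\pi$. It does not force $s_i=r_{i+1}$: for $\pi=12|34$ you get $s_1=r_1$, $s_2=r_2$.

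\textbf{Where $r_{l+1}=r_1$ comes from.} The cyclic convention comes from the trace, not from the pairing. The $p$-side insertion following position $2i$ is $\Sigma_{s_i}^{1/2}\Sigma_{r_{i+1}}^{1/2}=\Sigma_{r_{i'}}^{1/2}\Sigma_{r_{i+1}}^{1/2}$, which is not a full $\Sigma$.

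\textbf{Why this matters in (a).} The $\Sigma_r$ need not commute in part (a). So an even block of $K(\pi)$ a priori yields a trace of alternating half-powers. It equals $\Tr\bigl(\prod_{v\in V'}\Sigma_{r_{v/2+1}}\bigr)$ only if those half-powers match up after a cyclic rotation. Likewise, an odd block yields a chain $U_{r_{i_1}}^TBU_{s_{i_1}}U_{r_{i_2}}^TBU_{s_{i_2}}\cdots$. This collapses to a trace of $F$'s only if $s_{i_\nu}=r_{i_{\nu+1}}$.

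\textbf{The missing ingredient.} You need the structural fact the paper isolates as Lemma~\ref{lem:even.odd}:
\begin{itemize}
\item non-crossing pairs are odd--even;
\item every block of $K(\pi)=\pi\gamma_{2l}$ has a single parity;
\item since $\pi(j+1)=K(\pi)(j)$, along an odd block $2i_1-1<\cdots<2i_t-1$ one has $i_\nu'=i_{\nu+1}$, and along an even block $2i_1<\cdots<2i_t$ one has $i_\nu'=i_{\nu-1}+1$ (indices cyclic).
\end{itemize}
The odd-block identity turns that block's product into $\Tr(F_{r_{i_1}}\cdots F_{r_{i_t}})$. The even-block identity makes the half-powers telescope to $\Tr(\Sigma_{r_{i_1+1}}\cdots\Sigma_{r_{i_t+1}})$. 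Checking $l=1,2$ by hand and appealing to planar maps does not supply this. Taken literally, your heuristic $s_i=r_{i+1}$ is the wrong constraint for every $\pi$ except the single pairing $\{2i,2i+1\}$ (with $2l+1\equiv 1$).

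\textbf{The rest of the plan.} Once that lemma is supplied, your plan is a legitimate alternative to the paper's.
\begin{itemize}
\item You compute $\mathbb{E}\hat\mu_{nl}$ by an exact real Wick/genus expansion. This avoids the paper's embedding into $m\times m$ matrices and its finite-$n$ freeness device (Lemma~\ref{lem:free.equiv} with Lemma~\ref{lem:kreweras.moments}). The cost is the standard check that crossing and same-orientation ($Z^T$ with $Z^T$) pairings are $O(p^{-1})$ after normalization. In return, the leading term comes out directly with $\varphi=p^{-1}\Tr$.
\item Your concentration step, $\mathrm{Var}(p^{-1}\Tr S_n^l)=O(p^{-2})$ via the Gaussian Poincar\'e inequality plus Borel--Cantelli, is more self-contained than the paper's. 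The paper's step passes through Lemma~\ref{lem:deteqv.mmts}, hence through the simultaneous diagonalizability of Assumption~\ref{ass:param}, which part (a) does not assume.
\item Step 3 is the same Riemann-sum argument as the paper's.
\end{itemize}
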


In the formulas, note that if $v \in [2l]$ is odd, then $(v+1)/2$ is an
integer in $[l]$, while if $v$ is even, then $v/2 + 1 \in [l]$,
remembering that $l + 1$ is interpreted as $1$.

Part (b) follows directly from (a) under Assumption
  \ref{ass:param}, for then Riemann approximation implies that for
 $R \subset [k]$, we have
\begin{equation}
  \label{eq:riemman-app}
  \varphi \Big( \textstyle{\prod}_{r \in R} \ \Sigma_{r} \Big)
  = \varphi \Big( \textstyle{\prod}_{r \in R} \ \Lambda_{r} \Big)
   = \phi \Big( \prod_{r \in R} \ g_{r} \Big) + o(1).
\end{equation}

\medskip
\textit{Examples.} For $l=1$, $\text{NC}_2(2)$ has a single pair
partition $\{\{1,2\}\}$, which is non-crossing. We write it $12$ for
short. The Kreweras complement $K(\pi)$ has two singleton blocks
$\{\{1\}, \{2\}\}$, written $K(\pi) = 1|2$ for short.
Thus $K(\pi)[1] = 1$ and $K(\pi)[2] = 2$, and
Lemma \ref{lem:map.asym}(b) yields the following deterministic equivalent:
\begin{align*}
    \hat{\mu}_{n1} =\sum_{r_1=1}^k \varphi(F_{r_1}) \phi(g_{r_1}) + o_{a.s.}(1).
\end{align*}

For $l=2$, there are two pair partitions: $\text{NC}_2(4) = \{ \pi_1 =
12|34, \pi_2 = 14|23 \}$ and, from Section \ref{sec:free.prob},
$K(\pi_1) = 1|24|3$, and $K(\pi_2) =
13|2|4$. Lemma \ref{lem:map.asym}(b) gives
\begin{equation*}
  \hat{\mu}_{n2} =\sum_{r_1, r_2 =1}^k \varphi(F_{r_1})
  \varphi(F_{r_2}) \phi(g_{r_1} g_{r_2})
     + \varphi(F_{r_1} F_{r_2}) \phi(g_{r_1})
     \phi(g_{r_2}) + o_{a.s.}(1).
\end{equation*}

The map $\mathcal{F}_n:\mathbb{R}^D\rightarrow \mathbb{R}^D$ described below model MLMM in the introduction has components $\mathcal{F}_{n,l}$, $l\in[D]$. Under Assumptions \ref{ass:minimum} and \ref{ass:param}, Lemma \ref{lem:map.asym}(b) describes $\mathcal{F}_{n,l}$ as the composition $\Psi_l\circ\Phi_l$ of a parameter embedding map $x=\Phi_l(\theta)$ and a moment spreading map $\Psi_l(x)$. We use notation $\bm{g}^{\bm{l}}$ for products $g_1^{l_1}\cdots g_k^{l_k}$, indexed by multi-indices $\bm{l}=(l_1,\ldots,l_k)$ with non-negative integer components $l_r$ and degree $|\bm{l}|=l_1+\ldots+l_k$. We write $\mathcal{L}_l[k]$ for the set of multi-indices with $k$ components and degree at most $l$. The right side of the display in Lemma \ref{lem:map.asym}(b)
is a polynomial $\Psi_l(x)$ in variables $x=(x^{\bm{l}})$, $x^{\bm{l}} = \phi(\bm{g}^{\bm{l}})$ for $\bm{l}\in \mathcal{L}_l[k]$, so that
\begin{align*}
    \hat{\mu}_{nl}=\Psi_{l}(\{\phi(\bm{g}^{\bm{l}}):\bm{l}\in \mathcal{L}_l[k]\})+o_{a.s.}(1),
\end{align*}
and the embedding map $\Phi_l$ maps $\theta$ to $\{\phi(\bm{g}^{\bm{l}}):\bm{l}\in \mathcal{L}_l[k]\}$.

Denote the first $D$ moments of $S_n$ by $\hat{\mu}_n=(\hat{\mu}_{n1},\ldots,\hat{\mu}_{nD})$, then 
\begin{align}\label{eq:Fn}
    \hat{\mu}_n = \mathcal{F}_n(\theta)+o_{a.s.}(1),
\end{align}
with $\mathcal{F}_n:\mathbb{R}^D\rightarrow\mathbb{R}^D$.  Finally, the method of moments estimator $\hat{\theta}_n$ of $\theta$ is defined as the solution to the moments equation
\begin{align}\label{eq:mom.construction}
    \hat{\mu}_n=\mathcal{F}_n(\hat{\theta}_n).
\end{align}

To study the asymptotic behavior of the MoM estimators $\hat{\theta}_n$, we first analyze the empirical moments $\hat{\mu}_n$, from which the behavior of $\hat{\theta}_n$ can be derived via their relationship in (\ref{eq:mom.construction}). We begin by introducing a deterministic equivalent distribution $\tilde{F}_n$ for the empirical spectral distribution $F^{S_n}$ of $S_n$.  Here, deterministic equivalence is in the sense that almost surely $F^{S_n}(x) - \tilde{F}_n(x) \to 0$ at each $x\in\mathbb{R}$.
This equivalent law $\tilde{F}_n$ is constructed from model quantities $B$, $U_r$, $\Sigma_r$, $r=1,\ldots,k$,  as described in detail in Section \ref{subsec:deteqv}. From $\tilde{F}_n$, we define a sequence of deterministic equivalent moments $\deteqv{\mu}_n$ such that as  $n\rightarrow\infty$,
$$
\hat{\mu}_n - \deteqv{\mu}_n\rightarrow0
$$
almost surely. Moreover, as we detail in Section \ref{subsec:clt}, Theorem \ref{thm:clt}, a central limit theorem holds for $\hat{\mu}_n$: there exist  $R_n\in\mathbb{R}^D$ and $P_n\in\mathbb{R}^{D\times D}$ such that 
\begin{align*}
    P_n^{-1/2}\left(p(\hat{\mu}_n-\deteqv{\mu}_n) - R_n\right)\rightarrow N(0,\Id _D)
\end{align*}
in distribution. We are now ready to present our main result.

\begin{theorem}[Main result]\label{thm:main}
    Under Model MLMM and Assumptions
      \ref{ass:minimum}--\ref{ass:theta}, with true value $\theta$,
    further assume that   
    \begin{enumerate}
        \item[(i)] $B$ is PSD, $BU_r\neq 0$ for each $r=1,\ldots,k$, and $BX=0$,
        \item[(ii)] $\Phi$ is twice continuously differentiable on
            $\Theta$,
        \item[(iii)] there exist $\rho>0, c>0$ and $n_0$ such that
          uniformly in $\theta' \in B(\theta,\rho) \subset \Theta$ and $n
          \geq n_0$,
          \begin{equation*}
            \big|\text{det}(J\mathcal{F}_n(\theta'))\big| \geq c > 0.
          \end{equation*}
    \end{enumerate}
    then
    \begin{enumerate}
           \item almost surely, for $n \geq n_1(\omega)$, there exists a
             unique solution $\hat{\theta}_n(\omega)$ to
             (\ref{eq:mom.construction}) that is consistent,
             $\hat{\theta}_n \to \theta$. 
        \item for $D_n:=(J\mathcal{F}_n(\theta))^{-1}$, as $n\rightarrow \infty$,
        \begin{align*}
            (D_nP_nD_n^T)^{-1/2}\left[p\left(\hat{\theta}_n-\theta\right)-Q_n\right]\convd N(0,\Id)
        \end{align*}
        for some $P_n\in\mathbb{R}^{D\times D}$ and $Q_n=D_nR_n+p(\mathcal{F}_n^{-1}(\deteqv{\mu}_n)-\theta)\in\mathbb{R}^{D}$, with $P_n$, $R_n$ specified in Theorem \ref{thm:clt} and the construction of $\deteqv{\mu}_n$ specified in Section \ref{subsec:deteqv}. 
        \end{enumerate}
    \end{theorem}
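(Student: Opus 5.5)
The argument is a standard delta-method one, run through a uniform inverse function theorem applied to the sequence of maps $\mathcal{F}_n$. Lemma~\ref{lem:map.asym}(b), the deterministic equivalence $\hat\mu_n-\deteqv{\mu}_n\to0$ a.s., and Theorem~\ref{thm:clt} are taken as given.

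\textbf{Step 1: uniform regularity of $\mathcal{F}_n$.} Write $\mathcal{F}_n=\Psi_n\circ\Phi$ as in Lemma~\ref{lem:map.asym}(b). Each component $\Psi_{n,l}$ is a polynomial in the variables $x^{\bm l}=\phi(\bm g^{\bm l})$. Its coefficients are products of the quantities $\varphi(F_{r_{i_1}}\cdots F_{r_{i_m}})$. Assumption~\ref{ass:minimum} gives $\|F_r\|\le\|B\|\|U_r\|^2\le L^3$, so
\[
|\varphi(F_{r_{i_1}}\cdots F_{r_{i_m}})|\le L^{3m}.
\]
Hence these coefficients are bounded uniformly in $n$, and the polynomials $\Psi_{n,l}$ have bounded degree. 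Combined with hypothesis (ii), this shows that $\mathcal{F}_n$ is $C^2$ on $\Theta$. Moreover, its first and second derivatives are bounded on the closed ball $\bar B(\theta,\rho)$ uniformly in $n\ge n_0$, shrinking $\rho$ if needed so that $\Phi$ and its derivatives are bounded there.

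Together with hypothesis (iii), Cramer's rule then gives $\|(J\mathcal{F}_n(\theta'))^{-1}\|\le C$ uniformly for $\theta'\in B(\theta,\rho)$ and $n\ge n_0$. The Jacobian $J\mathcal{F}_n$ is also Lipschitz on this ball with a uniform constant.

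\textbf{Step 2: uniform inverse function theorem.} Apply a quantitative inverse function theorem, for example via the contraction
\[
T_n(\vartheta)=\vartheta-D_n\bigl(\mathcal{F}_n(\vartheta)-y\bigr).
\]
This yields $\rho'\le\rho$ and $\delta>0$, independent of $n$, with two properties:
\begin{itemize}
\item $\mathcal{F}_n$ is injective on $B(\theta,\rho')$;
\item its image contains $B(\mathcal{F}_n(\theta),\delta)$, and the inverse on that ball is $C^1$ with uniformly bounded derivative and uniformly Lipschitz derivative.
\end{itemize}
By \eqref{eq:Fn}, $\hat\mu_n-\mathcal{F}_n(\theta)\to0$ a.s. Therefore, for $n\ge n_1(\omega)$ we have $\hat\mu_n\in B(\mathcal{F}_n(\theta),\delta)$, and we set $\hat\theta_n:=\mathcal{F}_n^{-1}(\hat\mu_n)$. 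This is the unique solution in $B(\theta,\rho')$, which is the sense of uniqueness we would claim. The Lipschitz bound gives
\[
|\hat\theta_n-\theta|\le C|\hat\mu_n-\mathcal{F}_n(\theta)|\to0,
\]
which is consistency. Hypothesis (i) is used only through $BX=0$, which removes the fixed effects so that Lemma~\ref{lem:map.asym} applies, and through $BU_r\ne0$, which is needed for the nondegeneracy underlying (iii). Likewise $\mathcal{F}_n^{-1}(\deteqv{\mu}_n)$ is well defined for large $n$, since $\deteqv{\mu}_n-\mathcal{F}_n(\theta)\to0$.

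\textbf{Step 3: delta method.} Put $\tilde\theta_n=\mathcal{F}_n^{-1}(\deteqv{\mu}_n)$ and Taylor-expand $\mathcal{F}_n^{-1}$ around $\deteqv{\mu}_n$:
\[
p(\hat\theta_n-\tilde\theta_n)=\bigl(J\mathcal{F}_n(\tilde\theta_n)\bigr)^{-1}p(\hat\mu_n-\deteqv{\mu}_n)+O\bigl(p|\hat\mu_n-\deteqv{\mu}_n|^2\bigr).
\]
By Theorem~\ref{thm:clt}, $p(\hat\mu_n-\deteqv{\mu}_n)=R_n+O_P(1)$. Assuming $R_n$ is bounded, which should follow from its construction in Theorem~\ref{thm:clt}, the remainder is $O_P(p^{-1})$. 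Next, $\tilde\theta_n-\theta\to0$, so the Lipschitz bound gives $\|(J\mathcal{F}_n(\tilde\theta_n))^{-1}-D_n\|\to0$. Substituting $Q_n=D_nR_n+p(\tilde\theta_n-\theta)$ yields
\[
p(\hat\theta_n-\theta)-Q_n=D_n\bigl[p(\hat\mu_n-\deteqv{\mu}_n)-R_n\bigr]+o_P(1).
\]
Finally, $(D_nP_nD_n^T)^{-1/2}D_nP_n^{1/2}$ is orthogonal, so the standardized limit is $N(0,\Id)$ by Theorem~\ref{thm:clt} and Slutsky's lemma.

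\textbf{Main obstacle.} The hard part is the last normalization, which needs $D_nP_nD_n^T$ to have smallest eigenvalue bounded away from zero. The matrix $D_n$ has bounded norm and a bounded inverse, so this reduces to $\lambda_{\min}(P_n)\ge c>0$. I expect this to be available from, or assumed in, Theorem~\ref{thm:clt}. If it is not, I would pass to subsequences along which $P_n$ converges and argue there. A secondary point is checking that $R_n=O(1)$.
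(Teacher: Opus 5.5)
Your proposal is correct and follows essentially the paper's route. The shared steps are: uniform bounds $L^{3m}$ on the coefficients of $\Psi_n$; a quantitative inverse function theorem with an $n$-independent image radius (the paper's Lemma~\ref{lem:inverse}, adapted from Tao, where you use an explicit contraction); consistency from $\hat\mu_n-\mathcal{F}_n(\theta)\to0$ a.s.; and a delta method for $\mathcal{F}_n^{-1}$ centred at $\deteqv{\mu}_n$, replacing $(J\mathcal{F}_n(\tilde\theta_n))^{-1}$ by $D_n$ and using that $(D_nP_nD_n^T)^{-1/2}D_nP_n^{1/2}$ is orthogonal.

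The conditions you flag as the main obstacle (well-conditioned $P_n$, bounded $R_n$) are not verified in the paper either; they are simply hypotheses of its delta-method Lemma~\ref{lem:delta.var}. Two small corrections: your subsequence fallback would not rescue a degenerate $P_n$, and the last step needs rotation invariance of $N(0,\Id)$ rather than Slutsky alone, since the orthogonal matrices need not converge (the paper handles this with a sub-subsequence argument over the compact orthogonal group).
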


\begin{rem} \label{eq:plug-in.sd.bias}
For a given MoM estimator $\hat{\theta}_n$, we obtain plug-in estimates of the asymptotic covariance and bias terms. Let $\hat{D}_n=(J\mathcal{F}_n(\hat{\theta}_n))^{-1}$. As in Section \ref{subsec:deteqv}, write $B=\Gamma\Gamma^T$. Then $P_n$, $R_n$, and $\deteqv{\mu}_n$ depend on $\Gamma_r:=\Gamma^TU_rU_r^T\Gamma$ and $\Lambda_r$, $r=1,\ldots,k$. Given $\hat{\theta}_n$, define $\hat{\Lambda}_r=\text{diag}\bigl(\hat{g}_r(i/p;\hat{\theta}_n)\bigr)$
and compute $\hat{P}_n$, $\hat{R}_n$, and $\hat{\deteqv{\mu}}_n$ by replacing $\Lambda_r$ with $\hat{\Lambda}_r$ in these expressions. Finally, set $\hat{Q}_n=\hat{D}_n\hat{R}_n+p\bigl(\mathcal{F}_n^{-1}(\hat{\deteqv{\mu}}_n)-\hat{\theta}_n\bigr)$.
\end{rem}

\begin{rem}\label{rem:nested}
    In practice, solving the full system of $\sum_{r=1}^kd_r$ equations can be computationally intensive. However, for certain classes of mixed-effects models, such as nested models where $\text{col}(U_1)\subset\ldots\subset\text{col}(U_k)$, with $\mathrm{col}(U_r)$ denoting the column space of $U_r$, we can simplify the process. Specifically, we can construct matrices $B_1,\ldots,B_k$ such that each $B_r$ cancels out $U_1,\ldots,U_{r-1}$. This allows us to sequentially solve for each set of parameters $\theta_r$ using the previously estimated parameters $\hat{\theta}_{r+1},\ldots,\hat{\theta}_{k}$. As a result, instead of solving the entire system at once, we solve a smaller system of $d_r$ equations at each stage. Section \ref{sec:sequential.nested} provides a detailed example.
\end{rem}

Here we introduce three representative parameterizations for the limiting eigenvalue distributions $g_r(x)$, $r=1,\ldots,k$. For each model, we verify in Section~\ref{sec:sequential.nested} that conditions (ii) and (iii) of Theorem~\ref{thm:main} are satisfied.  

\begin{enumerate}[leftmargin=*, align=left, labelsep=0.6em]
    \item[\textbf{Model 1.}] Step function with $s$ mass points ($d_r=2s-1$). The eigenvalues are parameterized by a set of parameters $\theta$, which lies in the parameter space 
    $$\Theta:=\left\{\theta = (\tau_{1},\ldots,\tau_{s},\rho_{1},\ldots,\rho_{s-1}):\tau_1>\ldots>\tau_s\geq 0, \rho_j>0,{\textstyle\sum}_{j=1}^{s-1}\rho_j<1\right\},$$ 
    with $g_r(x) = \tau_1\mathbbm{1}[0\leq x\leq \rho_1]+{\textstyle\sum}_{j=2}^{s} \tau_j\mathbbm{1}[\sum_{i=1}^{j-1}\rho_i<x\leq \sum_{i=1}^{j}\rho_i]$, where $\rho_s:=1-\sum_{j=1}^{s-1}\rho_j$.
    \item[\textbf{Model 1N.}] Step function with null space ($d_r = 2s-2$). As for
    Model 1, but with $\tau_s \equiv 0$.
    \item[\textbf{Model 2.}] Polynomial decay ($d_r=2$). The eigenvalues are parameterized by $\theta\in\Theta:=\{\theta=(\tau,\rho):\tau>0,\rho>0\}$, with $g_{r}(x) = \tau(1-\rho x)^s_+$ for a given integer $s>0$. When $\rho>1$, the model has a nontrivial null space, taking up a fraction of $1-\rho^{-1}$.  
    \item[\textbf{Model 3.}] Exponential decay ($d_r=2$). The eigenvalues are parameterized by $\theta\in\Theta:=\{\theta=(\tau,\rho):\tau,\rho>0\}$, with $g_r(x) = \tau\exp(-\rho x)$.
\end{enumerate}

Model 1 corresponds to the parametric model studied in \cite{baiyaochen2010}, which considers a single level of randomness, i.e., $k=1$. While \cite{baiyaochen2010} does not require the eigenvalues $\{\tau_i\}$ to be ordered, it is often most natural in practice to parameterize the eigenvalue decay, with $\tau_1>\ldots>\tau_s\geq 0$. Model 3, which assumes exponentially decaying eigenvalues, is motivated by empirical observations of genetic covariance structures \cite{kirkpatrick1992measuring}.  Across all three models, there is a parameter that represents the value of the largest eigenvalue of $\Sigma_r$, but this will typically not be appropriate for modeling large
separated eigenvalues (`spikes') when they are present, see Section
\ref{subsec:empirical.fix} below. 
 In both Model 1 and Model 2, the proportion of the null space can be expressed explicitly in terms of model parameters. Specifically, in Model 1N, setting $\tau_s=0$, setting $\tau_s=0$ yields a null space fraction of $\rho_s=1-\sum_{j=1}^{s-1}\rho_j$. In Model 2, this quantity is directly given by $1-\rho^{-1}$ when $\rho>1$. In practical applications, it is often difficult to distinguish between small eigenvalues and zero, leading to the notion of a "nearly-null" space in quantitative genetics. In Model 3, due to exponential decay, the eigenvalues rapidly approach zero for sufficiently large values of $\rho$, effectively producing a "nearly-null" space.

\subsection{Empirical fix}\label{subsec:empirical.fix}
    When it is unclear whether $\Sigma_1,\ldots,\Sigma_k$ are
    simultaneously diagonalizable, we estimate the eigenvectors of
    each $\Sigma_r$ separately. Let $\Sigma_r=V_r\Lambda_rV_r^T$ be
    the spectral decomposition of $\Sigma_r$, where $\Lambda_r$ is
    diagonal with entries in nonincreasing order. Let
    $\hat{\Sigma}_r=Y^TM_rY$ denote the MANOVA estimator of
    $\Sigma_r$, for an appropriate choice of symmetric matrices
    $M_1,\ldots,M_k$ as in \cite{searle2009variance}. Write the
    spectral decomposition of each $\hat{\Sigma}_r$ as
    $\hat{\Sigma}_r=\hat{V}_r\hat{\Lambda}_r\hat{V}_r^T$, where
    $\hat{\Lambda}_r$ contains the eigenvalues in nonincreasing
    order. We use $\hat{V}_r$ as an empirical estimate of $V_r$.
    MANOVA eigenvectors are generally not consistent
      in high dimensions and may exhibit eigenvector aliasing
      \cite{fan2018spiked}. Nevertheless, in our setting they still
      yield method-of-moments estimators with good empirical accuracy,
      as shown in Section \ref{subsec:exp}. A speculative, but
      plausible explanation is that our estimator aggregates moment information and does not rely on recovering any single eigenvector accurately.

\paragraph*{Empirical eigenvectors}
Motivated by the expansion given in Lemma \ref{lem:map.asym}(a)---which does \textit{not} require simultaneous diagonalization---,
  we construct the mapping $\Psi_n$ similarly as in
  Section~\ref{sec:main} and adapt $\Phi$ to incorporate empirically
  estimated eigenvectors. We use notation $\Sigma_{\bm{w}}$ for
  products $\Sigma_{r_1}\Sigma_{r_2}\cdots\Sigma_{r_s}$, indexed by
  words $\bm{w}=r_1r_2\ldots r_s$ in letters $r_\nu \in [k]$
  of length $s$. We write $\mathcal{W}_l[k]$ for the collection of words of length $s\leq l$. In this case the display in Lemma \ref{lem:map.asym}(a) can be written as a polynomial $\Psi_l=\Psi_{\mathcal{W},l}$, such that
\begin{align*}
    \hat{\mu}_{nl}=\Psi_{l}(\{\varphi(\Sigma_{\bm{w}}):\bm{w}\in \mathcal{W}_l[k]\})+o_{a.s.}(1).
\end{align*}and the embedding map $\Phi=\Phi_{\mathcal{W}}$ maps $\theta$ to $\{\varphi(\Sigma_{\bm{w}}):\bm{w}\in \mathcal{W}_l[k]\}$ for each $l=1,\ldots,D$
For a given parameter vector $\theta$, define surrogate diagonal matrices
\[
    \Lambda_r'(\theta)=\mathrm{diag}\bigl(g_r(i/p;\theta_r)\bigr),
\]
where $g_r$ is the limiting function from Assumption~\ref{ass:param}. Using $\Lambda_r'(\theta)$ and $\hat{V}_r$, define $\hat{\Phi}_{\mathcal{W},l}$ from $\theta$ to the collection of moments
\begin{align*}
    \left\{\varphi\left({\textstyle\prod}_{r\in\bm{w}} \hat{V}_r\Lambda_r'\hat{V}_r^T\right):\bm{w}\in \mathcal{W}_l[k]\right\}.
\end{align*}
It is easy to see that $\hat{\Phi}_{\mathcal{W},l}$ is random, and is polynomial in the diagonal entries of $\Lambda_r'(\theta)$.
Let $\hat{\mathcal{F}}_n=\Psi_{\mathcal{W}}\circ\hat{\Phi}_{\mathcal{W}}$, and obtain the method of moments estimator $\hat{\theta}_n$ as the solution to
\begin{align}\label{eq:emp.fix.main}
    \hat{\mu}_n=\hat{\mathcal{F}}_n(\hat{\theta}_n).
\end{align}
Unlike the setting of Theorem~\ref{thm:main}, the mapping $\hat{\Phi}$, and hence $\hat{\mathcal{F}}_n$, is now random, as it depends on the data through the estimated eigenvectors $\hat{V}_r$. Consequently, the asymptotic results established in Theorem~\ref{thm:main} do not directly apply in this setting.

In contrast, when the oracle eigenvector matrices $V_1,\ldots,V_k$ are known, we may define $\Phi_{\mathcal{W}}$ as the deterministic mapping from $\theta$ to
\begin{align*}
    \left\{\varphi\left({\textstyle\prod}_{r\in\bm{w}} {V}_r\Lambda_r'{V}_r^T\right):\bm{w}\in \mathcal{W}_l[k]\right\}, \quad l=1,\ldots,D.
\end{align*}  
When $\Lambda_r'=\Lambda_r$ for all $r$, then $\Phi_{\mathcal{W}}$ simply maps $\theta$ to $\{\varphi(\Sigma_{\bm{w}}):\bm{w}\in \mathcal{W}_l[k]\}$ for each $l$.
Letting $\mathcal{F}_n = \Psi_{\mathcal{W}} \circ \Phi_{\mathcal{W}}$, the corresponding method of moments estimator $\hat{\theta}_n$ is defined as the solution to
\begin{align}\label{eq:emp.fix.oracle.main}
    \hat{\mu}_n=\mathcal{F}_n(\hat{\theta}_n).
\end{align}
This formulation may be viewed as a generalization of the simultaneous diagonalizability assumption in Assumption~\ref{ass:param}. Indeed, Assumption~\ref{ass:param} corresponds to the special case where $V_1 = \cdots = V_k$ is known, in which case
\begin{align*}
    \varphi\left({\textstyle\prod}_{r\in\bm{w}} \Sigma_r\right)=\phi\left({\textstyle\prod}_{r\in\bm{w}} g_r\right)+o_n(1).
\end{align*}
In this oracle setting, consistency and asymptotic normality follow by arguments analogous to those used in the proof of Theorem~\ref{thm:main}, under appropriate design assumptions and construction of the matrix $B$. One concrete example is the class of nested models without an intercept. Details are deferred to Supplementary Appendix~\ref{app:oracle.clt}, and a simulated example under the full-sibling model is presented in Section~\ref{subsec:exp}.

\paragraph*{Accommodation of spikes}
The empirical-eigenvector construction above also allows a simple extension to spiked spectra. For a single-spike illustration, let $\hat{s}_r$ be an estimator of the leading spike in component $r$, and define
\[
    \bar{\Lambda}_r'(\theta)=\mathrm{diag}\bigl(\hat{s}_r,\ g_r(i/(p-1);\theta_r)\bigr).
\]
Replacing $\Lambda_r'(\theta)$ by $\bar{\Lambda}_r'(\theta)$ while retaining the same empirical eigenvectors $\hat{V}_r$, we use the corresponding moments
\begin{align*}
    \left\{\varphi\!\left({\textstyle\prod}_{r\in\bm{w}} \hat{V}_r\bar{\Lambda}_r'(\theta)\hat{V}_r^T\right):\bm{w}\in \mathcal{W}_l[k]\right\}.
\end{align*}
The framework is compatible with any spike estimation procedure, with a simple default being the largest MANOVA eigenvalue.  We apply this extension to a real dataset on rice gene expression in Section~\ref{sec:real.data}.

\subsection{Specialization to nested models}\label{sec:sequential.nested} 
A mixed effects model is nested if the design matrices satisfy
\begin{align}
&I_0\leq I_1\leq ...\leq I_k\leq n,\label{eq:nested.cond.1}\\
&\text{col}(X)\subset\text{col}(U_1)\subset \ldots \subset\text{col}(U_k),\label{eq:nested.cond.2}
\end{align}
where $\text{col}$ denotes the column space. To simplify discussion of the sequential estimation method, we
assume that there 
exist positive semi-definite matrices $B_1,\ldots,B_k$ such
that $B_rX=0$ for all 
$r$, and such that the following conditions hold (Section
\ref{sec:construct.B.nested} has a construction).
\begin{assum}\label{assum:nested.indep}
    \begin{enumerate}
        \item[(i)] $B_rU_r\neq0$, $B_rU_q=0$, for all $r=1,\ldots,k$, $q<r$.
        \item[(ii)] $B_rU_tU_t^TB_s=0$ for all $r=1,\ldots,k$, $s>r$, $t\geq s$.
    \end{enumerate}
\end{assum}

For $r=1,\ldots,k$, define quadratic form matrices
\begin{align*}
    S_{nr}=\frac{1}{p}Y^TB_rY.
\end{align*}

By Assumption~\ref{assum:nested.indep}(i), each $S_{nr}$ only depends
on the random effects from levels $r,\ldots,k$:
\begin{align*}
    S_{nr}=\frac{1}{p}\left(\sum_{t=r}^k\alpha_t^TU_t^T\right)B_r\left(\sum_{t=r}^kU_t\alpha_t\right).
\end{align*}
Equivalently, $S_{nr}$ is the quadratic-form matrix associated with the
reduced model
\begin{align*}
    Y^{(r)}=U_r\alpha_r+\ldots+U_k\alpha_k,\quad S_{nr}=\frac{1}{p}{Y^{(r)}}^TB_rY^{(r)}.
\end{align*}
Set $D_r=d_r+\ldots+d_k$. Applying Lemma \ref{lem:map.asym}(b)
  to the above model for the purpose of estimating $\theta_r$ based on
  $\hat{\mu}_{n}^{(r)}=(p^{-1}\Tr S_{nr},\ldots,p^{-1}\Tr
  S_{nr}^{d_r})$ yields the approximation
 $\hat{\mu}_{n}^{(r)} = \mathcal{F}_{n}^{(r)}(\theta_r; \theta_{r+1}, \dots, \theta_k)+o_{a.s.}(1)$ where the map $\mathcal{F}_n^{(r)}:\mathbb{R}^{D_r}\rightarrow\mathbb{R}^{d_r}$has components $\mathcal{F}_{n,l}^{(r)}=\Psi_l^{(r)}\circ\Phi_l^{(r)}$ for $l\in[d_r]$. 

Then the nested configuration enables a sequential estimation procedure: we first estimate $\theta_k$ from $\hat{\mu}_{n}^{(k)} = \mathcal{F}_{n}^{(k)}(\hat{\theta}_k)$, and then recursively estimate $\theta_r$ for $r = k-1, \dots, 1$ by solving the $d_r$
estimating equations
$$
\hat{\mu}_{n}^{(r)} = \mathcal{F}_{n}^{(r)}(\theta; \hat{\theta}_{r+1}, \dots, \hat{\theta}_k),
$$
for $\hat{\theta}_r$, where $\hat{\theta}_{r+1}, \dots, \hat{\theta}_k$ are the estimates obtained in the preceding steps. Finally, we assemble the sequential MoM estimator
\begin{align*}
    \hat{\theta}_n'=(\hat{\theta}_1,\ldots,\hat{\theta}_k).
\end{align*}

Analogous to the general setting, the empirical moments $\hat{\mu}_{n}^{(r)}$ satisfy a central limit theorem when centered at the deterministic equivalents $\deteqv{\mu}_{n}^{(r)}$ defined in Section \ref{subsec:deteqv}. Specifically, Theorem \ref{thm:clt} implies the existence of covariance matrices $P_{nr} \in \mathbb{R}^{d_r \times d_r}$ and bias vectors $R_{nr} \in \mathbb{R}^{d_r}$ such that
\begin{align*}
    {P_{nr}}^{-1/2}\left(p(\hat{\mu}_{n}^{(r)}-\deteqv{\mu}_{n}^{(r)}) - R_{nr}\right)\convd N(0,\Id _{d_r}).
\end{align*}
The following lemma shows that Assumption~\ref{assum:nested.indep}(ii) makes the 
moment blocks independent.
\begin{lem}\label{lem:nested.indep}
    Under Model MLMM and Assumption~\ref{assum:nested.indep}, the matrices
    $S_{n1},\ldots,S_{nk}$ are mutually independent. Consequently, the moment
    vectors $\hat\mu_n^{(1)},\ldots,\hat\mu_n^{(k)}$ are mutually independent.
\end{lem}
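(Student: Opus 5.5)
Since $B_r$ is positive semi-definite, I will write $B_r=\Gamma_r\Gamma_r^T$ (for instance, $\Gamma_r=B_r^{1/2}$), so that $S_{nr}=p^{-1}Z_r^TZ_r$ with $Z_r:=\Gamma_r^TY$. It then suffices to show that the matrices $Z_1,\ldots,Z_k$ are mutually independent. Each $S_{nr}$ is a measurable function of $Z_r$, and each $\hat\mu_n^{(r)}$ is a function of $S_{nr}$, so independence transfers first to the $S_{nr}$ and then to the moment vectors. By Assumption~\ref{assum:nested.indep}(i) and $B_rX=0$, we have $\Gamma_r^TX=0$ and $\Gamma_r^TU_q=0$ for $q<r$, because $\|\Gamma_r^TU_q v\|^2=v^TU_q^TB_rU_qv=0$, and similarly for $X$. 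Hence
\begin{equation*}
Z_r=\sum_{t=r}^k \Gamma_r^TU_t\alpha_t ,
\end{equation*}
which is a linear function of the independent Gaussian matrices $\alpha_r,\ldots,\alpha_k$.

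The key observation is that the stacked vector $(\mathrm{vec}(Z_1),\ldots,\mathrm{vec}(Z_k))$ is jointly Gaussian and centered, because it is a linear image of the independent Gaussian $\alpha_t$. For jointly Gaussian blocks, pairwise zero cross-covariance is equivalent to mutual independence. So I only need to show that $\mathrm{Cov}(\mathrm{vec}Z_r,\mathrm{vec}Z_s)=0$ for $r<s$.

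To compute this, I use $\mathrm{Cov}(\mathrm{vec}(A\alpha_t))=\Sigma_t\otimes AA^T$ (row-stacking conventions only change the order of the Kronecker factors) together with the independence of distinct $\alpha_t$. This gives
\begin{equation*}
\mathrm{Cov}(\mathrm{vec}Z_r,\mathrm{vec}Z_s)=\sum_{t\geq s}\Sigma_t\otimes \Gamma_r^TU_tU_t^T\Gamma_s .
\end{equation*}
Only $t\ge \max(r,s)=s$ appears, because $Z_s$ involves only $\alpha_t$ with $t\ge s$.

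The remaining step is to show $\Gamma_r^TU_tU_t^T\Gamma_s=0$ for $t\ge s>r$. Assumption~\ref{assum:nested.indep}(ii) states this with $B$'s in place of $\Gamma$'s, namely $\Gamma_r\,(\Gamma_r^TU_tU_t^T\Gamma_s)\,\Gamma_s^T=0$. Taking $\Gamma_r=B_r^{1/2}$ symmetric, I note that $B_r^{1/2}$ and $B_r$ have the same column space, so $B_r^{1/2}=B_rM_r$ for $M_r=(B_r^{1/2})^{+}$. Therefore
\begin{equation*}
B_r^{1/2}U_tU_t^TB_s^{1/2}=M_rB_rU_tU_t^TB_sM_s=0 .
\end{equation*}
This pseudo-inverse step is the only mildly delicate point, and it is where positive semi-definiteness of $B_r$ is used. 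With it, all cross-covariances vanish, so $Z_1,\ldots,Z_k$ are mutually independent, and hence so are $S_{n1},\ldots,S_{nk}$ and $\hat\mu_n^{(1)},\ldots,\hat\mu_n^{(k)}$.
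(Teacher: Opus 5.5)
Your proof is correct and follows essentially the same route as the paper. Both arguments factor $B_r$, use Assumption~\ref{assum:nested.indep}(i) together with $B_rX=0$ to make $S_{nr}$ a function of a Gaussian linear image of $\alpha_r,\ldots,\alpha_k$, and use (ii) to make the cross-covariances at levels $t\ge s$ vanish. The differences are cosmetic: the paper takes a full-column-rank factor $B_r=Q_rQ_r^T$, so (ii) becomes $Q_r^TU_tU_t^TQ_s=0$ simply by cancelling $Q_r$ and $Q_s$, in place of your square-root/pseudo-inverse step, and it checks independence level by level via Lemma~\ref{lem:multi.normal.indep} rather than summing the covariance over $t$; your explicit remark that joint Gaussianity upgrades pairwise zero covariance to mutual independence is left implicit in the paper.
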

\begin{proof}
    The proof is given in Section \ref{subsec:proof.nested.indep}. 
\end{proof}
Since the moment vectors are mutually independent for each $n$, the marginal
CLTs imply the joint CLT by the Cramér--Wold device. Let $\hat{\mu}_n = (\hat{\mu}_{n}^{(1)}, \dots, \hat{\mu}_{n}^{(k)})$ and $\deteqv{\mu}_n=(\deteqv{\mu}_{n}^{(1)},\ldots,\deteqv{\mu}_{n}^{(k)})$, then 
\begin{align}\label{eq:nested.clt.mu}
    {P_{n}}^{-1/2}\left(p(\hat{\mu}_{n}-\deteqv{\mu}_{n}) - R_{n}\right)\convd N(0,\Id _{D}),
\end{align} 
where
\begin{align*}
    R_n = \begin{bmatrix} R_{n1}^T &\ldots & R_{nk}^T \end{bmatrix}^T \in \mathbb{R}^D, \quad
    P_n = \text{diag}(P_{n1}, \dots, P_{nk}) \in \mathbb{R}^{D \times D}.
    \end{align*}
Notably, while the moment vectors are mutually independent, the resulting MoM estimators are generally not. Indeed, the estimators are obtained sequentially, with each $\hat{\theta}_r$ depending on the preceding estimates of $\hat{\theta}_{r+1},\ldots,\hat{\theta}_k$. The following lemma formalizes the asymptotic behavior of the MoM estimators.

\begin{lem}\label{lem:nested}
    \begin{enumerate}
        \item[(a)] Under Assumptions \ref{ass:minimum}, \ref{ass:param} and \ref{assum:nested.indep}(i), further assume that 
        \begin{enumerate}
        \item[(i)] $\Phi^{(1)},\ldots,\Phi^{(k)}$ are twice continuously differentiable,
        \item[(ii)]for $n$ sufficiently large, there exists $\rho>0$, such that for any $\theta'\in B(\theta,\rho)$, the absolute value of the determinant $$\big|\text{det}(J\mathcal{F}_n^{(r)}(\theta_r';\theta_{r+1}',\ldots,\theta_k'))\big|$$ is bounded below by a constant independent of $n$, for all $r=1,\ldots,k$, 
    \end{enumerate}
    then for $n$ sufficiently large, almost surely there exists a unique moment estimator $\hat{\theta}_n'$ such that $\hat{\theta}_n'\to\theta$.
    \item[(b)] If additionally Assumption \ref{assum:nested.indep}(ii) holds, then as $n\rightarrow \infty$, the estimator satisfies
    \begin{align*}
        (D_nP_nD_n^T)^{-1/2}\left[p\left(\hat{\theta}_n'-\theta\right)-Q_n\right]\convd N(0,\Id_D),
    \end{align*}
    where $D_n=(J\mathcal{F}_n(\theta))^{-1}$ 
    and $Q_n = D_n R_n + p(\mathcal{F}_n^{-1}(\deteqv{\mu}_n) - \theta)$. Here $P_n$, $R_n$ and $\deteqv{\mu}_n$ are defined in \eqref{eq:nested.clt.mu}. 
    \end{enumerate}
\end{lem}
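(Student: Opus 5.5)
The plan is to reduce both parts to the argument behind Theorem~\ref{thm:main}, exploiting the block-triangular structure of the stacked map. Define $\mathcal{F}_n:\Theta\to\mathbb{R}^D$ by stacking
\[
\mathcal{F}_n(\theta)=\bigl(\mathcal{F}_n^{(1)}(\theta_1;\theta_2,\ldots,\theta_k),\ldots,\mathcal{F}_n^{(k)}(\theta_k)\bigr).
\]
Since $\mathcal{F}_n^{(r)}$ depends only on $(\theta_r,\ldots,\theta_k)$, the Jacobian $J\mathcal{F}_n$ is block upper triangular. Its diagonal blocks are $\partial_{\theta_r}\mathcal{F}^{(r)}_n$, so
\[
\det J\mathcal{F}_n(\theta')=\prod_r \det \partial_{\theta_r}\mathcal{F}_n^{(r)}(\theta'_r;\theta'_{r+1},\ldots,\theta'_k),
\]
which is bounded below uniformly on $B(\theta,\rho)$ by (ii). (I read $J\mathcal{F}_n^{(r)}$ in (ii) as the $d_r\times d_r$ partial Jacobian in $\theta_r$.) The sequential estimator $\hat\theta_n'$ solves the $k$ block equations in turn, so it is exactly a solution of the joint system $\hat\mu_n=\mathcal{F}_n(\hat\theta_n')$. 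Conversely, any solution of the joint system in a small ball around $\theta$ is the sequential one.

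For part (a), I would prove consistency by backward induction on $r=k,\ldots,1$, using the same inverse function argument as in Theorem~\ref{thm:main}(1).
\begin{itemize}
\item \textbf{Inputs.} Lemma~\ref{lem:map.asym}(b), applied to the reduced model $Y^{(r)}$ with $B_r$, gives $\hat\mu_n^{(r)}=\mathcal{F}_n^{(r)}(\theta_r;\theta_{r+1},\ldots,\theta_k)+o_{a.s.}(1)$. By (i) and the polynomial form of $\Psi^{(r)}$, the maps $\mathcal{F}_n^{(r)}$ are $C^2$ with second derivatives bounded uniformly in $n$ on $B(\theta,\rho)$; this uses Assumption~\ref{ass:minimum}, since the coefficients $\varphi(\prod F)$ are bounded.
\item \textbf{Quantitative inverse function theorem.} The uniform lower bound on the determinant, together with a uniform bound on the Jacobian entries, gives a uniform bound on the inverse Jacobian. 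Hence there is a fixed radius $\delta$ such that $\mathcal{F}_n^{(r)}(\cdot;\eta)$ is a diffeomorphism from a ball around $\theta_r$ onto a set containing a ball of radius $c\delta$ around $\mathcal{F}_n^{(r)}(\theta_r;\eta)$. This holds uniformly in $n$ and in $\eta$ near $(\theta_{r+1},\ldots,\theta_k)$.
\item \textbf{Induction step.} If $\hat\theta_{r+1},\ldots,\hat\theta_k\to\theta_{r+1},\ldots,\theta_k$ a.s., continuity gives
\[
\mathcal{F}^{(r)}_n(\theta_r;\hat\theta_{>r})-\mathcal{F}^{(r)}_n(\theta_r;\theta_{>r})\to0,
\]
so $\hat\mu_n^{(r)}$ eventually lies in the image ball. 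This yields a unique local solution $\hat\theta_r$, and $\hat\theta_r\to\theta_r$ by Lipschitz continuity of the local inverse.
\end{itemize}
Uniqueness is understood locally, within the ball, as in Theorem~\ref{thm:main}.

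For part (b), independence from Lemma~\ref{lem:nested.indep} (which uses Assumption~\ref{assum:nested.indep}(ii)) yields the joint CLT \eqref{eq:nested.clt.mu}. Then:
\begin{itemize}
\item Write $\tilde\theta_n=\mathcal{F}_n^{-1}(\tilde\mu_n)$, which is well defined for large $n$ by the local inversion above, since $\tilde\mu_n-\mathcal{F}_n(\theta)\to0$.
\item Taylor expand the joint inverse:
\[
p(\hat\theta_n'-\tilde\theta_n)=D_n\,p(\hat\mu_n-\tilde\mu_n)+O\bigl(p\,|\hat\mu_n-\tilde\mu_n|^2\bigr)+\bigl(J\mathcal{F}_n^{-1}(\tilde\mu_n)-D_n\bigr)\,p(\hat\mu_n-\tilde\mu_n).
\]
\item Because $p(\hat\mu_n-\tilde\mu_n)=O_P(1)$ by the CLT, the quadratic remainder is $O_P(1/p)$. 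The Jacobian difference is $o(1)$ by $C^2$ smoothness and $\tilde\theta_n\to\theta$.
\item Adding $p(\tilde\theta_n-\theta)$ gives
\[
p(\hat\theta_n'-\theta)-Q_n=D_n\bigl[p(\hat\mu_n-\tilde\mu_n)-R_n\bigr]+o_P(1).
\]
Normalizing by $(D_nP_nD_n^T)^{-1/2}$ and applying the CLT for $\hat\mu_n$ concludes.
\end{itemize}

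The main obstacle is the normalization step, which requires $D_nP_nD_n^T$ to have eigenvalues bounded away from zero, so that the $o_P(1)$ remainder survives multiplication by $(D_nP_nD_n^T)^{-1/2}$. I would take this from the same nondegeneracy of $P_n$ used in Theorem~\ref{thm:clt}, combined with $\|J\mathcal{F}_n\|$ bounded, which bounds the singular values of $D_n$ below. A secondary point is uniformity of the inverse function radius in $n$; I would handle it with an explicit Newton–Kantorovich bound using the uniform $C^2$ bounds.
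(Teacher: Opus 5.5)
Your proposal is correct and follows essentially the same route as the paper. Both use the block-triangular factorization $\det J\mathcal{F}_n=\prod_r\det J\mathcal{F}_n^{(r)}$, the quantitative inverse-function argument of Theorem~\ref{thm:main}(1) for existence and consistency, and the joint CLT \eqref{eq:nested.clt.mu} with the Delta-method variant (Lemma~\ref{lem:delta.var}) for part (b). The only difference is that you run the inverse-function step level by level by backward induction, whereas the paper applies it once to the stacked map; your observation that the sequential and joint local solutions coincide makes the two equivalent.
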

\begin{proof}
    The proof is given in Section \ref{subsec:proof.nested}. 
\end{proof}

To facilitate the application of Lemma \ref{lem:nested}, we next verify that conditions (i) and (ii) are satisfied when $g_1,\ldots,g_k$ are modeled by one of the decay models introduced in Section \ref{sec:main}.

\begin{lem}\label{lem:param.bdd.jacobian}
Under Assumptions \ref{ass:minimum} and \ref{ass:param}, if 
\begin{enumerate} 
    \item there exists $c>0$ such that $\varphi({F}_r)\geq c$, for all $r=1,\ldots,k$ and sufficiently large $n$,
    \item the eigenvalue decay of $\Sigma_1,\ldots,\Sigma_k$ asymptotically follow any of the three parametrization models introduced in Section \ref{sec:main},
\end{enumerate}
then there exists a parameter space $\Theta$ and a constant $c' > 0$ such that for all $\theta \in \Theta$, $$|\text{det}(J{\mathcal{F}}_n^{(r)})(\theta_r;\theta_{r+1},\ldots,\theta_k)|>c'$$ for each $r = 1, \dots, k$ and sufficiently large $n$.

\end{lem}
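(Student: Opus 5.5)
The plan is to reduce the Jacobian of $\mathcal{F}_n^{(r)}$ with respect to $\theta_r$ to a triangular structure in the "pure" moments $\phi(g_r^j)$, $j=1,\dots,d_r$. Then I will use the parametric form of $g_r$ to bound the Jacobian of the embedding $\theta_r\mapsto(\phi(g_r),\dots,\phi(g_r^{d_r}))$ away from zero.

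\textbf{Step 1 (triangular reduction).} Fix $r$ and consider the reduced model $Y^{(r)}$, with $\theta_{r+1},\dots,\theta_k$ treated as fixed. By Lemma~\ref{lem:map.asym}(b), $\mathcal{F}^{(r)}_{n,l}$ is a polynomial in the quantities $\phi(\bm g^{\bm l})$ with $|\bm l|\le l$. Consider the terms in which $g_r$ appears with total power exactly $l$. Expanding $\varphi(\prod F_{r_\nu})$ via $B_rU_q=0$ for $q<r$, these come only from $r_1=\dots=r_l=r$ together with the partition $\pi$ whose Kreweras complement has a single even block. That partition is $\pi=\{\{1,2\},\{3,4\},\dots\}$, and its complement has odd blocks all singletons. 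The resulting term is $\varphi(F_r)^l\,\phi(g_r^l)$, with $F_r=B_rU_rU_r^T$.

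Every other term involves $\phi$ of monomials in which $g_r$ appears with degree strictly less than $l$, possibly multiplied by powers of $g_t$ for $t>r$. Hence
\[
\mathcal{F}^{(r)}_{n,l}=\varphi(F_r)^l\,\phi(g_r^l)+P_l\bigl(\{\phi(g_r^{j}\,\bm g_{>r}^{\bm m}) : j<l\}\bigr).
\]

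\textbf{Step 2 (Jacobian structure).} The cross moments $\phi(g_r^j g_t^{m})$ do not factor, so the Jacobian is not literally triangular in the pure moments. I will handle this by differentiating directly. One has $\partial_{\theta_r}\phi(g_r^j h)=\phi(j g_r^{j-1}(\partial_{\theta_r} g_r) h)$, and I will write
\[
J\mathcal{F}^{(r)}_n=\mathrm{diag}\bigl(\varphi(F_r)^l\bigr)\,\bigl[\,l\,\phi\bigl(g_r^{l-1}\partial_{\theta_{r,i}}g_r\bigr)\,\bigr]_{l,i}+E_n .
\]
Here $E_n$ collects the lower-order contributions.

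The main obstacle is that $E_n$ is not small. I would therefore choose $\Theta$ as a compact neighborhood regime, with $\tau$ small relative to $\varphi(F_r)$ or with the other components bounded, so that $E_n$ is dominated. Alternatively, I would restrict $\Theta$ to a compact set on which the determinant is a continuous nonvanishing function of $(\theta,\ \varphi(F_\cdot),\ \varphi(F_\cdot F_\cdot),\dots)$. The latter approach requires that these design traces lie in a compact set bounded away from degeneracy, which follows from $\|B\|,\|U\|<L$ and hypothesis~1, $\varphi(F_r)\ge c$. Since the lemma only asserts that \emph{some} $\Theta$ exists, I will take the second route. I will show the determinant is nonzero at a reference configuration by the triangular dominant term. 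By continuity and compactness of both the parameter set and the design-trace set, it then stays above some $c'>0$ for all large $n$.

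\textbf{Step 3 (model-specific determinants).} It remains to show that $\det[\,l\,\phi(g_r^{l-1}\partial_i g_r)\,]$ is nonzero, which is the Jacobian of $\theta_r\mapsto(\phi(g_r^l))_{l\le d_r}$, times $\prod_l\varphi(F_r)^l\ge c^{d_r(d_r+1)/2}$.

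For Model 3, direct calculation gives $\phi(g^l)=\tau^l(1-e^{-l\rho})/(l\rho)$. The $2\times 2$ determinant reduces to $\tau$ times a function of $\rho$ that I verify is nonzero, for example by strict monotonicity of $\rho\mapsto\log\bigl[(1-e^{-2\rho})\rho/(2(1-e^{-\rho})^2)\bigr]$.

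For Model 2, $\phi(g^l)=\tau^l\min(1,\rho^{-1})$-type expressions, namely $\tau^l(1-(1-\rho)_+^{ls+1})/(\rho(ls+1))$. This is analogous, and I would restrict to $\rho\ne 1$ or treat the two branches separately.

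For Models 1 and 1N, $\phi(g^l)=\sum_j\rho_j\tau_j^l$. The Jacobian is the classical confluent Vandermonde determinant in the distinct $\tau_j$ (cf.\ \cite{baiyaochen2010}), which is nonzero precisely when the $\tau_j$ are distinct and the $\rho_j>0$.

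Finally, I take $\Theta$ compact inside these open sets, with $\tau_j$ separated and $\rho_j$ bounded below. This gives a uniform lower bound, and continuity absorbs the $o(1)$ Riemann-approximation errors from \eqref{eq:riemman-app}.
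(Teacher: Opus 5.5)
Your Step~1 bookkeeping is correct. The only term carrying $g_r$ to full degree $l$ is $\varphi(F_r)^l\phi(g_r^l)$, coming from $\pi=12|34|\cdots$. At the top level $r=k$ this gives an exact factorization $\det J\mathcal{F}_n^{(k)}=\prod_{l\le d_k}\varphi(F_k)^l\cdot\det Jx$ with $x=(\phi(g_k^l))_{l\le d_k}$, so Step~3 settles that level.

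The gap is in Step~2 for $r<k$, and the route you commit to is not a valid argument. A continuous function that is nonzero at one reference configuration is nonzero only on some neighbourhood of it. Compactness turns ``nonzero at every point of a compact set'' into a uniform bound, but it cannot create nonvanishing away from the reference point. The design traces $\varphi(F_t)$, $\varphi(F_rF_t)$, $\varphi(F_r^2),\dots$ are fixed by the design and can lie anywhere in a bounded set, so you cannot shrink them to a neighbourhood of a reference design. You would need the determinant to be nonzero for every admissible trace vector, which is the whole content of the lemma. Moreover, the only configuration where your triangular term is exact is $g_t\equiv0$ for all $t>r$, which lies outside the parameter spaces of all three models. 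So ``restrict $\Theta$ to a compact set on which the determinant is nonvanishing'' is circular as written.

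Your alternative criterion, ``$\tau$ small relative to $\varphi(F_r)$'', is also not the right one. $\mathcal{F}^{(r)}_{n,l}$ is jointly homogeneous of degree $l$ in $(g_r,\dots,g_k)$, so only the size of $g_t$, $t>r$, relative to $g_r$ matters. Shrinking $\tau_r$ alone makes $E_n$ relatively larger.

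What is missing is quantitative control of the cross terms. One option is to actually carry out the smallness route:
\begin{itemize}
\item Every term of $\mathcal{F}^{(r)}_{n,l}$ with some $r_i=t>r$ contains a factor $\phi(\cdots g_t\cdots)$, since each index lies in exactly one even block.
\item All design traces are bounded by powers of $L^3$, so the cross part of $J\mathcal{F}^{(r)}_n$ is $O(\max_{t>r}\sup_x g_t(x))$, uniformly in $n$ and in $\theta_r$ in a compact $\Theta_r$.
\item On the slice $g_{>r}\equiv0$ the determinant is at least $c^{d_r(d_r+1)/2}\min_{\Theta_r}|\det Jx|$ for every admissible design.
\item Choosing the level scales hierarchically ($\tau_k\ll\cdots\ll\tau_1$) then yields some valid $\Theta$.
\end{itemize}

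The paper instead avoids smallness for the two-parameter models with a no-cancellation argument. It applies Cauchy--Binet to $J\Psi^{(r)}J\Phi^{(r)}$, with $\Phi^{(r)}$ including the cross moments $\phi(g_r\hat g_t)$. Because the first row of $J\Psi^{(r)}$ is $(\varphi(F_r),0,\dots,0)$, the determinant is a positive combination, with weights built from $\varphi(F_r)$ and $\varphi(F_t)$, of $2\times2$ minors of $J\Phi^{(r)}$. Each cross minor has the same sign as the pure one: it equals $\int h\,\hat g_t$ for a kernel $h$ that changes sign once and integrates to zero, and $\hat g_t$ is nonincreasing. For the step model, the paper isolates the pure-moment minor (your confluent Vandermonde) and restricts $\Theta$ to a large leading eigenvalue so that this minor dominates.

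Your Step~3 computations can feed either fix. However, the $\rho<1$ branch of Model~2 still needs an actual check, or should be excluded from $\Theta$.
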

\begin{proof}
    The proof is given in Supplementary Appendix \ref{app:proof.nested}.
\end{proof}

\subsubsection{Constructing $B_1,\ldots,B_k$}\label{sec:construct.B.nested}
Let $\Pi_U$ denote the orthogonal projection $U(U^TU)^{-1}U^T$, and
 for each $r \in [0,k]$, with the convention $U_0 = X$, let
$\mathcal{U}_r = \operatorname{col}(U_r)$. In the nested model
(\ref{eq:nested.cond.2}), write $\mathcal{U}_r = \mathcal{U}_{r-1}
\oplus \mathcal{V}_{r-1}$.
\paragraph*{Balanced designs}
We call the nested design balanced if $U_r^TU_r\propto\Id$ for each
$r$.
For $r \in [1,k]$, let $$B_r = \Pi_{U_r} - \Pi_{U_{r-1}}$$ be orthogonal
projection onto $\mathcal{V}_{r-1}$.
Let  
The nested structure \eqref{eq:nested.cond.2} implies $B_r X = 0$ for all $r$, and $B_r U_s = 0$ for all $s < r$. Since $(U_r^TU_r)^{-1}\propto \Id$ and thus $U_rU_r^T\propto U_r(U_r^TU_r)^{-1}U_r^T$, we trivially have that $B_rU_tU_t^TB_s=0$ for all $r=1,\ldots,k$, $s>r$, $t\geq s$.

\paragraph*{Unbalanced designs} The unbalanced case is generally more delicate. A simple exception is the
two-level full-sibling design, where $U_2=\Id$. Thus the canonical increment projections $B_1, B_2$ constructed in the balanced case still satisfies Assumption \ref{assum:nested.indep}(ii), with $B_1U_2U_2^TB_2=B_1B_2=0$. 

In the remainder of this section, we focus on unbalanced designs with
$k\geq 3$. We construct the matrices $B_1,\ldots,B_k$ recursively from
the top level down, with $B_r=Q_rQ_r^T$ for some matrix $Q_r$
with orthonormal columns. 
For each $r$, let $\mathcal{U}_r = \operatorname{col}(U_r)$ and
$E_r$ be a matrix whose columns form an
orthonormal basis of the increment space
$\mathcal{V}_r$.
We set $Q_k=E_k$ and 
$$
B_k=Q_kQ_k^T= \Pi_{U_k} - \Pi_{U_{k-1}}.
$$
Having constructed
$Q_{r+1},\ldots,Q_k$, form
\[
    M_r=
    \begin{pmatrix}
        Q_s^TU_tU_t^TE_r
    \end{pmatrix}_{s>r,\ t\ge s},
\]
where the blocks are stacked vertically over all pairs $(s,t)$ with $s>r$ and
$t\ge s$. Let $N_r$ have orthonormal columns spanning $\ker(M_r)$, and set
\[
    Q_r=E_rN_r,\quad B_r=Q_rQ_r^T.
\]
Then $Q_r$ spans the largest subspace of
$\mathcal{V}_{r-1}$, 
conditional on the higher-level choices, satisfying
\[
    Q_s^TU_tU_t^TQ_r=0
    \qquad (s>r,\ t\ge s).
\]
If the null space of $M_r$ is trivial, then the construction yields
$Q_r=0$,
meaning that no nonzero level-$r$ direction satisfies the imposed independence
constraints given the higher-level choices.

\section{Experiments}\label{subsec:exp} 
We evaluate the finite-sample performance of the proposed MoM method through simulation studies and a real-data analysis under the full-sibling design \eqref{eq:full.sib.ij}. Additional experiments are presented in the supplementary material. In Supplementary Appendix \ref{app:half.sib}, we consider a full-sib half-sib design involving seven unknown parameters. 

Consider a design with $I$ families of sizes $J = (J_1, \dots, J_I)^T \in \mathbb{R}^I$, totaling $n = \sum_{i=1}^I J_i$ individuals. Let $Y\in\mathbb{R}^{n\times p}$ denote the matrix obtained by stacking the trait measurements $y_{ij}$ by row. The model can then be compactly expressed as 
\begin{align}
    Y=X\beta+U_1\alpha_1+U_2\alpha_2,\label{eq:fullsib.exp1}
\end{align}
where $X\in\mathbb{R}^{n\times I_0}$ is the fixed effects design matrix, $\beta\in\mathbb{R}^{I_0\times p}$ is the fixed effect, and $U_2=\Id_n$ is the membership matrix that assigns each individual to the corresponding individual effect. {The block-diagonal matrix $U_1=\text{diag}(\mathbbm{1}_{J_1},\ldots,\mathbbm{1}_{J_I})\in\mathbb{R}^{n\times I}$  assigns
each sibling to the corresponding family, where each column $i$ specifies the $J_i$ siblings belonging to family $i$.} The matrices $\alpha_1 \in \mathbb{R}^{I \times p}$ and $\alpha_2 \in \mathbb{R}^{n \times p}$ contain the stacked family-level and individual-level random effects, respectively. Assume that $X$ has full column rank. When $\col(X)\subset\col(U_1)$, the model satisfies the nested structure defined by conditions \eqref{eq:nested.cond.1} and \eqref{eq:nested.cond.2}. 

Following the construction proposed in Section \ref{sec:sequential.nested}, we define the projection matrices
\begin{align*}
    B_1=\begin{cases}
        U_1(U_1^TU_1)^{-1}U_1^T,&\text{if }X=0\\
        U_1(U_1^TU_1)^{-1}U_1^T-X(X^TX)^{-1}X^T,&\text{if }X\neq0
    \end{cases},~ B_2=\Id_n-U_1(U_1^TU_1)^{-1}U_1^T.
\end{align*}
{Throughout the simulation study, we set $X=0$, which simplifies $B_1$ to $B_1=U_1(U_1^TU_1)^{-1}U_1^T$, corresponding to a model without fixed effects. This choice permits theoretical characterization of asymptotic normality under model misspecification in Scenarios 3 and 5. In the real data application of Section~\ref{sec:real.data}, we take $X=\mathbbm{1}_n$ to account for the population mean.} The corresponding sum-of-squares matrices are
\begin{align}
    S_{n1}&=\frac{1}{p}Y^TB_1Y=\begin{cases}
        \frac{1}{p}\sum_{i=1}^IJ_i\Bar{Y}_i\Bar{Y}_i^T,&~\text{if }X=0\label{eq:Sn1}\\
        \frac{1}{p}\sum_{i=1}^IJ_i(\Bar{Y}_i-\Bar{Y})(\Bar{Y}_i-\Bar{Y})^T,&~\text{if }X=\mathbbm{1}_n
    \end{cases}\\
    S_{n2}&=\frac{1}{p}Y^TB_2Y= \frac{1}{p}\sum_{i=1}^I\sum_{j=1}^{J_i}(y_{ij}-\Bar{Y}_i)(y_{ij}-\Bar{Y}_i)^T,\nonumber
\end{align} 
where $\Bar{Y}_i$ denotes the mean trait vector within the $i$-th family, and $\Bar{Y}$ denotes the mean trait vector across all samples. 
The standard MANOVA estimators for $\Sigma_1$ and $\Sigma_2$ are given by \cite[Chapter 3.6]{searle2009variance}
\begin{align}
    \hat{\Sigma}_1&=\frac{p}{L}\left(S_{n1}-\frac{I-I_0}{n-I}S_{n2}\right),\quad \hat{\Sigma}_2=\frac{p}{n-I}S_{n2},\label{eq:manova.fullsib}
\end{align}
which satisfy $\E[\hat\Sigma_1]=\Sigma_1$ and $\E[\hat\Sigma_2]=\Sigma_2$, where
\begin{align*}
    L=\begin{cases}
        n &\text{if }X=0,\\
        n-\Tr (U_1^TX(X^TX)^{-1}X^TU_1)=n-\frac{1}{n}\sum_{i=1}^IJ_i^2 &\text{if }X=\mathbbm{1}_n.
    \end{cases}
\end{align*}

\subsection{Simulation study under a simple three-parameter model}\label{subsec:sim.3param}
Assume that $\Sigma_1$ and $\Sigma_2$ take the following form:
    \begin{align*}
	\Sigma_1 = V_1\left[ \begin{array}{cc}
		\tau_1 \Id_{\lceil p \rho \rceil } & 0 \\
		0 &0
	\end{array} \right]V_1^T, \quad \Sigma_2=V_2\left[ \begin{array}{cc}
		\tau_2 \Id_{\lceil p {\rho_0} \rceil } & 0 \\
		0 &0
	\end{array} \right]V_2^T,
    \end{align*}
where $\tau_1,\tau_2>0$ and $\rho\in(0,1)$ are unknown parameters, and $\rho_0\in(0,1)$ is a known constant. 
{This parametric structure allows us to illustrate the application of Lemma~\ref{lem:map.asym} for $l=1,2,3$, and yields closed-form MoM estimators when estimating $\theta_2=(\tau_2)$ and $\theta_1=(\tau_1,\rho)$ sequentially following the procedure outlined in Section~\ref{sec:sequential.nested}.} 
For convenience of illustration, we further assume that $\rho<\rho_0$. If $\Sigma_1$ and $\Sigma_2$ are simultaneously diagonalizable with $V_1=V_2$, then Assumption~\ref{ass:param} holds with
\begin{align*}
	g_1(x)=\tau_1\mathbbm{1}[0\leq x\leq\rho],\quad g_2(x)=\tau_2\mathbbm{1}[0\leq x\leq {\rho_0}].
\end{align*}

We consider five scenarios that differ in eigenvector structure and estimation strategy. When $\Sigma_1$ and $\Sigma_2$ are simultaneously diagonalizable, we may assume without loss of generality that $V_1=V_2=\Id_p$. Under this setting, we compare joint estimation of $\theta=(\tau_1,\rho,\tau_2)$ as in Section~\ref{sec:main.results} with sequential estimation as in Section~\ref{sec:sequential.nested}, in which $\theta_2$ is estimated first, followed by $\theta_1$ conditional on $\hat{\theta}_2$.

We also examine the robustness to misspecification by considering settings where simultaneous diagonalizability fails. {Specifically, we take $V_2=\Id_p$ and $V_1=R(\omega)$, where $R(\omega)$ is a structured orthogonal matrix that applies rotations of angle $\omega$ to each coordinate pair $(i,p+1-i)$. That is, the entries in positions $(i,i)$, $(i,p+1-i)$, $(p+1-i,i)$, $(p+1-i,p+1-i)$ form the $2\times 2$ rotation block 
\begin{align*}
    \left[ \begin{array}{cc}
        \cos(\omega) & -\sin(\omega) \\
        \sin(\omega) & \cos(\omega)
    \end{array} \right]
\end{align*}
When $\omega=0$, $V_1=\Id_p$, and $\Sigma_1$ reduces to a diagonal matrix with eigenvalues in descending order. When $\omega=\pi/2$, the eigenvalues are arranged in ascending order. This is the most adversarial scenario, where $\Sigma_1$ and $\Sigma_2$ are both diagonal with eigenvalues sorted in opposite order.}
We then compare joint estimation under the incorrect simultaneous diagonalizability assumption with the empirical-fix procedure of Section~\ref{subsec:empirical.fix}, which estimates the eigenvector matrices via MANOVA prior to parameter estimation. {As a theoretical benchmark, we further include an oracle setting in which the true eigenvector matrices $V_1$ and $V_2$ are assumed known.} The five scenarios are summarized below: 
\begin{enumerate}[leftmargin=*, align=left, labelsep=0.6em]
    \item[{Scenario 1.}] $V_1=V_2=\Id_p$; sequential estimation of $\theta_2=(\tau_2)$ and $\theta_1=(\tau_1,\rho)$.
    \item[{Scenario 2.}] $V_1=V_2=\Id_p$; joint estimation of $\theta=(\tau_1,\rho,\tau_2)$.
    \item[{Scenario 3.}] $V_1=R(\omega)$, $V_2=\Id_p$; joint estimation without simultaneous diagonalizability. 
    \item[{Scenario 4.}] $V_1=R(\omega)$, $V_2=\Id_p$; empirical-fix procedure.
    \item[{Scenario 5.}] {$V_1=R(\omega)$, $V_2=\Id_p$; oracle-fix procedure.}
\end{enumerate}

\subsubsection{Construction of MoM estimators under Scenario 1}
We exploit the nested structure to construct a sequential estimation procedure in which $\theta_2$ is estimated first, followed by estimation of $\theta_1$ conditional on the plug-in estimate $\hat{\theta}_2$. This sequential approach yields closed-form MoM estimators under the full sibling model. 
We first construct the method of moments estimator $\hat{\theta}_2$. The matrix $S_{n2}$ can be alternatively interpreted as a sum of squares matrix constructed from the linear model
\begin{align*}
    Y^{(2)}=U_2\alpha_2,\quad S_{n2}=\frac{1}{p}{Y^{(2)}}^TB_2Y^{(2)}.
\end{align*}
Let $\hat{\mu}_{n}^{(2)}=\frac{1}{p}\Tr S_{n2}$ denote the first empirical moment of $S_{n2}$. From Example \ref{eg:kre}, we have that $\text{NC}_2(2)$ contains a single non-crossing pair partition $\pi=\{\{1,2\}\}$ with Kreweras complement $K(\pi)=\{\{1\},\{2\}\}$, so that $K(\pi)[1]=\{1\}$, $K(\pi)[2]=\{2\}$. Applying Lemma \ref{lem:map.asym}, we obtain the asymptotic expansion
\begin{align*}
    \hat{\mu}_{n}^{(2)} =_{a.s.} \varphi(B_2U_2U_2^T)\phi(g_2),
\end{align*}
where $a=_{a.s.}b\Leftrightarrow a=b+o_{a.s.}(1)$. Above, the second factor is given by
\begin{align*}
    \phi(g_2)=\int_0^{\rho_0}\tau_2dx=\rho_0\tau_2.
\end{align*}
Define the mapping ${\Phi}^{(2)}(x)=\rho_0x$, which maps $\tau_2$ to $\phi(g_2)$, and set ${\Psi}_n^{(2)}(x)=\varphi(B_2U_2U_2^T)x$. Let ${\mathcal{F}}_n^{(2)}={\Psi}_n^{(2)}\circ{\Phi}^{(2)}$. The method of moments estimator $\hat{\theta}_2=\hat{\tau}_2$ is then the solution to the equation
\begin{align*}
   \hat{\mu}_{n}^{(2)}={\mathcal{F}}_n^{(2)}(\hat{\theta}_2)
   \Leftrightarrow \hat{\mu}_{n}^{(2)} = \varphi(B_2U_2U_2^T)\rho_0\hat{\theta}_2.
\end{align*}
Since $B_2$ is a projection matrix with $\rank(B_2) = n - I$, we have $\varphi(B_2 U_2 U_2^T) = (n - I)/p$. This yields the closed-form estimator
\begin{align}\label{eq:hat.tau.2.main}
   \hat{\tau}_2= \hat{\theta}_2 = \frac{1}{\rho_0}\frac{p}{n-I}\hat{\mu}_{n}^{(2)}=\frac{1}{(n-I)\rho_0}\Tr S_{n2}.
\end{align}

Next, we derive the method-of-moments estimator for $\theta_1$. 
Consider the empirical moments $\hat{\mu}_{n}^{(1)}=\left(\frac{1}{p}\Tr S_{n1},\frac{1}{p}\Tr S_{n1}^2\right)$. 
Let
\begin{align*}
    F_1&=B_1U_1U_1^T,\quad F_2=B_1U_2U_2^T.
\end{align*}
By the enumeration of $\text{NC}_2(2)$ and $\text{NC}_2(4)$ with the corresponding Kreweras complements listed in Example \ref{eg:kre}, we
apply Lemma \ref{lem:map.asym} and obtain the asymptotic expansions
\begin{align}
    \begin{cases}
    \hat{\mu}_{n1}^{(1)}=_{a.s}&\sum_{r_1=1}^2\varphi(F_{r_1})\phi(g_{r_1})\\
    \hat{\mu}_{n2}^{(1)}=_{a.s}&\sum_{r_1=1}^2\sum_{r_2=1}^2\left[\varphi(F_{r_1})\varphi(F_{r_2})\phi(g_{r_1}g_{r_2})+\varphi(F_{r_1}F_{r_2})\phi(g_{r_1})\phi(g_{r_2})\right].
    \end{cases}\label{eq:hat.mu.12.main}
\end{align}
For each $\theta_1=(\tau_1,\rho)$, let ${\Phi}^{(1)}:\mathbb{R}^2\rightarrow\mathbb{R}^3$ be
\begin{align*}
    {\Phi}^{(1)}(\theta_1)=\left(\rho\tau_1,\hat{\tau}_2\rho\tau_1,\rho\tau_1^2\right),
\end{align*}
which maps $\theta_1$ to $\{\phi(g_1),\phi(\hat{g}_2g_1),\phi(g_1^2)\}$. For each $x=(x_1,x_2,x_3)\in\mathbb{R}^3$, define ${\Psi}_n^{(1)}=({\Psi}_{n1}^{(1)},{\Psi}_{n2}^{(1)})$, where 
\begin{align*}
    {\Psi}_{n1}^{(1)}(x)=&\varphi(F_{1})x_1+\varphi(F_{2})\phi(\hat{g}_{2}),\\
    {\Psi}_{n2}^{(1)}(x)=&\varphi(F_{1}^2)x_1^2+2\varphi(F_{1}F_{2})\phi(\hat{g}_{2})x_1+2\varphi(F_{1})\varphi(F_{2})x_2+\varphi(F_{1})^2x_3\\
    &+\varphi(F_{2})^2\phi(\hat{g}_{2}^2)+\varphi(F_{2}^2)\phi(\hat{g}_{2})^2.\nonumber
\end{align*}
Define ${\mathcal{F}}_n^{(1)}={\Psi}_n^{(1)}\circ{\Phi}^{(1)}$, then with $\hat{\theta}_2$ estimated, we solve for the method of moments estimators $\hat{\theta}_1=(\hat{\tau}_1,\hat{\rho})$ from the system of equations
\begin{align*}
    \hat{\mu}_{n}^{(1)}={\mathcal{F}}_n^{(1)}(\hat{\theta}_1).
\end{align*}
Consequently, together with \eqref{eq:hat.tau.2.main}, we have the closed form solutions 
\begin{align*}
    \hat{\tau}_2 &= \frac{1}{\rho_0}\frac{p}{n-I}\hat{\mu}_{n}^{(2)},\quad \hat{\rho}=\frac{\hat{\mu}_{n1}^{(1)}-\varphi(F_{2})\rho_0\hat{\tau}_2}{\varphi(F_{1})\hat{\tau}_1},\\
        \hat{\tau}_1&=\frac{\hat{\mu}_{n2}^{(1)}-\varphi(F_{2})^2\rho_0\hat{\tau}_2^2-\varphi(F_{2}^2)\rho_0^2\hat{\tau}_2^2}{\varphi(F_{1})(\hat{\mu}_{n1}^{(1)}-\varphi(F_{2})\rho_0\hat{\tau}_2)}-\frac{\varphi(F_{1}^2)\left(\hat{\mu}_{n1}^{(1)}-\varphi(F_{2})\rho_0\hat{\tau}_2\right)}{\varphi(F_{1})^3}-2\hat{\tau}_2\frac{\rho_0+\varphi(F_{2})}{\varphi(F_{1})}.
\end{align*}

\subsubsection{Construction of MoM estimators under Scenarios 2 and 3}
In contrast to the sequential approach of Scenario 1, we now estimate the full parameter vector $\theta=(\tau_1,\rho,\tau_2)$ jointly using the between-group sum-of-squares matrix $S_{n1}$ defined in \eqref{eq:Sn1}. 
The set of non-crossing partitions $\text{NC}_2(6)$ contains five elements with corresponding Kreweras complements:
\begin{enumerate}
    \item $\pi_1=12|34|56$, $K(\pi_1)=1|3|5|246$,
    \item $\pi_2=12|36|45$, $K(\pi_2)=1|35|26|4$,
    \item $\pi_3=14|23|56$, $K(\pi_3)=13|2|5|46$,
    \item $\pi_4=16|23|45$, $K(\pi_4)=135|2|4|6$,
    \item $\pi_5=16|25|34$, $K(\pi_5)=15|3|24|6$.
\end{enumerate}
Let $\hat{\mu}_n=(p^{-1}\Tr S_{n1},\,p^{-1}\Tr S_{n1}^2,\,p^{-1}\Tr S_{n1}^3)$ denote the first three empirical moments of $S_{n1}$.
In addition to the asymptotic expansions derived in (\ref{eq:hat.mu.12.main}) for $\hat{\mu}_{n1}^{(1)}$ and $\hat{\mu}_{n2}^{(1)}$, we
apply Lemma \ref{lem:map.asym} to the third moment to obtain the asymptotic expansions
\begin{align*}
    \begin{cases}
    \hat{\mu}_{n1}=_{a.s}&\sum_{r_1=1}^2\varphi(F_{r_1})\phi(g_{r_1})\\
    \hat{\mu}_{n2}=_{a.s}&\sum_{r_1=1}^2\sum_{r_2=1}^2\left[\varphi(F_{r_1})\varphi(F_{r_2})\phi(g_{r_1}g_{r_2})+\varphi(F_{r_1}F_{r_2})\phi(g_{r_1})\phi(g_{r_2})\right]\\
    \hat{\mu}_{n3}=_{a.s}&\sum_{r_1=1}^2\sum_{r_2=1}^2\sum_{r_3=1}^2\left[\varphi(F_{r_1})\varphi(F_{r_2})\varphi(F_{r_3})\phi(g_{r_1}g_{r_2}g_{r_3})\right.\nonumber\\
    &\left.+\varphi(F_{r_1}F_{r_2}F_{r_3})\phi(g_{r_1})\phi(g_{r_2})\phi(g_{r_3})+3\varphi(F_{r_1})\varphi(F_{r_2}F_{r_3})\phi(g_{r_1}g_{r_2})\phi(g_{r_3})\right].
    \end{cases}
\end{align*}
For each $\theta=(\tau_1,\rho,\tau_2)$, let ${\Phi}:\mathbb{R}^3\rightarrow\mathbb{R}^9$ be
    \begin{align*}
  {\Phi}(\theta)=\left(\rho\tau_1,\rho_0\tau_2,\rho\tau_1^2,\rho\tau_1\tau_2,\rho_0\tau_2,\rho\tau_1^3,\rho\tau_1^2\tau_2,\rho\tau_1\tau_2^2,\rho_0\tau_2^3\right),
    \end{align*}
which maps $\theta$ to $$\{\phi(g_1),\phi(g_2),\phi(g_1^2),\phi(g_1g_2),\phi(g_2^2),\phi(g_1^3),\phi(g_1^2g_2),\phi(g_1g_2^2),\phi(g_2^3)\}.$$ 
Let $\Psi:\mathbb{R}^9\rightarrow\mathbb{R}^3$ be defined according to the asymptotic expansions above, and define ${\mathcal{F}}_n={\Psi}_n\circ{\Phi}$. The MoM estimator $\hat{\theta}=(\hat{\tau}_1,\hat{\rho},\hat{\tau}_2)$ is then the solution to the system of equations
\begin{align*}
    \hat{\mu}_{n}={\mathcal{F}}_n(\hat{\theta}).
\end{align*}
Under Scenario 3, simultaneous diagonalizability fails, rendering the asymptotic expansions invalid. Nevertheless, we apply the MoM construction above under the false assumption that simultaneous diagonalizability holds.

\subsubsection{Construction of MoM estimators under Scenarios 4 and 5}
When $\Sigma_1$ and $\Sigma_2$ are not simultaneously diagonalizable, we employ the empirical fix procedure of Section~\ref{subsec:empirical.fix}. We first estimate the eigenvector matrices using the MANOVA estimators defined in \eqref{eq:manova.fullsib}. Under the choice $X=0$, these estimators simplify to
\begin{align*}
    \hat{\Sigma}_1=\frac{p}{n}\left(S_{n1}-\frac{I}{n-I}S_{n2}\right),\quad \hat{\Sigma}_2=\frac{p}{n-I}S_{n2}.
    \end{align*}
Let $\hat{\Sigma}_r=\hat{V}_r\hat{\Lambda}_r\hat{V}_r^T$, $r=1,2$, denote the spectral decomposition of the MANOVA estimator.

For each $\theta=(\tau_1,\rho,\tau_2)$, we construct surrogate covariance matrices $\Sigma_1'(\theta)$ and $\Sigma_2'(\theta)$ by replacing the true eigenvalues with parametric forms while retaining the empirical eigenvectors:
\begin{align*}
	\Sigma_1' = \hat{V}_1\Lambda_1'(\theta)\hat{V}_1^T,\quad\Lambda_1'(\theta) = \left[ \begin{array}{cc}
		\tau_1 \Id_{\lceil p \rho \rceil } & 0 \\
		0 &0
	\end{array} \right], \\
    \Sigma_2' = \hat{V}_2\Lambda_2'(\theta)\hat{V}_2^T,\quad\Lambda_2'(\theta)=\left[ \begin{array}{cc}
		\tau_2 \Id_{\lceil p {\rho_0} \rceil } & 0 \\
		0 &0
	\end{array} \right].
    \end{align*}
Define $\hat{\Phi}:\mathbb{R}^3\rightarrow\mathbb{R}^9$ as the mapping from $\theta$ to the collection of moments 
\begin{align*}
    \{\varphi(\Sigma_1'),\varphi(\Sigma_2'),\varphi(\Sigma_1'^2),\varphi(\Sigma_1'\Sigma_2'),\varphi(\Sigma_2'^2),\varphi(\Sigma_1'^3),\varphi(\Sigma_1'^2\Sigma_2'),\varphi(\Sigma_1'\Sigma_2'^2),\varphi(\Sigma_2'^3)\}.
\end{align*}
Let $\hat{\mathcal{F}}_n=\Psi\circ\hat{\Phi}$, where $\Psi$ is the same mapping from Scenarios 2 and 3. The method of moments estimator $\hat{\theta}_n$ is then the solution to
\begin{align}
    \hat{\mu}_n=\hat{\mathcal{F}}_n(\hat{\theta}_n).
\end{align}
For the oracle setting in Scenario 5, we assume that the eigenvector matrices $V_1$ and $V_2$ are known. In this case, we construct the surrogate matrices as 
$$
\Sigma_1'=V_1\Lambda_1'(\theta)V_1^T,\quad \Sigma_2'=V_2\Lambda_2'(\theta)V_2^T,
$$
and follow the same procedure.

\subsubsection{Simulations}\label{subsec:simu.main}
We conduct a simulated experiment involving $I=1000$ families, where for each individual we measure $p=500$ phenotypic traits. The covariance matrices $\Sigma_1$ and $\Sigma_2$ are each assumed to have two distinct eigenvalues: one nonzero eigenvalue and one zero eigenvalue. For $\Sigma_1$, the nonzero eigenvalue is $\tau_1=1$ taking up a proportion of $\rho=0.5$. For $\Sigma_2$, the nonzero eigenvalue is $\tau_2=0.5$ taking up a known proportion of $\rho_0=0.8$. In Scenarios 3, 4, and 5, we set $V_1=R(\omega)$, where $\omega=\pi/4$.
The family sizes $\{J_i\}_{i=1}^I$ are modeled as independent and identically distributed random variables, each taking the value 1 or 2 with equal probability, i.e., $J_i\sim \text{Unif}\{1,2\}$. 

For each scenario, we conduct 1000 independent experiments to obtain MoM estimators $\{\hat{\theta}^{(s)}=(\hat{\tau}_1^{(s)},\hat{\rho}^{(s)},\hat{\tau}_2^{(s)})\}_{s=1}^{1000}$. Let $\bar{\hat{\theta}}=1000^{-1}\sum_{s=1}^{1000}\hat{\theta}^{(s)}$ denote the sample mean. We compute the empirical bias $\bar{\hat{\theta}}-\theta$ and the sample standard deviation across all replicates. Figure~\ref{fig:hist.mom.est.main} displays histograms of the MoM estimators for each scenario. Under Scenarios 1, 2, 4, and 5, the estimators are centered around the true parameter values, whereas under Scenario 3, the estimators exhibit systematic bias.

For Scenarios 1--3, we numerically evaluate the theoretical bias and standard deviations given in Theorem~\ref{thm:main} and Lemma~\ref{lem:nested} using the procedure described in Section~\ref{subsec:clt}. Scenario 4, which employs estimated eigenvectors, does not admit asymptotic guarantees. For the oracle setting in Scenario 5, where the eigenvector matrices $V_1$ and $V_2$ are assumed known, consistency and asymptotic normality follow by analogous arguments, with minor modifications to account for $\Sigma_1$ and $\Sigma_2$ not being simultaneously diagonalizable. Additional details on evaluating the theoretical biases and standard deviations under each scenario are provided in Appendix~\ref{app:experiments}. Table~\ref{tab:sd_bias_main} shows that the empirical bias and standard deviations align closely with their theoretical counterparts for Scenarios 1, 2, 3, and 5, confirming the validity of our asymptotic theory, even under model misspecification. In Appendix \ref{subsec:corr}, we also report the empirical and theoretical correlation matrix between the MoM estimators.

To assess the variability of the empirical bias and standard deviation, we partition the 1000 replicates into 20 groups of 50 replicates each. Within each group $i$, we compute the empirical bias and standard deviation, denoted $\widehat{\text{bias}}^{(i)}(\hat{\theta})$ and $\widehat{\text{sd}}^{(i)}(\hat{\theta})$, respectively. We then calculate the mean and standard deviation of these 20 estimates and construct 99\% confidence intervals of the form $[\bar{x} - 2.576s,\, \bar{x} + 2.576s]$, where $\bar{x}$ denotes the sample mean and $s$ denotes the sample standard deviation across the 20 groups. These intervals quantify the sampling variability of the empirical bias and standard deviation estimates.  Figure~\ref{fig:normal.approx.main} displays these confidence intervals, with the theoretical biases and standard deviations indicated in red. From Figure~\ref{fig:normal.approx.main}, we observe that for Scenarios 1, 2, 3, and 5, the theoretical bias and standard deviation of each MoM estimator lie within the 99\% confidence intervals constructed from the empirical estimates.

 A notable feature of the sequential estimation procedure in Scenario 1 is that the MoM estimator for $\theta_2=(\tau_2)$, which governs the eigenvalue decay of $\Sigma_2$, exhibits markedly lower bias and variance than those for $\theta_1=(\tau_1,\rho)$, which governs $\Sigma_1$. This phenomenon can be attributed to two factors: (i) estimating $\theta_1$ requires deconvolving the combined effects of both $\Sigma_1$ and $\Sigma_2$ from the between-group sum-of-squares matrix $S_{n1}$, whereas $\theta_2$ is estimated directly from the within-group matrix $S_{n2}$, which depends solely on $\Sigma_2$; and (ii) the second-level random effect $\alpha_2\in\mathbb{R}^{n\times p}$ involves substantially more observations than the first-level random effect $\alpha_1\in\mathbb{R}^{I\times p}$, yielding more accurate estimation of $\theta_2$.

 Comparing Scenarios~1 and~2, Figure~\ref{fig:hist.mom.est.main} and Table~\ref{tab:sd_bias_main} show that sequential estimation substantially improves accuracy for $\rho$ and $\tau_2$. In particular, under joint estimation (Scenario~2), both the bias and the standard deviation of $\hat{\tau}_2$ are nearly an order of magnitude larger than under sequential estimation (Scenario~1), and the corresponding quantities for $\hat{\rho}$ are more than three times larger. This pattern is consistent with the differences between the procedures: (i) the CLT for sequential estimation combines moment information from both $S_{n1}$ and $S_{n2}$ which are independent, while the joint procedure relies on moments of $S_{n1}$ alone; and (ii) in the sequential procedure $\tau_2$ is estimated independently from $S_{n2}$, whereas the joint procedure  estimates all parameters simultaneously from $S_{n1}$ and therefore faces a harder deconvolution problem. The estimator $\hat{\tau}_1$ exhibits comparable performance across the two scenarios. The estimator $\hat{\tau}_1$ exhibits comparable performance under both scenarios. Figure~\ref{fig:normal.approx.main} further illustrates that the empirical bias and standard deviation of $\hat{\rho}$ and $\hat{\tau}_2$ exhibit substantially greater variability under Scenario 2 than under Scenario 1. Finally, we compare the empirical fix procedure under Scenario 4 with the oracle fix procedure under Scenario 5. The empirical biases and standard deviations under Scenario 4 exhibit greater variability, with $\hat{\tau}_2$ showing the most pronounced increase, reflecting the additional uncertainty introduced by estimating the eigenvector matrices.

\begin{figure}[H]
    \centering
    \begin{subfigure}[t]{0.48\textwidth}
        \centering
        \includegraphics[trim={0cm 0cm 0cm 0cm},clip,width=\linewidth]{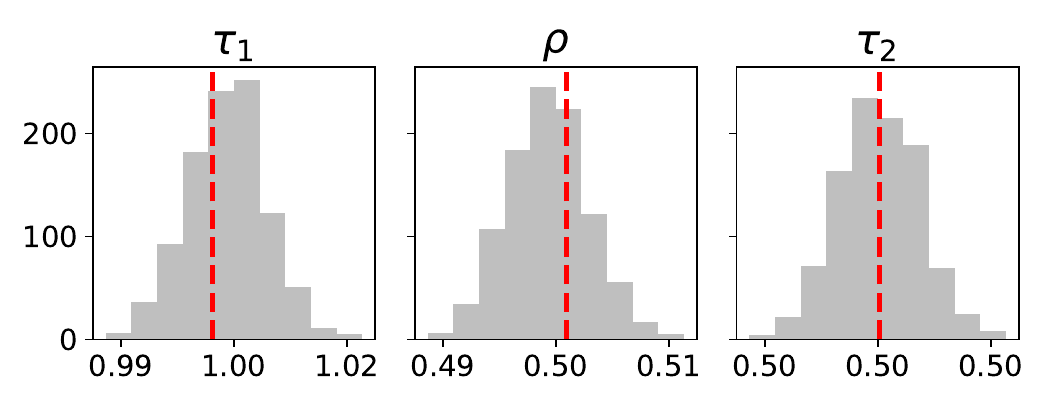}
        \caption{Scenario 1: Correct, sequential estimation.}
    \end{subfigure}\\
    \begin{subfigure}[t]{0.48\textwidth}
        \centering
        \includegraphics[trim={0cm 0cm 0cm 0cm},clip,width=\linewidth]{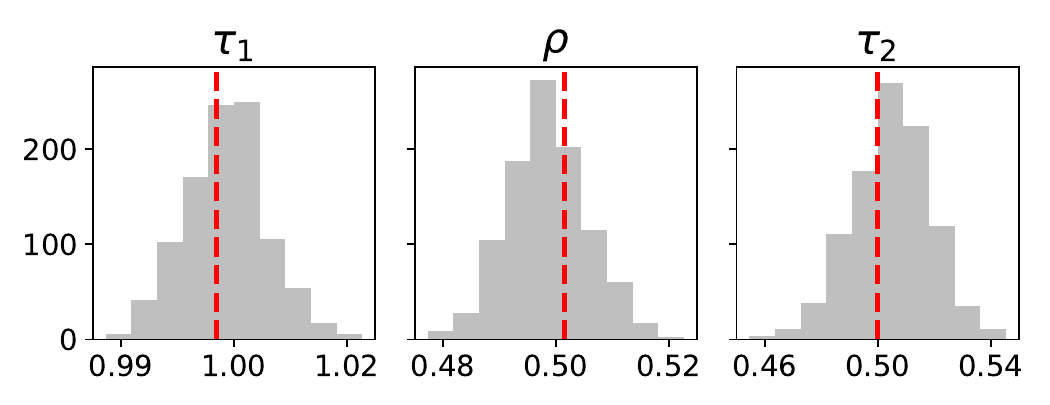}
        \caption{Scenario 2: Correct, joint estimation.}
    \end{subfigure}
    \begin{subfigure}[t]{0.48\textwidth}
        \centering
        \includegraphics[trim={0cm 0cm 0cm 0cm},clip,width=\linewidth]{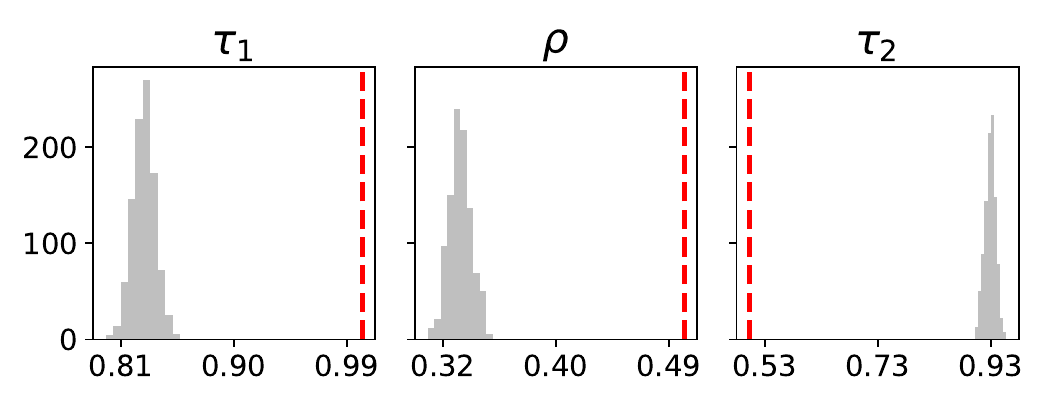}
        \caption{Scenario 3: Misspecified, joint estimation.}
    \end{subfigure}
    \begin{subfigure}[t]{0.48\textwidth}
        \centering
        \includegraphics[trim={0cm 0cm 0cm 0cm},clip,width=\linewidth]{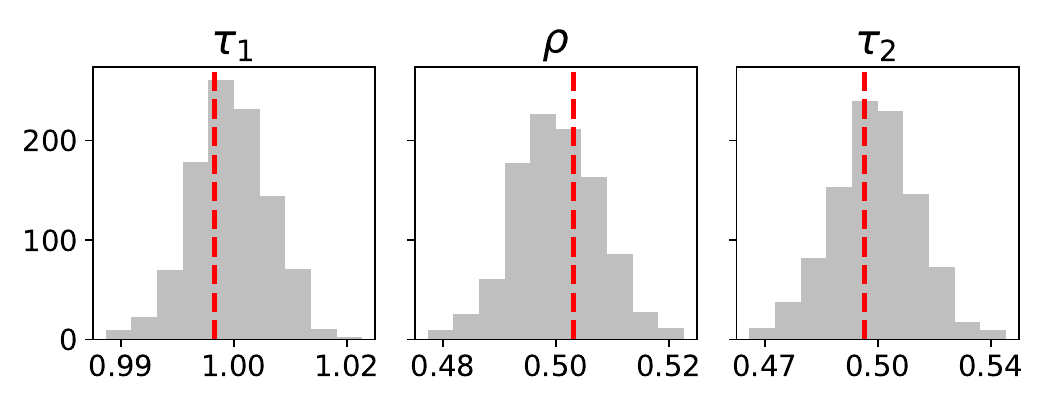}
        \caption{Scenario 4: Misspecified, empirical fix.}
    \end{subfigure}
    \begin{subfigure}[t]{0.48\textwidth}
        \centering
        \includegraphics[trim={0cm 0cm 0cm 0cm},clip,width=\linewidth]{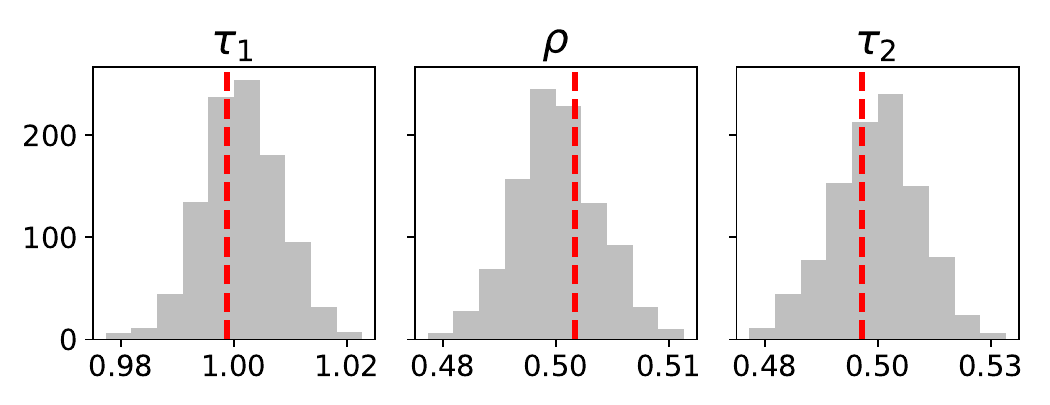}
        \caption{Scenario 5: Misspecified, oracle fix.}
    \end{subfigure}
    \caption{Histogram of MoM estimators, with true parameter value marked in red.}
    \label{fig:hist.mom.est.main}
\end{figure}

\begin{table}[H]
    \centering
     \caption{Empirical and theoretical biases and standard deviations of the MoM estimators.}
     \begin{tabular}{c c c c c @{\,} c c c}
        & \multicolumn{3}{c}{Scenario 1} && \multicolumn{3}{c}{Scenario 2} \\
        \cline{2-4} \cline{6-8}
              & $\tau_1$& $\rho$  &$\tau_2$ && $\tau_1$ & $\rho$ &$\tau_2$\\
               \hline
             {bias (emp.)}  & 0.0020& -0.0010& 0.0000& & 0.0012& -0.0031& 0.0047\\
             {bias (theo.)}  &0.0020& -0.0010& 0.0000&&0.0009& -0.0030& 0.0048 \\
             {sd (emp.)}&0.0056& 0.0022& 0.0016&& 0.0049& 0.0073& 0.0136\\
             {sd (theo.)}&0.0055& 0.0022& 0.0015&& 0.0049& 0.0072& 0.0134
          \end{tabular}

    \begin{tabular}{c c c c c @{\,} c c c c @{\,} c c c}
  & \multicolumn{3}{c}{Scenario 3} && \multicolumn{3}{c}{Scenario 4}&& \multicolumn{3}{c}{Scenario 5} \\
  \cline{2-4} \cline{6-8} \cline{10-12}
        & $\tau_1$& $\rho$  &$\tau_2$ && $\tau_1$ & $\rho$ &$\tau_2$&& $\tau_1$ & $\rho$ &$\tau_2$\\
         \hline
        bias (emp.)  &-0.176& -0.1693& 0.4283& &0.0021& -0.0028& 0.0033 &&0.0023& -0.0027& 0.0029 \\
        bias (theo.)& -0.1761& -0.1693& 0.4284&&n.a.&n.a.&n.a.&&0.0023& -0.0027& 0.0030\\
       sd (emp.) &0.0091& 0.0084& 0.0096&&0.0051& 0.0078& 0.0136&&0.0064& 0.0066& 0.009\\
       sd (theo.) &0.0086& 0.0082& 0.0093&&n.a.&n.a.&n.a.&&0.0061& 0.0064& 0.0087
    \end{tabular}
  \label{tab:sd_bias_main}
\end{table}

\begin{figure}[H]
    \centering
    \begin{subfigure}[t]{0.49\textwidth}
        \centering
        \includegraphics[trim={0cm 0cm 0cm 0cm},clip,width=\linewidth]{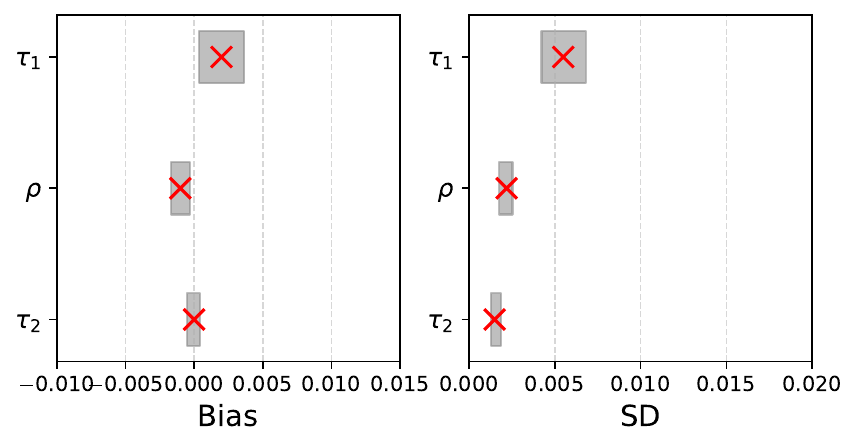}
        \caption{Scenario 1: Correct, sequential estimation.}
    \end{subfigure}\\
    \begin{subfigure}[t]{0.48\textwidth}
        \centering
        \includegraphics[trim={0cm 0cm 0cm 0cm},clip,width=\linewidth]{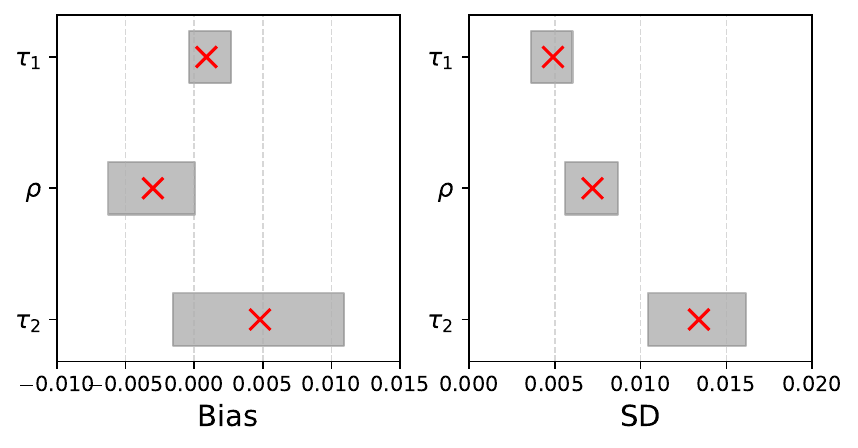}
        \caption{Scenario 2: Correct, joint estimation.}
    \end{subfigure}
    \begin{subfigure}[t]{0.48\textwidth}
        \centering
        \includegraphics[trim={0cm 0cm 0cm 0cm},clip,width=\linewidth]{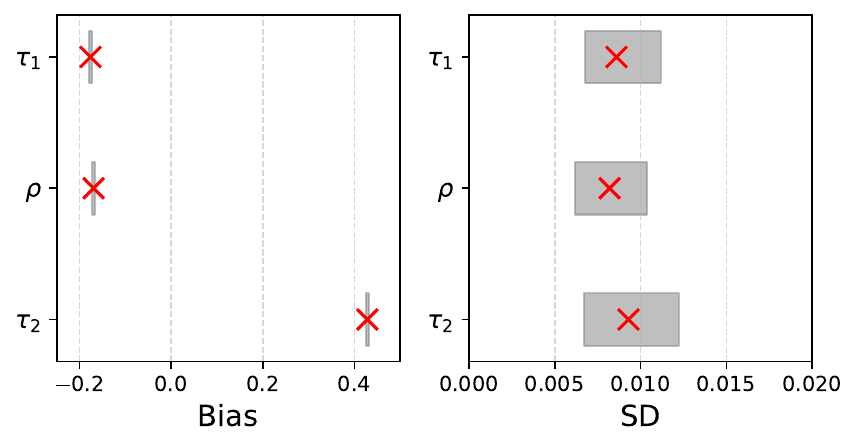}
        \caption{Scenario 3: Misspecified, joint estimation.}
    \end{subfigure}
    \begin{subfigure}[t]{0.48\textwidth}
        \centering
        \includegraphics[trim={0cm 0cm 0cm 0cm},clip,width=\linewidth]{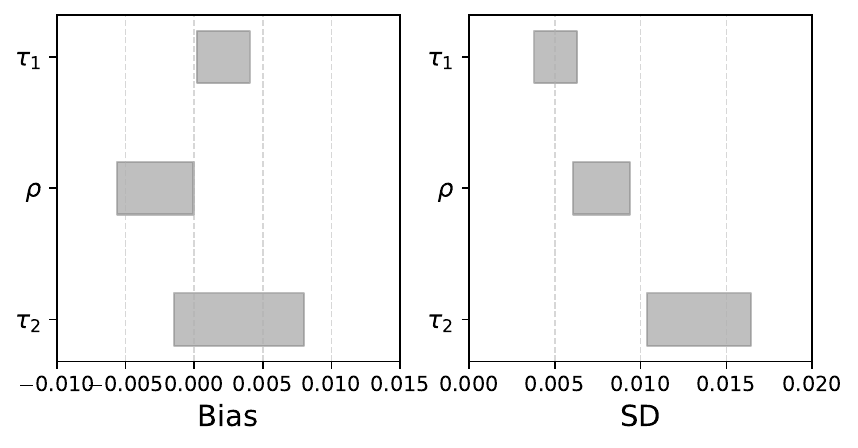}
        \caption{Scenario 4: Misspecified, empirical fix.}
    \end{subfigure}%
    \begin{subfigure}[t]{0.48\textwidth}
        \centering
        \includegraphics[trim={0cm 0cm 0cm 0cm},clip,width=\linewidth]{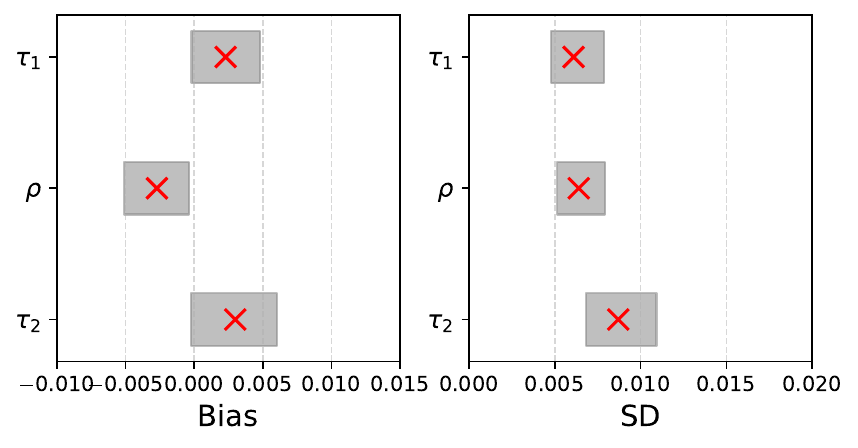}
        \caption{Scenario 5: Misspecified, oracle fix.}
    \end{subfigure}%
    \caption{Normal approximation 99\% confidence intervals for the bias and standard deviation of method of moments estimators, computed using 20 empirical estimates, each based on 50 MoM estimators. The theoretical bias and standard deviations of the MoM estimators are marked in red.}
    \label{fig:normal.approx.main}
\end{figure}
\subsection{Real data analysis}\label{sec:real.data}

We analyze gene expression data from $I=136$ accessions of the indica subspecies of \textit{Oryza sativa} assayed under drought conditions \cite{rice20}. Each accession has $J=3$ replicate measurements, giving $n=408$ observations in total. We restrict attention to the $p=100$ most heritable gene expression traits. For the $j$-th replicate of the $i$-th accession, we model the observed trait vector $y_{ij} \in \mathbb{R}^p$ as
\begin{align*}
    y_{ij} = \mu + \alpha_i + \epsilon_{ij}\in\mathbb{R}^p, \quad \alpha_i\sim N(0,{\Sigma_1}), \epsilon_{ij} \sim N(0,{\Sigma_2}),
\end{align*}
where $\mu \in \mathbb{R}^p$ is the population mean,$\alpha_i$ represents the accession-level random effect, and $\epsilon_{ij}$ is the residual error. This model can then be expressed in the form \eqref{eq:fullsib.exp1}
with $X=\mathbbm{1}_n$, $\beta=\mu^T$, $U_1=\mathbbm{1}_{J}\otimes \mathbbm{1}_I \in\mathbb{R}^{n\times I}$ and $U_2=\Id_n$.

We focus on estimating $\Sigma_1$, which is proportional to the additive genetic covariance matrix $G$ \cite{falconer1996introduction}. Under the balanced full-sib design, the REML estimator admits the closed-form Amemiya representation \cite{Amemiya01051985}:
\begin{align*}
    \hat\Sigma_{1,R}= \frac{p}{n(J-1)}S_{n2}^{1/2}H\left(\frac{n-I}{I-1}S_{n2}^{-1/2}S_{n1}S_{n2}^{-1/2}\right)S_{n2}^{1/2},
\end{align*}
where $H(S)$ maps a nonnegative definite matrix $S$ with spectral decomposition $\sum_1^p\lambda_ju_ju_j^T$ into $\sum_1^p(\lambda_j-1)_+u_ju_j^T$.  The MANOVA estimator $\hat{\Sigma}_{1,M}$, computed from \eqref{eq:manova.fullsib}, simplifies to
\begin{align*}
    \hat\Sigma_{1,M}= \frac{p}{n-J}S_{n1}-\frac{p}{n(J-1)}S_{n2}.
\end{align*}
Since $n(J-1)(I-1)=(n-J)(n-I)$, REML coincides with MANOVA whenever all eigenvalues of $\frac{n-I}{I-1}S_{n1}S_{n2}^{-1}$ are at least one. Figure \ref{fig:rice_M_R_hist} shows histograms of the REML and MANOVA eigenvalue estimates for $\Sigma_1$. We observe that both exhibit a pronounced leading spike.
\begin{figure}
    \centering
    \includegraphics[trim={0cm 0.22cm 0cm 0.cm},clip,width=0.8\linewidth]{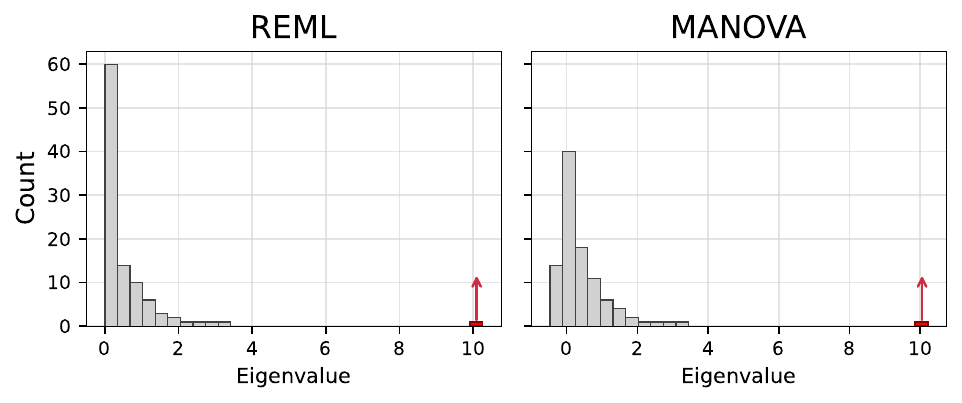}
    \caption{Histogram of the REML and MANOVA estimates for $\Sigma_1$ on the rice data.}
    \label{fig:rice_M_R_hist}
\end{figure}

We next compute MoM estimators for the eigenvalues of $\Sigma_1$ under the quadratic model
\begin{align*}
    g_r(x)=\tau_r(1-\rho_rx)_+^2,\quad r=1,2,
\end{align*}
and the exponential model
\begin{align*}
    g_r(x)=\tau_{r}\exp(-\rho_{r}x),\quad r=1,2.
\end{align*}
To accommodate spikes, we use the $\text{\rm MoM-EEVS}$ procedure in Section \ref{subsec:empirical.fix}, combined with the sequential procedure in Section~\ref{sec:sequential.nested} for computational simplicity.

Figure~\ref{fig:rice} compares the estimated eigenvalues of $\Sigma_1$ across the four methods. REML and MANOVA are broadly similar, except that MANOVA yields some negative eigenvalues whereas REML is constrained to be positive semidefinite. Relative to REML and MANOVA, the MoM estimators shrink the spectrum toward its center, increasing smaller eigenvalues and decreasing larger ones, thereby reducing overall eigenvalue dispersion. REML and the quadratic MoM fit both imply a nontrivial estimated null space: the estimated null-space fraction is $0.30$ under REML and $0.14$ under quadratic MoM. Taken together, these results suggest that REML may overstate both upper-tail eigenvalues and null-space mass, which would inflate inferred high evolvabilities and evolutionary constraints.


\begin{figure}
    \centering
    \begin{subfigure}[t]{\textwidth}
        \centering
        \includegraphics[trim={0cm 0.2cm 0cm 0.2cm},clip,width=\linewidth]{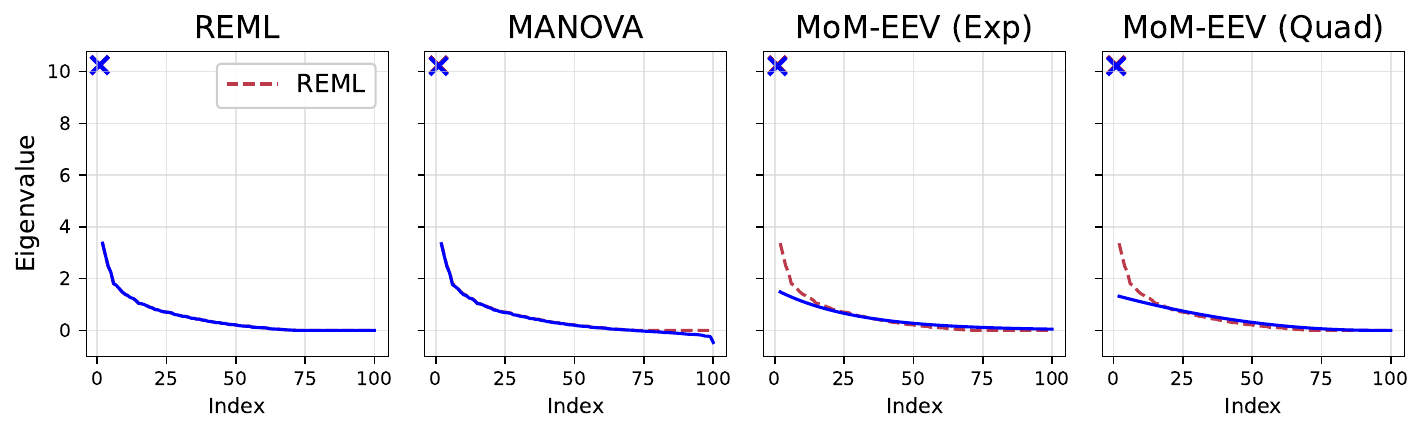}
        \caption{Comparison of the estimated eigenvalues of $\Sigma_1$.}
    \end{subfigure}
    \begin{subfigure}[t]{\textwidth}
        \centering
        \includegraphics[trim={0cm 0.2cm 0cm 0.2cm},clip,width=\linewidth]{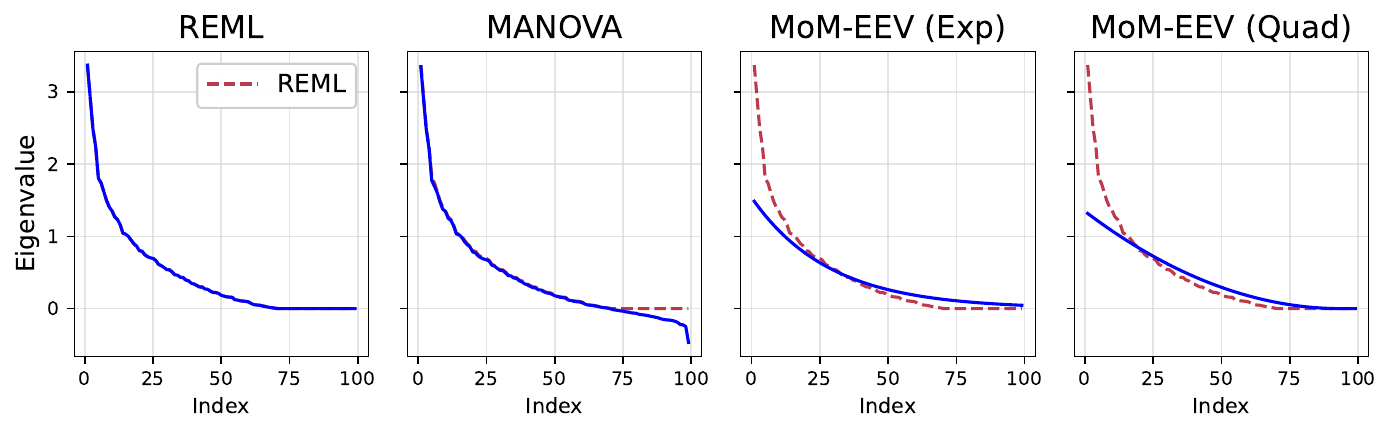}
        \caption{Same as (a), excluding the largest estimated eigenvalue.}
    \end{subfigure}
    \caption{Comparison of the four estimators for $\Sigma_1$ on the rice data.}
    \label{fig:rice}
\end{figure}

\section{Background}\label{sec:free.prob}
\subsection{Free probability}
In this section, we review some useful definitions from free probability theory, referencing \cite{bose2021random,nica2006lectures}, and establish some preliminary results that will be used in the proof of our main result. 
\begin{definition}[Paraphrasing Definition 2.5.2 and 2.5.4 in \cite{bose2021random}]
Let $\mathcal{A}$ over $\mathbb{C}$ be a unital $*$-algebra with unity $1_{\mathcal{A}}$ {\cite[Definition~2.5.3]{bose2021random}}. Let $\varphi:\mathcal{A}\rightarrow\mathbb{C}$ be a linear functional which satisfies 
\begin{align*}
   \varphi(1_{\mathcal{A}})=1,\quad \varphi(a^*a)\geq 0 \text{ for all }a\in\mathcal{A}.
\end{align*}
Then $(\mathcal{A},\varphi)$ is called a \textit{non-commutative $*$-probability space} (NCP) and $\varphi$ is called a \textit{state}. A state $\varphi$ is said to be \textit{tracial} if 
\begin{align*}
    \varphi(ab)=\varphi(ba),\text{ for all }a,b\in\mathcal{A}.
\end{align*}
\end{definition}
\begin{example}
For any positive $m$, consider the $*$-algebra $\mathcal{M}_m(\mathbb{C})=\mathbb{C}^{m\times m}$, with the $*$-operation given by the conjugate transpose. For $A\in\mathcal{M}_m(\mathbb{C})$, let $\varphi(A)=m^{-1}\Tr A$. Then $\varphi$ is tracial and $(\mathcal{M}_m(\mathbb{C}),m^{-1}\Tr)$ is an NCP.  
\end{example}
\begin{example}
    In the above example, allowing matrices in $\mathcal{M}_m(\mathbb{C})$ to be random gives the NCP
    $(\mathcal{M}_m(\mathbb{C}),m^{-1}\mathbb{E}\operatorname{Tr})$.
\end{example}
\begin{definition}[Non-crossing partitions, Paraphrasing Definition 2.1.1 in \cite{bose2021random}]
    Let $[n]=\{1,\ldots,n\}$, and $\pi=\{V_1,\ldots,V_r\}$ be a partition of $[n]$ into $r$ blocks. If there exists $p_1<q_1<p_2<q_2$ in $[n]$ such that $p_1$ and $p_2$ are in the same block, $q_1$ and $q_2$ are in the same block, but $\{p_1,q_1,p_2,q_2\}$ are not in the block, then $\pi$ is called crossing. If $\pi$ is not crossing, then it is called non-crossing. Denote all non-crossing partitions of $[n]$ by $\text{NC}(n)$. For any $\pi,\sigma\in\text{NC}(n)$, we say that $\pi\leq \sigma$ if every block of $\pi$ is contained in some block of $\sigma$.
\end{definition}
As discussed in Section 3.8 in \cite{bose2021random}, we can
  alternatively view each non-crossing partition $\pi\in\text{NC}(n)$
  as a permutation of $[n]$ as follows: within each block,
  $i_1 < i_2 < \cdots < i_t$ say,
  consider the cyclic permutation which shifts
  every element to the next right element of the block, and the last
  element of the block is shifted to the first element of the
  block.
  We write, for example, $\pi(i_1) = i_2$, and $\pi(i_t) =
    i_1$.
  This gives the permutation on $[n]$ whose cycles are the permutations within each block of $\pi$. We next define the Kreweras complement of a non-crossing partition and show in Lemma~\ref{lem:kreweras} how to compute it from this permutation representation. For notational convenience, we use $\pi$ to denote both the partition and its associated permutation.

\begin{definition}[Kreweras complement, Definition 3.8.1 in \cite{bose2021random}] 
   {Let $\pi\in\text{NC}(n)$ be a non-crossing partition of $[n]$.  Introduce additional elements $[\bar{n}]=\{\bar{1},\ldots,\bar{n}\}$ with the ordering $\bar{1}<\ldots<\bar{n}$, and define the interlaced ordering $1<\bar{1}<2<\bar{2}<\ldots<n<\bar{n}$. The Kreweras complement of $\pi$, denoted $K(\pi)$, is defined to be the biggest element among all non-crossing partitions $\sigma$ of $[\bar{n}]$ such that $\pi\cup\sigma$ is a non-crossing partition of the $2n$ elements $\{1,\Bar{1},\ldots,n,\Bar{n}\}$.}
\end{definition}

\begin{lem}[Kreweras complement of non-crossing pair partitions, Lemma 2.3.1 and Lemma 3.8.1 in \cite{bose2021random}]\label{lem:kreweras}
    Denote the set of non-crossing pair partitions of $[2l]$ by $\text{NC}_2(2l)$, then $|\text{NC}_2(2l)|=\frac{(2l)!}{(l+1)!l!}$. For any $\pi\in\text{NC}_2(2l)$, we have $K(\pi)=\pi\gamma_{2l}$ 
    , where $\gamma_{2l}$ is the cyclic permutation $1\rightarrow 2\rightarrow\ldots\rightarrow 2l\rightarrow 1$, with $|K(\pi)|=l+1$. {Moreover, $K$ is a bijection on $\text{NC}_2(2l)$, with $\pi=K(\pi)\gamma_{2l}^{2l-1}$.} 
\end{lem}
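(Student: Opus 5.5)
This lemma is a standard combinatorial fact cited from the literature, so the plan is to sketch the self-contained argument in three parts: count $\text{NC}_2(2l)$, identify $K(\pi)$ with the permutation $\pi\gamma_{2l}$ and compute its number of cycles, and then invert the relation.

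\emph{Counting.} Let $C_l=|\text{NC}_2(2l)|$. In a pair partition $\pi\in\text{NC}_2(2l)$, the element $1$ is paired with some $2j$. It must be even-numbered, since non-crossing forces the $2j-2$ elements strictly between $1$ and its partner to be paired among themselves. The elements between form a non-crossing pair partition of size $2(j-1)$, and the elements after $2j$ form one of size $2(l-j)$. This gives the Catalan recursion
\[
C_l=\sum_{j=1}^{l} C_{j-1}C_{l-j}, \qquad C_0=1.
\]
Solving it with the generating function $C(z)=1+zC(z)^2$ yields $C_l=\frac{(2l)!}{(l+1)!\,l!}$.

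\emph{Identifying $K(\pi)$ with $\pi\gamma_{2l}$.} I would prove this by induction on $l$, using that every $\pi\in\text{NC}_2(2l)$ has an adjacent pair $\{i,i+1\}$ (indices taken cyclically). Removing that pair from $\pi$ and the points $i,\bar i,i+1$ from the interlaced picture gives a smaller configuration. Adding back the adjacent pair $\{i,i+1\}$ has three effects on the Kreweras complement:
\begin{itemize}
\item it creates the new singleton block $\{\bar i\}$;
\item it inserts $\overline{i+1}$ into the block of $\overline{i-1}$, since these two are not separated by the new pair;
\item the rest of the complement is unchanged.
\end{itemize}
On the permutation side, $\pi\gamma(i)=\pi(i+1)=i$, which fixes the new singleton. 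Also $\pi\gamma(i-1)=\pi(i)=i+1$, which is the insertion of $i+1$ right after $i-1$ in its cycle. So both sides change identically, and the induction closes. Here barred points $\bar j$ are identified with $j$.

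\emph{Number of blocks.} Under the same identification, the step above increases the number of cycles by exactly one. With base case $l=1$, where $K=\{\{1\},\{2\}\}$ has $2$ blocks, this gives $|K(\pi)|=l+1$. Alternatively, one can use the general identity $|\pi|+|K(\pi)|=n+1$ with $n=2l$ and $|\pi|=l$.

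\emph{Bijectivity and the inverse formula.} Right multiplication by $\gamma_{2l}$ is injective on permutations, so $K$ is injective. I would then verify that $K(\pi)\gamma_{2l}^{-1}=\pi$ and note that $\gamma_{2l}^{-1}=\gamma_{2l}^{2l-1}$, which is the stated inverse formula.

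A caveat on surjectivity: $K(\pi)$ has $l+1$ blocks, so for $l\ge 2$ it is not itself a pair partition. The statement that $K$ is a bijection \emph{on} $\text{NC}_2(2l)$ must therefore be read as a bijection onto its image, the set of non-crossing partitions with $l+1$ blocks each lying entirely within the odd or the even indices. The paper's examples are consistent with this reading, e.g.\ $K(12|34)=1|24|3$.

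The main obstacle is the identification step: matching the lattice-theoretic "largest non-crossing $\sigma$" definition of $K(\pi)$ with the permutation product $\pi\gamma_{2l}$. The adjacent-pair induction reduces this to a local check, and the delicate point is confirming that the complement blocks away from the inserted pair are genuinely unchanged.
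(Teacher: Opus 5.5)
Your proposal is correct, and it is more self-contained than the paper's argument. The paper proves only the inverse formula, exactly as in your last step: $\gamma_{2l}^{2l}$ is the identity, so right-multiplying $K(\pi)=\pi\gamma_{2l}$ by $\gamma_{2l}^{2l-1}$ returns $\pi$. It defers the Catalan count, the identity $K(\pi)=\pi\gamma_{2l}$, the block count $l+1$ and bijectivity to Lemmas 2.3.1 and 3.8.1 of \cite{bose2021random}.

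You reprove those facts instead. Decomposing by the partner of $1$ gives the Catalan recursion. The adjacent-pair induction turns the comparison between the lattice-theoretic $K(\pi)$ and the permutation $\pi\gamma_{2l}$ into a local check. The delicate point you flag does go through:
\begin{itemize}
\item $\bar i$ is a singleton in every admissible $\sigma$, since it sits alone inside the pair $\{i,i+1\}$.
\item Merging the blocks of $\overline{i-1}$ and $\overline{i+1}$ keeps $\sigma$ admissible. The only points between them are $i,\bar i,i+1$, which form an interval block and a singleton and cannot separate them. Hence the maximum has $\overline{i-1}\sim\overline{i+1}$.
\item On admissible $\sigma$ with these two features, deleting $i,\bar i,i+1$ and fusing $\overline{i-1},\overline{i+1}$ is an order isomorphism onto the admissible partitions for the reduced $\pi'$. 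So the maxima correspond.
\end{itemize}

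The citation is shorter. Your route explains why the formula holds, and it also exposes that the statement's last sentence is loose. $K$ is a bijection of $\text{NC}(2l)$, but restricted to $\text{NC}_2(2l)$ it is only injective. Nothing later in the paper needs more than that, since Lemma~\ref{lem:even.odd} uses only $K(\pi)=\pi\gamma_{2l}$ and the inverse formula.

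One small slip: $K(\pi)$ has $l+1\neq l$ blocks, so it is never a pair partition. This includes $l=1$, where $K(12)=1|2$ consists of two singletons, so your restriction to $l\ge 2$ is unnecessary.

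Your description of the image is correct but not needed. It follows by counting. Such a partition is determined by its odd part $\sigma\in\text{NC}(l)$: the even part must lie below $K(\sigma)$, and the block count $l+1$ forces equality. So there are $|\text{NC}(l)|=|\text{NC}_2(2l)|$ of them, and the injective image fills this set.
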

\begin{proof}
    {Note that $\gamma_{2l}^{2l}$ is the identity mapping. As a result, given that $K(\pi)=\pi\gamma_{2l}$, it directly follows that $\pi=K(\pi)\gamma_{2l}^{2l-1}$. The remaining claims are established in Lemma 2.3.1 and Lemma 3.8.1 in \cite{bose2021random}.}
\end{proof}

\begin{example}\label{eg:kre}
For $l=1$, $\text{NC}_2(2)=\{\pi\}$, where
\begin{align*}
    \pi=12,~ K(\pi)=1|2.
\end{align*}
    For $l=2$, $\text{NC}_2(2l)=\{\pi_1,\pi_2\}$, where 
    \begin{enumerate}
        \item $\pi_1=12|34$, $K(\pi_1)=1|24|3$,
        \item $\pi_2=14|23$, $K(\pi_2)=13|2|4$.
    \end{enumerate}
\end{example}

\begin{definition}[Moments, Section 2.6 in \cite{bose2021random}] \label{def:moments} Let $(\mathcal{A},\varphi)$ be an NCP. For $a\in \mathcal{A}$, the numbers $\{\varphi(\prod_{i=1}^na^{\epsilon_i})|\epsilon_i\in\{1,*\},n\geq 1\}$ are called moments of $a$. Define a sequence of multi-linear functionals (that is, linear in each coordinate) $(\varphi_n)_{n\geq 1}$ on $\mathcal{A}^n$ via
\begin{align*}
    \varphi_n(a_1,\ldots,a_n):=\varphi(a_1\ldots a_n)
\end{align*}
for each $a_1,\ldots,a_n\in\mathcal{A}$. If $\pi=\{V_1,\ldots,V_r\}\in \text{NC}(n)$, then define
\begin{align*}
    \varphi_{\pi}:=\varphi(V_1)[a_1,\ldots,a_n]\ldots\varphi(V_r)[a_1,\ldots,a_n],
\end{align*}
where $\varphi(V)[a_1,\ldots,a_n]:=\iain{\varphi(a_{i_1}\ldots a_{i_s})}[subscripts]$ for $V=\{i_1,\ldots,i_s\}$ with $i_1<\ldots<i_s$.
\end{definition}
\begin{definition}[Free cumulants, Definition 2.6.2 in \cite{bose2021random}] Let $(\mathcal{A},\varphi)$ be an NCP. The joint free cumulant of order $n$ is defined on $\mathcal{A}^n$ by
\begin{align*}
    \kappa_n(a_1,\ldots,a_n)=\sum_{\sigma\in \text{NC}(n)}\varphi_{\sigma}[a_1,\ldots,a_n]\mu[\sigma,\mathbf{1}_n],\quad (a_1,\ldots,a_n)\in\mathcal{A}^n,
\end{align*}
where $\mu$ is the Möbius function of $\text{NC}(n)$ {\cite[Definition 1.4.2]{bose2021random}}. 
\end{definition}

\begin{definition}[Semicircular variables, Example 2.6.1 and Definition 3.6.2 and in \cite{bose2021random}]
{A variable $s$ on a $*$-probability space $(\mathcal{A},\varphi)$ is said to be semicircular if it is self-adjoint, its second free cumulant $\kappa_2(s)=1$, and all other free cumulants are 0.}
\end{definition}
\begin{definition}[Circular variables, Definition 11.22 in \cite{nica_speicher_2006}]\label{def:circular} An element $c$ of the form $c=\frac{1}{\sqrt{2}}(s_1+is_2)$ -- where $s_1$ and $s_2$ are two freely independent semicircular elements of
variance 1 -- is called a circular element. A circular element satisfies $\kappa_2(c,c^*)=\kappa_2(c^*,c)=1$ and all other free cumulants of $\{c,c^*\}$ are 0.
\end{definition}

\begin{definition}[Free independence, Definition 3.1.1 in \cite{bose2021random}]  Let $(\mathcal{A},\varphi)$ be an NCP. Then the $*$-sub-algebras $(\mathcal{A}_i)_{i\in I}$ of $\mathcal{A}$ are said to be free if, for all $n\geq 2$, and all $a_1,\ldots,a_n$ from $(\mathcal{A}_i)_{i\in I}$, $\kappa_n(a_1,\ldots,a_n)=0$ whenever at least two of the $a_i$ are from different $\mathcal{A}_i$. In particular, any collection of variables is said to be free if the sub-algebras
generated by these variables are free.
    
\end{definition}

\begin{theorem}[Theorem 3.2.1 in \cite{bose2021random}]\label{thm:free.prod}
    Let $(\mathcal{A}_i,\varphi^{(i)})_{i\in I}$ be a family of
    $*$-probability spaces. Then there exists a $*$-probability space $(\mathcal{A},\varphi)$, called the
    $*$-free product of $(\mathcal{A}_i,\varphi^{(i)})_{i\in I}$, such that there is a copy of $\mathcal{A}_i,i\in I$ which
    are freely independent in $(\mathcal{A},\varphi)$ and $\varphi$ restricted to $\mathcal{A}_i$ equals $\varphi^{(i)}$ for all $i$.
\end{theorem}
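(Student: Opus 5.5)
The approach is the standard algebraic free product construction, with positivity of the state obtained from a GNS-type representation on a free product pre-Hilbert space.

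\textbf{Step 1: decompose each factor.} For each $i\in I$ write $\mathcal{A}_i=\mathbb{C}1\oplus\mathcal{A}_i^\circ$, where $\mathcal{A}_i^\circ=\ker\varphi^{(i)}$. Every $a\in\mathcal{A}_i$ splits as $a=\varphi^{(i)}(a)1+(a-\varphi^{(i)}(a)1)$.

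\textbf{Step 2: build the algebra.} Define the vector space
\begin{align*}
\mathcal{A}=\mathbb{C}1\oplus\bigoplus_{n\geq1}\ \bigoplus_{i_1\neq i_2\neq\cdots\neq i_n}\mathcal{A}_{i_1}^\circ\otimes\cdots\otimes\mathcal{A}_{i_n}^\circ ,
\end{align*}
where only neighbouring indices are required to differ.

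Multiplication of elementary tensors is concatenation. If the last factor $a$ of the left word and the first factor $b$ of the right word lie in the same $\mathcal{A}_i^\circ$, replace $a\otimes b$ by
\begin{align*}
(ab-\varphi^{(i)}(ab)1)+\varphi^{(i)}(ab)1 .
\end{align*}
The first term stays a single letter in $\mathcal{A}_i^\circ$. The scalar term causes a further contraction of the now-adjacent letters, and this reduction is applied recursively. The unit is $1$, and the involution reverses words and applies $*$ letterwise: $(a_1\otimes\cdots\otimes a_n)^*=a_n^*\otimes\cdots\otimes a_1^*$. This is well defined because $\varphi^{(i)}(a^*)=\overline{\varphi^{(i)}(a)}$ holds for any state, so $*$ preserves $\mathcal{A}_i^\circ$.

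Checking associativity directly is tedious. Instead, I would obtain it by realizing $\mathcal{A}$ as the subalgebra of $\mathrm{End}(\mathcal{A})$ generated by left multiplications by the embedded copies $a\mapsto \varphi^{(i)}(a)1+(a-\varphi^{(i)}(a)1)$ of each $\mathcal{A}_i$. One shows that each such embedding is a unital $*$-homomorphism, and that the word $a_1\otimes\cdots\otimes a_n$ equals the product $a_1\cdots a_n$ of these operators applied to $1$. Associativity is then inherited from composition in $\mathrm{End}(\mathcal{A})$.

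\textbf{Step 3: define the state.} Let $\varphi$ be the coefficient of the $\mathbb{C}1$ component. Then $\varphi(1)=1$, and $\varphi$ restricted to $\mathcal{A}_i$ equals $\varphi^{(i)}$ by construction. Moreover, every alternating product $a_1\cdots a_n$ with $a_j\in\mathcal{A}_{i_j}^\circ$ and $i_j\neq i_{j+1}$ is a pure tensor of length $n$, so $\varphi(a_1\cdots a_n)=0$.

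\textbf{Step 4: freeness.} Step 3 gives freeness in the classical sense: mixed alternating products of centered elements have vanishing state. I would then invoke the standard equivalence between this form of freeness and the vanishing of mixed free cumulants used in the definition here (Nica--Speicher, Theorem 11.16). This equivalence is proved by Möbius inversion over $\text{NC}(n)$.

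\textbf{Step 5: positivity (main obstacle).} It remains to show $\varphi(a^*a)\geq0$. Here I would use a GNS-type construction.

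\begin{enumerate}
\item For each $i$, equip $\mathcal{A}_i$ with the semi-inner product $\langle a,b\rangle=\varphi^{(i)}(b^*a)$, which is positive semidefinite by positivity of $\varphi^{(i)}$. Quotient by null vectors to obtain a pre-Hilbert space $H_i=\mathbb{C}\xi_i\oplus H_i^\circ$, where $\xi_i$ is the class of $1$ and $H_i^\circ$ its orthogonal complement.
\item Form the free product pre-Hilbert space $H=\mathbb{C}\xi\oplus\bigoplus H_{i_1}^\circ\otimes\cdots\otimes H_{i_n}^\circ$, with the tensor-product inner product, which is positive definite.
\item Let $\mathcal{A}_i$ act on $H$ by identifying $H\cong H_i\otimes H(\ell_i)$, where $H(\ell_i)$ is spanned by $\xi$ and the words not starting with $i$, and acting on the first factor by left multiplication. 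This is an algebraic action and needs no boundedness.
\item By the universal property, extend these actions to a $*$-homomorphism $\lambda:\mathcal{A}\to\mathrm{End}(H)$, and check that $\lambda$ maps adjoints to formal adjoints.
\item Verify $\varphi(a)=\langle\lambda(a)\xi,\xi\rangle$ on alternating centered words: both sides are $1$ on the empty word and $0$ otherwise.
\end{enumerate}

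This gives $\varphi(a^*a)=\|\lambda(a)\xi\|^2\geq0$. The delicate points are the well-definedness of the action on the quotient spaces and the compatibility of $\lambda$ with the involution. I would handle these by checking everything on elementary tensors, where the computation reduces to the single-factor GNS identity $\langle a b\,\xi_i,c\,\xi_i\rangle=\langle b\,\xi_i,a^*c\,\xi_i\rangle$.
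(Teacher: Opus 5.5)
The paper gives no proof of this statement. It quotes it from \cite{bose2021random} and uses it only as a black box, to produce the spaces $(\mathcal{A}_m,\varphi_m)$ in the proof of Lemma~\ref{lem:free.equiv}.

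Your proposal is correct and amounts to an actual proof. It is the standard free product construction: alternating tensors of the centered parts $\ker\varphi^{(i)}$, the state taken as the coefficient of the empty word, and associativity obtained from the left action on words. Freeness in the moment sense is read off in Step 3. Your Step 4, converting it to vanishing mixed cumulants, is genuinely needed, because the paper (following \cite{bose2021random}) defines freeness through cumulants.

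The only place you work harder than necessary is positivity. Step 3 already forces the following, for alternating centered words $w=a_1\cdots a_n$ and $v=b_1\cdots b_m$ with index sequences $(i_k)$ and $(j_k)$:
\begin{equation*}
\varphi(w^*v)=\delta_{nm}\prod_{k=1}^{n}\delta_{i_kj_k}\,\varphi^{(i_k)}(a_k^*b_k).
\end{equation*}
To see this, split $a_1^*b_1=\varphi^{(i_1)}(a_1^*b_1)1+\bigl(a_1^*b_1-\varphi^{(i_1)}(a_1^*b_1)1\bigr)$. The centered part leaves an alternating centered word, whose state vanishes, and you then induct. Consequently $\varphi(x^*x)$ splits over index sequences. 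Each piece is the sum of all entries of a Hadamard product of positive semidefinite Gram matrices $\bigl(\varphi^{(i_k)}(a_k^{(r)*}a_k^{(s)})\bigr)_{r,s}$, which is nonnegative by the Schur product theorem.

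This purely algebraic argument is, if I recall correctly, the one used for $*$-probability spaces in \cite{nica2006lectures}. It avoids the null-space quotients, the identification $H\cong H_i\otimes H(\ell_i)$, and the adjoint bookkeeping. On that last point, note that $\mathrm{End}(H)$ is not itself a $*$-algebra, so $\lambda$ should be regarded as a map into the adjointable operators.

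Your GNS route costs more but yields more. It gives a concrete representation of $(\mathcal{A},\varphi)$ on the free product pre-Hilbert space, which is what one needs to pass to the reduced C$^*$-free product.
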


Building upon the above concepts and results, we introduce two lemmas which serve as the building blocks of our main result.

\begin{lem}[Lemma 3.9.1 in \cite{bose2021random}]\label{lem:kreweras.moments}
Let $(\mathcal{A},\varphi)$ be an NCP. Suppose $\{a_i,1\leq i\leq n\}$ and $\{b_i,1\leq i\leq n\}$ are free. Then
\begin{align*}
    \varphi(a_1b_1\ldots a_nb_n) = \sum_{\pi\in\text{NC}(n)} \kappa_{\pi} [a_1,\ldots,a_n]\varphi_{K(\pi)}[b_1,\ldots,b_n].
\end{align*}
\end{lem}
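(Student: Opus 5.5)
We prove Lemma~\ref{lem:kreweras.moments} by the moment--cumulant formula together with the vanishing of mixed cumulants. Order the $2n$ arguments as $c_1=a_1,c_2=b_1,\ldots,c_{2n-1}=a_n,c_{2n}=b_n$. Möbius inversion of the definition of free cumulants gives
\[
\varphi(a_1b_1\cdots a_nb_n)=\sum_{\sigma\in \text{NC}(2n)}\kappa_\sigma[c_1,\ldots,c_{2n}].
\]
Since $\kappa_\sigma$ factors over the blocks of $\sigma$, and $\{a_i\}$ and $\{b_i\}$ are free, every block mixing odd and even positions contributes zero. Only partitions of the form $\sigma=\pi\cup\tau$ survive, with $\pi$ a partition of the odd positions (identified with $[n]$) and $\tau$ a partition of the even positions (identified with $[\bar n]$ in the interlaced order $1<\bar1<\cdots<n<\bar n$), such that $\pi\cup\tau$ is non-crossing. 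This rewrites the moment as
\[
\sum_{\pi\in\text{NC}(n)}\kappa_\pi[a_1,\ldots,a_n]\sum_{\substack{\tau\in\text{NC}(\bar n)\\ \pi\cup\tau\in \text{NC}(2n)}}\kappa_\tau[b_1,\ldots,b_n].
\]

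Fix $\pi$. The key combinatorial fact is that the admissible $\tau$ are exactly those with $\tau\le K(\pi)$. This follows from the definition of the Kreweras complement as the largest $\sigma$ compatible with $\pi$, together with the observation that $\pi\cup\tau$ is non-crossing if and only if every block of $\tau$ lies in a single block of $K(\pi)$. For the ``if'' direction, refining $K(\pi)$ cannot create crossings with $\pi$. For the ``only if'' direction, the join $\tau\vee K(\pi)$ is still compatible with $\pi$; maximality of $K(\pi)$ then forces $\tau\le K(\pi)$.

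Granting this, the inner sum is $\sum_{\tau\le K(\pi)}\kappa_\tau[b]$. The interval $[0,K(\pi)]$ in $\text{NC}(\bar n)$ factors as a product of $\text{NC}(V)$ over the blocks $V$ of $K(\pi)$, and the restriction of a non-crossing partition to a block remains non-crossing in the induced order. Applying the moment--cumulant formula blockwise, with $b_{i_1}\cdots b_{i_s}$ in increasing order, the inner sum becomes $\prod_{V\in K(\pi)}\varphi(V)[b]=\varphi_{K(\pi)}[b_1,\ldots,b_n]$, which gives the claimed identity.

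The main obstacle is the lattice fact $\{\tau:\pi\cup\tau \text{ non-crossing}\}=[0,K(\pi)]$, specifically that the join of two $\pi$-compatible partitions is again $\pi$-compatible. I would try to prove this directly: a crossing between $\pi$ and a block of $\tau\vee K(\pi)$ can be traced along a chain of overlapping blocks from $\tau$ and $K(\pi)$, which would force a crossing of $\pi$ with either $\tau$ or $K(\pi)$. Alternatively, I would take this fact from \cite{nica2006lectures} (Lecture 14). The rest, namely cumulant vanishing and the blockwise moment--cumulant formula, is routine.
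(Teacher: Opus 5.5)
Your proof is correct, and it is the standard argument behind the result the paper cites without proof (Lemma 3.9.1 of \cite{bose2021random}, the same computation as in Lecture 14 of \cite{nica2006lectures}): moment--cumulant expansion, vanishing of mixed cumulants, the identification $\{\tau : \pi\cup\tau \text{ non-crossing}\} = [0,K(\pi)]$, and the blockwise moment--cumulant formula. The step you flag as the main obstacle is in fact immediate. The paper defines $K(\pi)$ as the \emph{biggest}, i.e.\ greatest, compatible partition, so every compatible $\tau$ satisfies $\tau\le K(\pi)$ by definition, and only your easy ``refinement preserves compatibility'' direction needs an argument.
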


\begin{lem}\label{lem:free.equiv}
    Let $\{C_m^{(\nu)}\}_{1\leq \nu\leq k}$ be $m\times m$ independent
    matrices with i.i.d.  Gaussian entries with mean 0 and variance
    $m^{-1}$. Let $\{D_m^{(\nu)}\}_{1\leq \nu \leq \ell }$ be $m\times m$ deterministic matrices with $\lVert D_m^{(\nu)}\rVert\leq C$ for some constant $C$. Then there exists a sequence of NCP $(\mathcal{A}_m,\varphi_m)$, such that 
    there exist free circular elements $c_1\ldots,c_k\in\mathcal{A}_m$ and elements $d_m^{(1)},\ldots,d_{m}^{(\ell)}\in\mathcal{A}_m$ with the same {moments} 
    as  $\{D_m^{(\nu)}\}_{1\leq \nu \leq \ell }$
    w.r.t. $\frac{1}{m}\Tr $, {where $\{c_1,\ldots,c_k\}$ are also
      free of $\{d_m^{(1)},\ldots,d_m^{(\ell)}\}$}. Moreover, for all
    polynomials $\Pi$, as $m \to \infty$ it holds that
    \begin{align}\label{eq:conv.moments.deteqv}
        \left|\frac{1}{m}\mathbb{E}\Tr
      \Pi(\{{C}_m^{(\nu)},D_m^{(\nu')}\}_{1\leq \nu \leq k, 1\leq \nu' \leq
      \ell}) - \varphi_m(\Pi(\{c_\nu,d_m^{(\nu')}\}_{1\leq \nu \leq k,
      1\leq \nu' \leq \ell}))\right|\rightarrow 0.
    \end{align}
\end{lem}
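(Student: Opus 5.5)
The plan is to realize the limiting objects concretely via Voiculescu's asymptotic freeness for Gaussian matrices together with the free product construction of Theorem~\ref{thm:free.prod}. The space $(\mathcal{A}_m,\varphi_m)$ will be built as a free product of two pieces. The first is $(\mathcal{M}_m(\mathbb{C}),\frac1m\Tr)$ restricted to the $*$-algebra $\mathcal{D}_m$ generated by $D_m^{(1)},\ldots,D_m^{(\ell)}$. The second is a fixed $*$-probability space $(\mathcal{C},\tau)$ carrying $k$ free circular elements $c_1,\ldots,c_k$; for instance, the free product of $k$ copies of the space generated by $(s_1+is_2)/\sqrt2$ as in Definition~\ref{def:circular}. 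Theorem~\ref{thm:free.prod} then supplies $(\mathcal{A}_m,\varphi_m)$ in which the copy of $\mathcal{D}_m$ (elements $d_m^{(\nu)}$) is free from $\{c_1,\ldots,c_k\}$. Because $\varphi_m$ restricts to $\frac1m\Tr$ on $\mathcal{D}_m$, the $d_m^{(\nu)}$ have exactly the moments of the $D_m^{(\nu)}$. This settles the existence part, and it remains to prove \eqref{eq:conv.moments.deteqv}.

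By linearity, it suffices to treat monomials $W=C^{\epsilon_1}_{\nu_1}Q_1C^{\epsilon_2}_{\nu_2}Q_2\cdots C^{\epsilon_s}_{\nu_s}Q_s$. Here $\epsilon_j\in\{1,*\}$ and each $Q_j$ is a word in the $D$'s, where runs of $D$'s between consecutive Gaussian factors are absorbed into one $Q_j$ and empty runs are allowed as $Q_j=\Id$. I will expand $\frac1m\mathbb{E}\Tr W$ by Wick's formula. Only pairings of the Gaussian entries matching $C_\nu$ with $C_\nu^{*}$, or with the transposed index structure, contribute. A pairing $\pi$ of $[s]$ produces a trace product $m^{-1-s/2}\prod_{\text{cycles}}\Tr(\prod Q)$ whose cycles are the blocks of $K(\pi)$ (or $\pi^{-1}\gamma_s$ in general). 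Using $|\Tr(\prod Q)|\le m\,C^{\deg}$, the genus expansion gives the following: noncrossing $\pi$ contribute $\prod_{V\in K(\pi)}\frac1m\Tr(\prod_{V}Q)$ exactly, while crossing pairings are $O(m^{-2})$ uniformly. This uniformity holds because the operator-norm bound on $D_m^{(\nu)}$ bounds every normalized trace, and the number of pairings depends only on $s$.

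On the free side, I will apply Lemma~\ref{lem:kreweras.moments} with $a_j=c^{\epsilon_j}_{\nu_j}$ and $b_j=d$-words. The only nonzero free cumulants of free circulars are $\kappa_2(c_\nu,c_\nu^*)=\kappa_2(c_\nu^*,c_\nu)=1$. So $\varphi_m(w)=\sum_{\pi\in NC_2(s)}\prod\kappa_2\,\varphi_{K(\pi)}[b]$, where the sum runs over noncrossing pairings matching $c_\nu$ with $c_\nu^*$. Since $\varphi_m=\frac1m\Tr$ on $\mathcal{D}_m$, this coincides term by term with the noncrossing part of the Wick expansion. I must also check that real Gaussian pairings of $C$ with $C$ (not $C^*$) are non-planar and hence subleading; this corresponds to the fact that $\kappa_2(c,c)=0$. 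Consequently the difference in \eqref{eq:conv.moments.deteqv} is $O(m^{-1})$ (or $O(m^{-2})$) and tends to $0$.

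The main obstacle is the careful genus-counting bookkeeping. It must show that every non-noncrossing or $C$--$C$ pairing loses at least one factor of $m$ while the deterministic traces stay bounded, uniformly in $m$, since the $D$'s themselves vary with $m$. I would handle this with the standard permutation identity $\#(\pi^{-1}\gamma)+\#(\pi)\le s+1$, with equality iff $\pi$ is noncrossing, together with the crude bound $|\frac1m\Tr(\cdot)|\le C^{\deg}$. Alternatively, I could simply cite the asymptotic freeness of Gaussian and deterministic matrices (Mingo--Speicher, Theorem 4.4.x) in its non-convergent ``$D_m$ only bounded'' form. That form follows from the same estimate, applied along subsequences where the $D$-moments converge.
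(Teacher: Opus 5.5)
Your proposal is correct and is essentially the paper's argument. The paper likewise builds $(\mathcal{A}_m,\varphi_m)$ as a free product via Theorem~\ref{thm:free.prod}, so that $\varphi_m$ equals $\frac{1}{m}\Tr$ on the $D$-algebra, and then adapts the Wick/pair-partition proof of Bose's Theorem 11.2.1 (with $\rho=0$), using the exact finite-$m$ identity $(\varphi_m)_\sigma[a_m^{(1)},\ldots]=\frac{1}{m}\Tr_\sigma[A_m^{(1)},\ldots]$ in place of a limit, so that noncrossing pairings match the free side term by term and all other pairings vanish asymptotically; you make the genus count explicit (including the subleading real $C$--$C$ pairings) rather than citing Bose, and your subsequence remark is a valid shortcut for deducing the non-convergent form from the standard asymptotic freeness theorem.
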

\begin{proof}
    We use Theorem \ref{thm:free.prod} to prove the existence of the sequence $(\mathcal{A}_m,\varphi_m)$, and adapt the proof of Theorem 11.2.1 in \cite{bose2021random} to establish \eqref{eq:conv.moments.deteqv}. Details are given in Supplementary Appendix \ref{app:proof.free.equiv}.
\end{proof}

\subsection{Deterministic equivalent spectral distribution}\label{subsec:deteqv}
We now describe the deterministic equivalent law
$\deteqv{F}_n$ of the empirical spectral distribution $F^{S_n}$. The
resulting deterministic equivalent moments  
\begin{align}\label{eq:deteqv.mu}
    \deteqv{\mu}_n =\left(\int xd\deteqv{F}_n(x),\ldots,\int x^Dd\deteqv{F}_n(x)\right).
\end{align}
serve as the centering sequence for the central limit theorem established in Section~\ref{subsec:clt}. 

Since $B$ is PSD, write it as $B = \Gamma \Gamma^T$. For each $r=1,\ldots,k$, define 
\begin{align}\label{eq:Gamma_r.main}
    \Gamma_r = \Gamma^T U_r U_r^T \Gamma.
\end{align}
 Under the simultaneous diagonalizability assumption in
 Assumption~\ref{ass:param}, we have $\Sigma_r=V\Lambda_rV^T$ for some
 orthonormal matrix $V\in\mathbb{R}^{p\times p}$. Define
 $m_n(z)=p^{-1}\Tr(S_n-z\Id_p)^{-1}=\int (x-z)^{-1}dF^{S_n}(x)$ as the
 Stieltjes transform of $F^{S_n}$.
Appendix \ref{app:deteqv.mmts}  establishes the
 following result (by reducing to the special case of Theorem 4.1 in
 \cite{fan2017eigenvalue} studied in \cite{xie2024cltlinearspectralstatistics}.)
\begin{lem}\label{lem:deteqv.mmts}
    Under Assumptions \ref{ass:minimum} and \ref{ass:param}, for each
    $z\in\mathbb{C}^+$, there exists unique z-dependent values
    $\deteqv{g}_i^{(r)}(z)$ with
    $z \deteqv{g}_1^{(r)}(z) \in\overline{\mathbb{C}^+}$
    and
     $ \deteqv{g}_2^{(r)}(z) \in\mathbb{C}^+\cup\{0\}$,
    such that:
\begin{align*}
&\begin{cases}
    z\deteqv{g}_1^{(r)}(z)&=-\frac{1}{p}\Tr \left(({\sum_{s=1}^k}\deteqv{g}_2^{(s)}(z)\Lambda_s+\Id )^{-1}\Lambda_r\right),r=1,\ldots,k\\
    z\deteqv{g}_2^{(r)}(z)&=-\frac{1}{p}\Tr \left(({\sum_{s=1}^k}\deteqv{g}_1^{(s)}(z)\Gamma_s+\Id )^{-1}\Gamma_r\right),r=1,\ldots,k.
    \end{cases}
\end{align*}
The function $\tilde{m}_n:\mathbb{C}^+\rightarrow\mathbb{C}^+$ given by
\begin{equation*}
  z\deteqv{m}_n(z)=-(1-n/p)-\frac{1}{p}\Tr \left({\textstyle\sum}_{r=1}^k\deteqv{g}_1^{(r)}(z)\Gamma_r+\Id  \right)^{-1}
\end{equation*}
defines the Stieltjes transform of a probability measure $\tilde{F}_n$ on $\mathbb{R}$ such that such that 
\begin{align}\label{eq:deteqv.Fn}
    F^{S_n}(x)-\deteqv{F}_n(x)\rightarrow 0
\end{align}
almost surely at each $x\in\mathbb{R}$. Moreover,
    \begin{align*}
    &m_n(z)-\deteqv{m}_n(z)\rightarrow0,\quad \hat{\mu}_n-\deteqv{\mu}_n\rightarrow0,
    \end{align*}
    pointwise almost surely.
\end{lem}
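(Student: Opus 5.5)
The plan is to reduce $S_n$ to a generalized sample-covariance form covered by Theorem 4.1 of \cite{fan2017eigenvalue}, specialized as in \cite{xie2024cltlinearspectralstatistics}.

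\textbf{Step 1: reduction to a separable-covariance form.} Fixed effects vanish since $BX=0$. Writing $B=\Gamma\Gamma^T$, we get
\[
S_n = p^{-1}\Big(\sum_r U_r\alpha_r\Big)^T\Gamma\Gamma^T\Big(\sum_r U_r\alpha_r\Big).
\]
Write $\alpha_r = Z_r\Sigma_r^{1/2}$ with $Z_r$ having i.i.d.\ $N(0,1)$ entries. By Assumption \ref{ass:param}, $\Sigma_r^{1/2}=V\Lambda_r^{1/2}V^T$. Conjugating by the fixed orthogonal $V$ leaves $p^{-1}\Tr(S_n-z)^{-1}$ and all traces unchanged. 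By rotation invariance, $Z_rV$ is again Gaussian i.i.d. Hence $S_n$ is unitarily equivalent to
\[
p^{-1}\sum_{r,s}\Lambda_r^{1/2}Z_r^TU_r^T\Gamma\Gamma^TU_sZ_s\Lambda_s^{1/2}.
\]
This is the Gram matrix $W^TW$ of $W=\sum_r \Gamma^TU_rZ_r\Lambda_r^{1/2}$. This sum-of-independent-separable-Gaussian form is exactly the model in \cite{fan2017eigenvalue}. The ``left'' matrices $\Gamma^TU_r$ enter only through $\Gamma_r=\Gamma^TU_rU_r^T\Gamma$, and the right matrices are the commuting diagonal $\Lambda_r$.

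\textbf{Step 2: fixed-point equations.} Apply Theorem 4.1 of \cite{fan2017eigenvalue}, as specialized in \cite{xie2024cltlinearspectralstatistics}. Its hypotheses are the bounded norms and proportional dimensions of Assumption \ref{ass:minimum}, together with commuting right-hand covariances. This yields existence and uniqueness of $(\deteqv g_1^{(r)},\deteqv g_2^{(r)})$ in the stated domains, solving the displayed coupled system. It also yields that $\deteqv m_n$ is a Stieltjes transform with $m_n(z)-\deteqv m_n(z)\to0$ a.s.\ for each $z\in\mathbb{C}^+$. The term $-(1-n/p)$ arises from passing between the $p\times p$ matrix $W^TW$ and the $n\times n$ companion $WW^T$, whose resolvent trace is the expression in $\deteqv g_1$. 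The main bookkeeping obstacle is matching normalizations: the paper uses $p^{-1}$ and $\Gamma$ of size $n\times\mathrm{rank}$, and the sign conventions of $\deteqv g_i$ must be checked against the cited theorem. I would handle this by padding $\Gamma$ to $n\times n$ and rescaling $z$ so that the theorem applies verbatim.

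\textbf{Step 3: distribution functions and moments.} Convergence of $m_n-\deteqv m_n$ pointwise on $\mathbb{C}^+$ (a.s., on a countable dense set, then everywhere by equicontinuity) gives vague convergence of $F^{S_n}-\deteqv F_n$. For probability measures this gives weak convergence, hence \eqref{eq:deteqv.Fn} at every $x$, since the limits have no atoms except possibly at the origin, which is handled by continuity of the limit.

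For the moments, note that $\|S_n\|\le p^{-1}\|B\|\,\|\sum_r U_r\alpha_r\|^2$ is a.s.\ bounded by a constant $K$ for large $n$, by standard Gaussian operator-norm bounds (e.g., Bai–Yin type applied to each $Z_r$, with Assumption \ref{ass:minimum}). The deterministic $\deteqv F_n$ is also supported in $[0,K]$, since it is a limit of laws of such matrices, or directly via the $\mathbb{E}$-equivalent in Lemma \ref{lem:free.equiv}. On a compact interval, $x^l$ is bounded and continuous. Weak convergence with uniformly compact supports therefore gives $\int x^l\,dF^{S_n}-\int x^l\,d\deteqv F_n\to0$, i.e.\ $\hat\mu_n-\deteqv\mu_n\to0$ a.s.
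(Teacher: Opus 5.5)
Your reduction is the paper's. You rotate by the common eigenvector matrix $V$ and write $S_n=p^{-1}W^TW$ with $W=\sum_r\Gamma^TU_rZ_r\Lambda_r^{1/2}$, whose law depends on $\Gamma^TU_r$ only through $\Gamma_r$; the paper makes this explicit by noting that $W$ has the same law as $\sum_r\Gamma_r^{1/2}X_r\Lambda_r^{1/2}$. You then apply Lemma~2.2 of \cite{xie2024cltlinearspectralstatistics} to the $n\times n$ companion $p^{-1}WW^T$. Finally you return to $S_n$ via $F^{S_n}=(1-n/p)I_{[0,\infty)}+(n/p)F^{p^{-1}WW^T}$, which produces the $-(1-n/p)$ term.

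\textbf{The moment step.} This is where you genuinely differ. The paper reads $\hat\mu_n-\tilde\mu_n\to0$ directly off the asymptotic equivalence in $\mathcal{D}$-law of \cite{fan2017eigenvalue} (Definition~3.4, proof of Corollary~3.10), which is a statement about moments by construction. You instead combine weak closeness of $F^{S_n}$ and $\tilde F_n$ with uniformly bounded supports. That route is more elementary, but it rests entirely on $\tilde F_n$ being supported in a fixed $[0,K]$, and neither justification you give works as written:
\begin{itemize}
\item In your setup $\tilde F_n$ is not a limit of anything.
\item Lemma~\ref{lem:free.equiv} bounds $\mathbb{E}$-moments of the random matrix, not moments of $\tilde F_n$, unless you already know $\tilde F_n$ is the corresponding free law.
\end{itemize}
Weak closeness plus $\|S_n\|\le K$ cannot exclude a vanishing amount of $\tilde F_n$-mass drifting to infinity, and that would break moment convergence.

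A clean repair is amplification. Replace $\Gamma$, $U_r$, $\Sigma_r$ by their Kronecker products with $\Id_N$. This leaves the normalized-trace fixed-point equations, and hence $\tilde F_n$, unchanged, and it keeps Assumption~\ref{ass:minimum} with the same constants. Letting $N\to\infty$ with $n$ fixed then exhibits $\tilde F_n$ as the weak limit of spectral distributions whose norms are eventually at most $K$. Alternatively, use that in Fan--Johnstone's construction $\tilde F_n$ is the law of $w^*w$ for a bounded operator $w$, which is consistent with the contour in Theorem~\ref{thm:clt}.

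\textbf{The pointwise CDF claim.} Your derivation of \eqref{eq:deteqv.Fn} at every $x$ does not hold up. There is no limiting measure whose atoms you can invoke. Moreover, $\int f\,dF^{S_n}-\int f\,d\tilde F_n\to0$ for bounded continuous $f$ does not give $F^{S_n}(x)-\tilde F_n(x)\to0$ at every $x$ without uniform-in-$n$ continuity of $\tilde F_n$ there: point masses at $1\pm 1/n$ already fail at $x=1$. Take this pointwise statement from the cited Lemma~2.2, as the paper does.
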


\subsection{Central Limit Theorem of the empirical moments $\hat{\mu}_n$}\label{subsec:clt}
To derive the central limit theorem for the empirical moments $\hat{\mu}_n$, we invoke Theorem 2.3 of \cite{xie2024cltlinearspectralstatistics}, which establishes a CLT for linear spectral statistics of high-dimensional covariance-type matrices. Let $\lambda_{j,r}$ denote the $j$-th eigenvalue in the diagonal matrix $\Lambda_r$, and define the auxiliary matrix
\begin{align}\label{eq:Sn_prime}
    S_n'=\frac{1}{p}\sum_{j=1}^pT_j^{1/2}x_jx_j^TT_j^{1/2},\quad
  T_j=\sum_{r=1}^k\lambda_{j,r}\Gamma_r, \quad x_j\sim N(0,\Id_n),
\end{align}
where $\Gamma_r$ is defined in (\ref{eq:Gamma_r.main}) and the vectors $\{x_j\}$ are independent. Define the constants
\begin{align*}
    s_{\Gamma}:= \lVert B\rVert^{1/2} \lim\sup_{n,r}\lVert U_r\rVert, \quad s_{\Lambda}:=\lim\sup_{n,r}\lVert \Sigma_r\rVert^{1/2} ,
\end{align*} 
which satisfy $\lim\sup_{n,r}\lVert \Gamma_r\rVert^{1/2}\leq s_{\Gamma}$,  $\lim\sup_{n,r}\lVert \Lambda_r\rVert^{1/2}\leq s_{\Lambda}$.  This matrix has the same structural form as the class of matrices studied in \cite{xie2024cltlinearspectralstatistics}.
Applying Theorem 2.3 of \cite{xie2024cltlinearspectralstatistics} to $S_n'$ yields the following result:
 \begin{theorem}\label{thm:clt}
       Under Assumptions \ref{ass:minimum} and \ref{ass:param}(1), as $n \to \infty$, let $\hat\mu_n$ be the first $D$ empirical moments of $S_n$, and $\deteqv{\mu}_n$ be the deterministic equivalent moments given in \eqref{eq:deteqv.mu}, then 
\begin{align*}
    P_n^{-1/2}\left(p(\hat{\mu}_n-\deteqv{\mu}_n) - R_n\right)
\end{align*}
converges weakly to $N(0,\Id _D)$, where the bias vector $R_n \in \mathbb{R}^D$ and covariance matrix $P_n \in \mathbb{R}^{D \times D}$ have entries
\begin{align}
    R_{n}[\ell]&=-\frac{1}{2\pi i}\oint z^{\ell}\rho_n(z)dz,\quad P_n[\ell,m]=-\frac{1}{4\pi^2}\oint\oint z_1^{\ell}z_2^m\sigma_n^2(z_1,z_2)dz_1dz_2.\label{eq:main.cov}
\end{align}
The functions $\rho_{n}$ and $\sigma_{n}^{2}$ are defined in
Lemma~2.4 of \cite{xie2024cltlinearspectralstatistics},
applied to the matrix $S_{n}'$ of \eqref{eq:Sn_prime}; they depend on the matrices $\{\Lambda_{r},\Gamma_{r}\}_{r=1}^{k}$, with explicit formulas recalled in Appendix~\ref{sec:asym.bias.cov}. The contours are closed, positively oriented curves in the complex plane, each enclosing the interval $[0,k^2(1+\sqrt{C})^2s_{\Lambda}^2 s_{\Gamma}^2]$, with the two contours in \eqref{eq:main.cov} taken to be non-overlapping.
\end{theorem}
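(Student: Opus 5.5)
The plan is to reduce $S_n$, in distribution and up to trivial spectral equivalence, to the matrix $S_n'$ of \eqref{eq:Sn_prime}, and then read off the CLT from Theorem~2.3 of \cite{xie2024cltlinearspectralstatistics} applied to the test functions $f_\ell(x)=x^\ell$, $\ell=1,\ldots,D$.

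\emph{Step 1: reduction to $S_n'$.} Take $\Gamma=B^{1/2}$, so $\Gamma\in\mathbb{R}^{n\times n}$ and $B=\Gamma\Gamma^T$. Since $BX=0$ and $B$ is PSD, we have $\Gamma^TX=0$, hence $\Gamma^TY=\sum_r\Gamma^TU_r\alpha_r$.

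Under Assumption~\ref{ass:param}(1), write $\Sigma_r=V\Lambda_rV^T$. We may realize $\alpha_r=Z_r\Lambda_r^{1/2}V^T$, where the $Z_r\in\mathbb{R}^{I_r\times p}$ are independent with i.i.d.\ $N(0,1)$ entries. Set $W:=\Gamma^TYV=\sum_r\Gamma^TU_rZ_r\Lambda_r^{1/2}$. Its $j$-th column is
\[
w_j=\sum_r\lambda_{j,r}^{1/2}\,\Gamma^TU_r z_{r,j},
\]
where $z_{r,j}$ denotes the $j$-th column of $Z_r$. The columns $w_j$ are independent across $j$ and Gaussian with covariance $\sum_r\lambda_{j,r}\Gamma_r=T_j$. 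Hence $(w_1,\ldots,w_p)\overset{d}{=}(T_1^{1/2}x_1,\ldots,T_p^{1/2}x_p)$.

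Now $S_n=V(p^{-1}W^TW)V^T$. The nonzero eigenvalues of $p^{-1}W^TW$ coincide with those of $p^{-1}WW^T\overset{d}{=}S_n'$. Therefore $\Tr(S_n^\ell)\overset{d}{=}\Tr(S_n'^\ell)$ jointly in $\ell$, and $\hat\mu_n$ has the same joint law as the vector of moments $p^{-1}\Tr(S_n'^\ell)$. Since the theorem is a distributional statement, it suffices to prove it for $S_n'$.

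\emph{Step 2: verifying the hypotheses of \cite{xie2024cltlinearspectralstatistics}.} The ratio $n/p$ is bounded above and below by Assumption~\ref{ass:minimum}(1). The $T_j$ are PSD with $\|T_j\|\le\sum_r\|\Lambda_r\|\|\Gamma_r\|\le k s_\Lambda^2s_\Gamma^2$ eventually, using Assumption~\ref{ass:minimum}(2). The $x_j$ are standard Gaussian, so no fourth-moment correction term is needed. A standard norm bound then gives $\|S_n'\|\le k^2(1+\sqrt C)^2s_\Lambda^2s_\Gamma^2+o(1)$ almost surely, so the spectrum eventually lies inside the stated contours. I also need to check that the deterministic-equivalent system in Lemma~\ref{lem:deteqv.mmts} is exactly the one used in \cite{xie2024cltlinearspectralstatistics} for $S_n'$ with $T_j=\sum_r\lambda_{j,r}\Gamma_r$, so that their centering is $\deteqv\mu_n$. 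This holds because the traces $p^{-1}\Tr((\sum_s g_s\Lambda_s+\Id)^{-1}\Lambda_r)$ are the diagonal sums over $j$ that appear in their equations.

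\emph{Step 3: applying the CLT.} Each $f_\ell$ is entire. The CLT for the joint vector of linear spectral statistics $(\Tr f_\ell(S_n')-p\int f_\ell\,d\tilde F_n)_\ell$ then gives the bias and covariance through Cauchy's formula. This yields the contour expressions \eqref{eq:main.cov} with $\rho_n,\sigma_n^2$ from their Lemma~2.4. Joint normality of the $D$ coordinates follows from Cramér--Wold, applying the theorem to $\sum_\ell a_\ell f_\ell$.

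\emph{Main obstacle.} I expect the delicate point to be normalization. The result is stated for $P_n^{-1/2}(\cdot)$, which requires $P_n$ to be nondegenerate, or at least requires the statement to be read in the sense of matching characteristic functions along subsequences. I will follow \cite{xie2024cltlinearspectralstatistics}: their result is stated with $n$-dependent bias and covariance, and gives convergence of $d_{\mathrm{BL}}$ distance to $N(R_n,P_n)$. I will then invoke a lower bound on $\lambda_{\min}(P_n)$, or add it as an implicit nondegeneracy condition, to pass to the standardized form.
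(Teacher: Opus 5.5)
Your proposal is correct and follows essentially the paper's route. The paper's appendix proof of Lemma~\ref{lem:deteqv.mmts} likewise uses $BX=0$, $B=\Gamma\Gamma^T$ and the common eigenbasis $V$ to write $S_n=p^{-1}\check{\mathsf{Y}}^T\check{\mathsf{Y}}$, with $\check{\mathsf{Y}}=\mathsf{Y}V^T$ and $\mathsf{Y}$ having independent Gaussian columns of covariance $T_j=\sum_r\lambda_{j,r}\Gamma_r$; it then passes to the companion matrix $S_n'$, whose power traces coincide with those of $S_n$ and whose centering matches via $p\int x^\ell d\tilde F_n=n\int x^\ell d\tilde F_n^\circ$, and invokes Theorem~2.3 of \cite{xie2024cltlinearspectralstatistics} with monomial test functions. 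The nondegeneracy of $P_n$ that you flag is not treated separately in the paper but is implicitly inherited from the cited theorem, so your caveat is fair rather than a gap in your argument.
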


The deterministic equivalent Stieltjes transform $\deteqv{m}_n(z)$ and the functions $(\deteqv{g}_1^{(r)},\deteqv{g}_2^{(r)}), r=1,\ldots,k$ can be numerically evaluated following the procedure described in Theorem 1.5 of \cite{fan2017eigenvalue}. Additionally, the contour integrals defining $\deteqv{\mu}_n$, $P_n$, and $R_n$ can be approximated using the trapezoidal rule. Further details are provided in Section 2.4 of \cite{xie2024cltlinearspectralstatistics}.

\section{Proof of main results}\label{sec:proof.main}

\subsection{Proof of Lemma \ref{lem:map.asym}(a)}

First, we establish several useful properties of the Kreweras complement.

\begin{lem} \label{lem:even.odd}
    Let $\pi \in \text{NC}_2(2l)$ be a non-crossing pair partition of $\{1, 2, \dots, 2l\}$, and $K(\pi)$ be its Kreweras complement. Then the following properties hold:
    \begin{enumerate}
    \item[(a)]  Each pair in $\pi$ has one even and one odd
      element.
      We denote the odd element paired with $2 i$ by $2
      i' - 1$. Thus $\pi$ determines a bijection $i \to i'$
      on $[l]$.
    \item[(b)]  Every block in the Kreweras complement $K(\pi)$
      is entirely even (consists only of even elements) or entirely odd.
    \item[(c)] If $K(\pi)(i) = j$, viewing $K(\pi)$ as a permutation of
      $[2l]$, then $\{ i+1, j \}$ is a pair in
      $\pi$.
      If $\{2i_1,\ldots,2i_t\}$, $i_1<\ldots<i_t$ is an even block in
      $K(\pi)$, then for the pairs in $\pi$ we have
      \begin{equation*}
    i_2' = i_1 + 1, \quad \ldots, \quad  i_t' = i_{t-1}+1, \quad  i_1' = i_t + 1.
  \end{equation*}
  If $\{2i_1-1,\ldots,2i_t-1\}$, $i_1<\ldots<i_t$ is an odd block in
      $K(\pi)$, then
      \begin{equation*}
    i_1' = i_2, \quad \ldots, \quad  i_{t-1}' = i_t, \quad  i_t' = i_1.
  \end{equation*}
    \end{enumerate}
    \end{lem}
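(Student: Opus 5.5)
The plan is to read everything off the permutation identity $K(\pi)=\pi\gamma_{2l}$ from Lemma~\ref{lem:kreweras}. We compose right to left, so that $K(\pi)(i)=\pi(i+1)$ with indices taken modulo $2l$, and $2l+1\equiv 1$. We also use two facts about a pair partition viewed as a permutation: it is an involution, and for a non-crossing partition each block is a cycle that runs through its elements in increasing order before wrapping around. Parts (a) and (b) are parity arguments; part (c) is then bookkeeping.

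For (a), take a pair $\{a,b\}\in\pi$ with $a<b$. By non-crossingness, no element strictly between $a$ and $b$ can be paired with an element outside $[a,b]$. Hence $\{a+1,\ldots,b-1\}$ is a union of pairs of $\pi$, so $b-a-1$ is even and $a,b$ have opposite parity. Each even $2i$ therefore has a unique odd partner, which we write $2i'-1$. Since $\pi$ is a bijection, $i\mapsto i'$ is a bijection of $[l]$.

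For (b), take any $i$. The element $i+1$ has the opposite parity to $i$ (this includes the wrap $2l\to 1$, since $2l$ is even and $1$ is odd). By (a), $\pi(i+1)$ has the opposite parity to $i+1$. So $K(\pi)(i)=\pi(i+1)$ has the same parity as $i$. Thus $K(\pi)$ preserves parity, and each of its cycles, that is each block, is entirely even or entirely odd.

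For (c), the first claim is immediate: $K(\pi)(i)=j$ means $\pi(i+1)=j$, so $\{i+1,j\}\in\pi$.

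Now let $\{2i_1,\ldots,2i_t\}$ be an even block with $i_1<\cdots<i_t$. Since $K(\pi)\in\mathrm{NC}_2$-type non-crossing form (Lemma~\ref{lem:kreweras}), its cycle on this block is $2i_s\mapsto 2i_{s+1}$ and $2i_t\mapsto 2i_1$. The first claim then gives the pairs $\{2i_s+1,\,2i_{s+1}\}$. Writing $2i_s+1=2(i_s+1)-1$ and using the definition of $'$ yields $i_{s+1}'=i_s+1$, and the wrap gives $i_1'=i_t+1$, read with $l+1\equiv 1$.

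For an odd block $\{2i_1-1,\ldots,2i_t-1\}$, the cycle $2i_s-1\mapsto 2i_{s+1}-1$ gives the pairs $\{2i_s,\,2i_{s+1}-1\}$. Hence $i_s'=i_{s+1}$, and the wrap gives $i_t'=i_1$.

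The only delicate point is the claim that the cycle of $\pi\gamma_{2l}$ on a block visits the block's elements in increasing cyclic order. This is what lets us match the permutation $\pi\gamma_{2l}$ with the partition-to-permutation convention fixed before Lemma~\ref{lem:kreweras}. We take it from Lemma~\ref{lem:kreweras} together with the non-crossing property of $K(\pi)$. We also need to track the modular wrap at $2l$ consistently with the convention $r_{l+1}=r_1$.
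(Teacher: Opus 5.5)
Your proposal is correct and follows essentially the same route as the paper: non-crossingness forces opposite parities in each pair, $K(\pi)=\pi\gamma_{2l}$ preserves parity, and the pairs $\{i+1,K(\pi)(i)\}$ are read off the increasing cyclic traversal of each block of $K(\pi)$. You use $K(\pi)(i)=\pi(i+1)$ directly where the paper inverts via $\pi=K(\pi)\gamma_{2l}^{2l-1}$, and your interval argument for (a) is a cleaner version of the paper's circle argument. One small slip is that $K(\pi)$ is not a pair partition (it has $l+1$ blocks), so the increasing-traversal fact should be credited to the permutation form of Lemma~\ref{lem:kreweras} applied to the non-crossing partition $K(\pi)$, which is what the paper uses implicitly; your explicit treatment of the wrap $l+1\equiv 1$ in $i_1'=i_t+1$ is a welcome addition.
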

    \begin{proof}
     (a) Consider placing the integers ${1, 2, \dots, 2l}$ around a circle in clockwise order. Suppose, for the sake of contradiction, that there exists a pair $(i, j) \in \pi$ with $i < j$, such that both $i$ and $j$ are of the same parity—that is, both even or both odd. Then the number of elements lying strictly between $i$ and $j$ along the circular order (in either the clockwise or counter-clockwise direction) must be odd. This implies that the edge connecting $i$ and $j$ would necessarily intersect with at least one other edge formed by the remaining pairs. This introduces a crossing, contradicting the assumption that $\pi$ is non-crossing. It follows that each pair must consist of one index from an even position and one from an odd position.
     
    \vspace{0.3cm}
    
    \noindent (b) Recall from Lemma~\ref{lem:kreweras} that the Kreweras complement is given by $K(\pi)=\pi\gamma_{2l}$, where $\gamma_{2l}$ is the cyclic permutation $1\rightarrow 2\rightarrow\ldots\rightarrow 2l\rightarrow 1$. For any even index $2i$, we have $\gamma_{2l}(2i)=2i+1$. From the arguments above, since $\pi$ pairs even indices with odd indices, we have $\pi(2i+1)=2j$ for some $j$. Therefore, $\pi\gamma_{2l}(2i)=2j$. That is, the Kreweras complement $K(\pi)$ maps the even index $2i$ to another even index $2j$. Similarly, for any even index $2i+1$, we have $\gamma_{2l}(2i+1)=2i+2$, and $\pi$ maps the even index $2i+2$ to an odd index $2j+1$. As a result, in the Kreweras complement $K(\pi)$, the index $2i+1$ is connected to another odd index $2j+1$. It follows that every block in the Kreweras complement $K(\pi)$ consists entirely of either even or odd indices, completing the proof.
     
    \vspace{0.3cm}
    
    \noindent (c) By Lemma \ref{lem:kreweras}, the non-crossing pair
    partition $\pi$ can be recovered from its Kreweras complement
    $K(\pi)$ with $\pi=K(\pi)\gamma_{2l}^{2l-1}$, where
    $\gamma_{2l}^{2l-1}$ is the cyclic permutation $1\rightarrow
    2l\rightarrow2l-1\rightarrow\ldots\rightarrow 2\rightarrow 1$.
Applying the recovery formula, we have
   \begin{equation*}
     \pi(i+1) = K(\pi) (\gamma_{2l}^{2l-1}(i+1)) = K(\pi)(i) = j.
   \end{equation*}
Applying this to the even block, if $K(2i) = 2j$, then $\{ 2i + 1,
2j \}$ is a pair of $\pi$ and so $j' = i+1$, yielding the first
display.
For the odd block, if $K(2i-1) = 2j-1$, then $\{ 2i, 2j-1\}$ is a pair
of $\pi$ and $i' = j$, which gives the second display.

    \end{proof}

\begin{proof}
For each $l$, set $X=S_n^l$. By Cauchy--Schwarz for the trace inner product,
\begin{align*}
    \E [|\hat{\mu}_{n}|^2]= \frac{1}{p^2}\E[|\Tr X|^2]\leq \frac{1}{p}\E[\Tr(X^2)]=\E[\hat{\mu}_{n,2l}].
\end{align*}
Using the asymptotic approximation in \eqref{eq:lem.kreweras.Sigma}, we obtain
\begin{align*}
    \sup_n\E[|\hat{\mu}_{nl}|^2]<\infty.
\end{align*}
Hence $\{\hat{\mu}_{n,l}\}_n$ is uniformly integrable. Since
$\hat{\mu}_{n,l}-\tilde{\mu}_{n,l}\to 0$ almost surely by Lemma~\ref{lem:deteqv.mmts},
we conclude that 
\begin{equation}  \label{eq:concentrat}
    \hat{\mu}_n = \E\hat{\mu}_n+o_{a.s.}(1).
\end{equation}
It thus remains to establish the mapping for $\mathbb{E}[\hat{\mu}_{n,l}]$.

Write $\alpha_r$ in model (\ref{eq:general_model}) as $\alpha_r\deq G_r\Sigma_r^{1/2}$, where $G_r$ is a $I_r\times p$ matrix with i.i.d. standard Gaussian entries, then we can also express $S_n$ as
\begin{align*}
    S_n\deq \frac{1}{p}\sum_{r=1}^k\sum_{s=1}^k\Sigma_r^{1/2}G_r^TF_{rs}G_s\Sigma_s^{1/2},\quad F_{rs}:=U_r^TBU_s.
\end{align*}
Note here the matrices $\{G_r\}_r$, $\{F_{rs}\}_{rs}$ are rectangular. Let $m=\max\{I_1,\ldots,I_k,p\}$. For convenience, define
\begin{align*}
\Tilde{S}_n&=\frac{1}{m}\sum_{r=1}^k\sum_{s=1}^k\Tilde{\Sigma_r}^{1/2}\tilde{G_r}^T\Tilde{F}_{rs}\Tilde{G}_s\Tilde{\Sigma}_s^{1/2},
\end{align*}
where $\Tilde{G}_r$ are independent $m\times m$ matrices with i.i.d. standard Gaussian entries, and 
\begin{align}\label{eq:tilde.U.F.Sigma} 
    \Tilde{\Sigma}_{r}=\begin{bmatrix}
        \Sigma_r &0\\
        0&0
    \end{bmatrix}\in\mathbb{R}^{m\times m},\quad \Tilde{F}_{rs}=\Tilde{U}_{r}^TB\Tilde{U}_{s}\in\mathbb{R}^{m\times m},\quad  \Tilde{U}_{r}=\begin{bmatrix}
        U_r&0
    \end{bmatrix}\in\mathbb{R}^{n\times m}.
\end{align}

It is easy to see that
\begin{align*}
\Tilde{S}_n=\frac{p}{m}
\begin{bmatrix}
\Id _p\\
0_{m-p,p}
\end{bmatrix}S_n\begin{bmatrix}
\Id _p &0_{p,m-p}
\end{bmatrix}.
\end{align*}
As a result, for any positive integer $l>0$, we have
\begin{align*}
    \Tr S_n^l
    &=\left(\frac{m}{p}\right)^l\Tr \Tilde{S}_n^l.
\end{align*}

Define $\tilde{C}_r:=m^{-1/2}\tilde{G}_r$. By Lemma
\ref{lem:free.equiv}, there exists a sequence of non-commutative
probability spaces $(\mathcal{A}_n,\varphi_n)$ containing elements
$\{b_{rs}\},\{d_{rs}\}_{r,s=1}^k$ and free circular elements
$\{c_1,\ldots,c_k\}$ such that the joint distribution of
$\{b_{rs}\},\{d_{rs}\}_{r,s=1}^k$ with respect to $\varphi_n$
 \iain{asymptotically}  coincides with that of
$\{\tilde{F}_{rs}\},\{\tilde{\Sigma}_{s}^{1/2}\tilde{\Sigma}_{r}^{1/2}\}_{r,s=1}^k$
with respect to $\bar{\varphi} = m^{-1} \E \Tr$, and
 \begin{align}
 &\left(\frac{p}{m}\right)^lm^{-1}\mathbb{E}\Tr
   S_n^l=m^{-1}\mathbb{E}\Tr\Tilde{S}_n^l
   = \bar{\varphi}(\Tilde{S}_n^l)  \label{eq:mean-1} \\
     =
&\sum_{r_1,\ldots,r_l=1}^k\sum_{s_1,\ldots,s_l=1}^k
  \bar{\varphi}\left(\tilde{C}_{r_1}^T\Tilde{F}_{r_1s_1}\tilde{C}_{s_1}\Tilde{\Sigma}_{s_1}^{1/2}...\Tilde{\Sigma}_{r_l}^{1/2}\tilde{C}_{r_l}^T\Tilde{F}_{r_ls_l}\tilde{C}_{s_l}\Tilde{\Sigma}_{s_l}^{1/2}\Tilde{\Sigma}_{r_1}^{1/2}
  \right)  \notag \\
     =&\sum_{r_1,\ldots,r_l=1}^k\sum_{s_1,\ldots,s_l=1}^k\varphi_n\left(c_{r_1}^*b_{r_1s_1}c_{s_1}d_{r_2s_1}...c_{r_l}^*b_{r_ls_l}c_{s_l}d_{r_1s_l}\right)+o_n(1). \notag 
 \end{align}
At the second line, the factor $\Tilde{\Sigma}_{r_1}^{1/2}$ was moved
from front to end using the tracial property of $\bar{\varphi}$. 
By freeness of $\{c_r\}$ with $\{b_{rs},d_{rs}\}$, applying Lemma \ref{lem:kreweras.moments}, we have that
\begin{align}
    &\sum_{r_1,\ldots, r_l}\sum_{s_1,\ldots,s_l}\varphi_n\left(c_{r_1}^*b_{r_1s_1}c_{s_1}d_{r_2s_1}...c_{r_l}^*b_{r_ls_l}c_{s_l}d_{r_1s_l}\right) \nonumber\\
    =&\sum_{r_1,\ldots, r_l}\sum_{s_1,\ldots,s_l}\sum_{\pi\in\text{NC}(2l)}\kappa_\pi[c_{r_1}^*,c_{s_1},...,c_{r_l}^*,c_{s_l}]\varphi_{n,K(\pi)}\left[b_{r_1s_1},d_{r_2s_1},...,b_{r_ls_l},d_{r_1s_l}\right].\label{eq:intermediate}
\end{align}
For free standard circular variables $c_1,...,c_k$, by freeness and Definition \ref{def:circular}, any free cumulant of order $q\neq 2$ vanishes. Furthermore,
\begin{align*}
    \kappa_2(c_r^{\epsilon_1},c_s^{\epsilon_2})=1
\end{align*}
if and only if $r=s$, $\epsilon_1=1$, $\epsilon_2=*$, or $r=s$,
$\epsilon_1=*$, $\epsilon_2=1$. As a result,
$\kappa_\pi[c_{r_1}^*,c_{s_1},...,c_{r_l}^*,c_{s_l}]\neq 0$ 
only if $\pi$ is a pair-partition. Importantly, by Lemma
\ref{lem:even.odd} (a),
  in any such pair partition $\pi$, each pair
  has one even and one odd element, and the partition may be written
  $\pi = \{ (2i, 2i'-1)\}_{i \in [l]}$. Consequently
  $\kappa_\pi[c_{r_1}^*,c_{s_1},...,c_{r_l}^*,c_{s_l}]
  = \prod_i \kappa_2(c_{s_i}, c_{r_{i'}}^*)$
  vanishes unless $s_i = r_{i'}$ for each $i \in [l]$.
In other words,
\begin{align*}
    (\ref{eq:intermediate})&=
   \sum_{r_1}...\sum_{r_l}\sum_{\pi\in\text{NC}_2(2l)}
            \varphi_{n,K(\pi)} \left[b_{r_1 r_{1'}},d_{r_2r_{1'}},...,
                             b_{r_lr_{l'}},d_{r_1r_{l'}}\right].
\end{align*}

Recall that
$\{b_{rs}\},\{d_{rs}\}_{r,s=1}^k$ w.r.t. $\varphi_n$ follow the same
distribution as
$\{\Tilde{F}_{rs}\},\{\Tilde{\Sigma}_{s}^{1/2}\Tilde{\Sigma}_{r}^{1/2}\}$
w.r.t. $\bar{\varphi}$.

To avoid treble subscripts below, we temporarily write
\begin{equation*}
  \mathsf{s}(i) = \Tilde{\Sigma}_{r_i}^{1/2}, \quad
  \mathsf{u}(i) = \Tilde{U}_{r_i}, \quad
  \mathsf{f}(i,j) = \Tilde{F}_{r_i r_j}, \quad
  \mathsf{f}(i) = B \mathsf{u}(i) \mathsf{u}(i)^T.
\end{equation*}
Collecting the terms, we have
\begin{equation}  \label{eq:mean-2}
  \bar{\varphi}(\Tilde{S}_n^l)
 =\sum_{r_1}...\sum_{r_l}\sum_{\pi\in\text{NC}_2(2l)}
  \bar{\varphi}_{K(\pi)}\left[ \mathsf{f}(1,1'),
  \mathsf{s}(1') \mathsf{s}(2), \ldots, \mathsf{f}(l,l'),
  \mathsf{s}(l') \mathsf{s}(1)
  \right]+o_n(1).
\end{equation}

Setting $a_{2i -1} = \mathsf{f}(i,i')$ and $a_{2i} = \mathsf{s}(i')
\mathsf{s}(i+1)$, the summands above have the form
\begin{equation} \label{eq:mean-3}
  \bar{\varphi}_{K(\pi)} [a_1, \ldots, a_{2l}]
    = { \prod_{V \in K(\pi)}} \bar{\varphi}(V)
      [a_1, \ldots, a_{2l}],
\end{equation}
after applying Definition \ref{def:moments} for moments.
By Lemma \ref{lem:even.odd} (b), each block $V$ in $K(\pi)$ is
  even or odd. 
  In particular, an even block consists of elements from  $\{
  \mathsf{s}(i') \mathsf{s}(i+1) \}_{i \in [l]}$ while an odd block
  chooses from $ \{ \mathsf{f}(i,i') \}_{i \in [l]}$.

Let $V = \{ 2i_1,\ldots,2i_t \}$ with $i_1<\ldots<i_t$ be an even block in
$K(\pi)$. It contributes a term
\begin{equation*}
  \prod_{\nu = 1}^t \mathsf{s}(i_\nu') \mathsf{s}(i_\nu+1)
  = \mathsf{s}(i_t +1) \mathsf{s}(i_1 +1) \prod_{\nu=2}^t
    \mathsf{s}(i_{\nu-1}+1) \mathsf{s}(i_\nu+1),
\end{equation*}
where Lemma \ref{lem:even.odd}(c) provides the evaluation of
$i_\nu'$.
Using the tracial property of $\bar{\varphi}$,
\begin{align}
  \bar{\varphi} \Big( \prod_{\nu = 1}^t \mathsf{s}(i_\nu')
  \mathsf{s}(i_\nu+1) \Big)
   & =  \bar{\varphi} \Big( \prod_{\nu = 1}^t \mathsf{s}(i_\nu +1)^2 \Big)
     = \bar{\varphi} \Big( \prod_{v \in V} \mathsf{s}(v/2 +1)^2 \Big)
     \notag \\
   & = {\bar{\varphi} \Big( \prod_{v \in V} \Sigma_{r_{v/2+1}}
     \Big)}
     \label{eq:mean-4}
\end{align}
where in the last equality the tildes may be removed from the
$\tilde{\Sigma}_r$. 

An odd block $V = \{ 2i_1-1,\ldots,2i_t-1 \}$ in $K(\pi)$ contributes a
term
\begin{equation*}
  \prod_{\nu=1}^t \mathsf{f}(i_\nu, i_{\nu}')
  = \prod_{\nu=1}^{t-1} \mathsf{f}(i_\nu, i_{\nu+1}) \cdot
  \mathsf{f}(i_t, i_1)
  = \prod_{\nu=1}^{t-1} \mathsf{u}(i_\nu)^T B \mathsf{u}(i_{\nu+1})
  \cdot \mathsf{u}(i_t)^T B \mathsf{u}(i_1),
\end{equation*}
again using Lemma \ref{lem:even.odd}(c) to evaluate $i_\nu'$.
Again from the tracial property of $\bar{\varphi}$,
\begin{equation} \label{eq:mean-5}
  \bar{\varphi} \Big( \prod_{\nu = 1}^t \mathsf{f}(i_\nu, i_{\nu}') \Big)
  = \bar{\varphi} \Big( \prod_{\nu = 1}^t \mathsf{f}(i_\nu) \Big)
  = \bar{\varphi} \Big( \prod_{v \in V} \mathsf{f}((v+1)/2) \Big)
   = \bar{\varphi} \Big( \prod_{v \in V} F_{r_{(v+1)/2}} \Big)
\end{equation}

Combining  displays (\ref{eq:mean-1}) and
  (\ref{eq:mean-2})--~(\ref{eq:mean-5}),  and recalling
from Lemma \ref{lem:kreweras} that $|K(\pi)| = l+1$, and finally that
$\bar{\varphi} (a) = (p/m) {\varphi} (a)$ for deterministic $a$,
  we obtain
\begin{align}
  &\E \hat{\mu}_{nl}
   = p^{-1} \E \Tr S_n^l
    = (m/p)^{l+1} \bar{\varphi}( \Tilde{S}_n^l)  \label{eq:lem.kreweras.Sigma}\\
   =& \sum_{r_1, \ldots, r_l}^k
    \sum_{\pi\in\text{NC}_2(2l)}\prod_{V\in K(\pi)[1]}
    {\varphi}({\textstyle{\prod}}_{v\in V}F_{r_{(v+1)/2}})
    \prod_{V'\in K(\pi)[2]} {\varphi(\textstyle{\prod}_{v\in
    V'}\Sigma_{r_{v/2+1}})}+o_n(1). \notag
\end{align}
Together with (\ref{eq:concentrat}), this completes the proof of
Lemma~\ref{lem:map.asym}(a).
 \end{proof}

\subsection{Proof of Theorem \ref{thm:main}}\label{sec:evaluate.inverse}

\subsubsection{Existence and consistency}
First, we prove the existence of the method of moments estimator $\hat{\theta}_n$.  The argument relies on the following lemma, a quantitative version of the inverse function theorem adapted from Theorem~6.7.2 of~\cite{tao2006analysis}.
    \begin{lem}\label{lem:inverse}
      Let $E$ be an open subset of $\mathbb{R}^D$, and let
$x_0 \in E$ and $c>0, \rho>0$ be given, such that
  $B(x_0,\rho)\subseteq E$.    Let  $f:E\rightarrow \mathbb{R}^D$ be
continuously differentiable such that
        \begin{enumerate}
            \item[i.]  $Jf(x_0)\in\mathbb{R}^{D\times D}$ is invertible,
            \item[ii.] the minimum singular value $\sigma_{\min}(Jf(x_0))\geq c$,
            \item[iii.]  $\lVert Jf(x)-Jf(x_0) \rVert\leq {c/(2 D^2)}$ for all $x\in B(x_0,\rho)\subseteq E$.
        \end{enumerate}
        Then there exists an open set $U\subseteq B(x_0,\rho)$
        containing $x_0$, and an open set $V=B(f(x_0),c\rho/2)$ such
        that $f$ is a bijection from $U$ to $V$. In particular, there
        is an inverse map $f^{-1}:V\rightarrow U$. Furthermore, this
        inverse map is differentiable at every point of $V$
          and for each $x \in U$,
          \begin{equation*}
            (Jf^{-1})(f(x))=(Jf(x))^{-1}.
          \end{equation*}
      \end{lem}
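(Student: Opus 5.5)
The argument follows the contraction-mapping proof of the inverse function theorem, with the constants tracked explicitly. Set $A = Jf(x_0)$ and define, for each target $y$, the map $T_y(x) = x + A^{-1}(y - f(x))$ on $\overline{B(x_0,\rho)}$. Fixed points of $T_y$ are exactly solutions of $f(x)=y$. Since $JT_y(x) = \Id - A^{-1}Jf(x) = A^{-1}(A - Jf(x))$, hypotheses ii and iii give
\[
\|JT_y(x)\| \le \|A^{-1}\|\,\|Jf(x)-A\| \le c^{-1}\cdot c/(2D^2) \le 1/2
\]
throughout the ball, using $\|A^{-1}\| = 1/\sigma_{\min}(A) \le 1/c$. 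By the mean value inequality on the convex ball, $T_y$ is therefore a $1/2$-contraction there. The same bound also shows that $g(x) = f(x) - Ax$ is $c/(2D^2)$-Lipschitz. (The factor $D^2$ is harmless slack; it covers converting an entrywise bound to an operator norm bound if the authors intend that.)

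For the self-map property, take $y \in V = B(f(x_0), c\rho/2)$ and $x$ in the closed ball. Then
\[
\|T_y(x) - x_0\| \le \|T_y(x) - T_y(x_0)\| + \|A^{-1}(y - f(x_0))\| \le \tfrac12\|x-x_0\| + c^{-1}\|y-f(x_0)\| < \tfrac{\rho}{2} + \tfrac{\rho}{2} = \rho .
\]
So $T_y$ maps the closed ball into the open ball. By Banach's fixed point theorem it has a unique fixed point in $\overline{B(x_0,\rho)}$, and this fixed point lies in the open ball.

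Define $U = f^{-1}(V) \cap B(x_0,\rho)$. This set is open by continuity, contains $x_0$, and by uniqueness of fixed points $f: U \to V$ is a bijection. Injectivity on the whole ball also follows directly from the lower bound
\[
\|f(x)-f(x')\| \ge \|A(x-x')\| - \|g(x)-g(x')\| \ge \bigl(c - c/(2D^2)\bigr)\|x-x'\| \ge \tfrac{c}{2}\|x-x'\| .
\]
This bound shows that $f^{-1}$ is $2/c$-Lipschitz on $V$, and in particular continuous.

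For differentiability, first note that $Jf(x)$ is invertible for every $x \in U$. Indeed, $\|A^{-1}(Jf(x)-A)\| \le 1/2$, so $Jf(x) = A\bigl(\Id + A^{-1}(Jf(x)-A)\bigr)$ is invertible by a Neumann series. Now fix $y = f(x)$ and let $y + h = f(x+k)$. Differentiability of $f$ at $x$ gives $h = Jf(x)k + o(\|k\|)$, and the Lipschitz bound gives $\|k\| \le (2/c)\|h\|$. Combining these,
\[
k = Jf(x)^{-1}h + o(\|h\|),
\]
which proves that $f^{-1}$ is differentiable at $y$ with $(Jf^{-1})(f(x)) = (Jf(x))^{-1}$.

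The main obstacle is pinning down the constants so that the image radius comes out exactly $c\rho/2$. This rests on the contraction factor being at most $1/2$, which in turn rests on reading hypothesis iii so that it yields the operator-norm bound $\|Jf(x) - A\| \le c/2$. With the operator norm, the stated $c/(2D^2)$ gives this immediately, and all remaining steps are routine.
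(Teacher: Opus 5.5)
Your argument is correct in substance and is essentially the paper's proof. The paper reduces to $\tilde f = (Jf(x_0))^{-1} f$ and invokes the contraction-mapping proof of Tao's Theorem 6.7.2 to get a bijection onto $B(0,\rho/2)$. The $D^2$ comes from Tao's entrywise bound on partial derivatives, as you guessed. It then pushes forward by $Jf(x_0)$, using $\|Jf(x_0)^{-1}\| \le 1/c$ to see that $B(f(x_0), c\rho/2)$ is covered. Your map $T_y$ builds that preconditioning in directly. You also prove differentiability of the inverse yourself rather than citing Tao, and your direct estimates show that an operator-norm bound of $c/2$ in iii would already suffice.

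One step needs a small repair. You run Banach's theorem on $\overline{B(x_0,\rho)}$, but only the open ball is assumed to lie in $E$, and hypothesis iii is only assumed there. So $T_y$ need not be defined on the boundary; take, for instance, $E = B(x_0,\rho)$.

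The fix is immediate. For each $y \in V$, choose $\rho_y < \rho$ with $c^{-1}\|y - f(x_0)\| < \rho_y/2$. Then run your self-map and contraction estimates on $\overline{B(x_0,\rho_y)} \subset B(x_0,\rho)$; the fixed point lies in the open ball, hence in $U$. Alternatively, extend $f$ continuously to the closed ball, which is possible because it is Lipschitz on the open ball. In either version, uniqueness of the fixed point no longer covers all of $U$. Injectivity on $U$ then comes from your lower bound $\|f(x) - f(x')\| \ge (c/2)\|x - x'\|$ on $B(x_0,\rho)$.
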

We emphasize that while $U$ may depend on $f$, the radius $c
  \rho/2$ of $V$ does not: this is crucial for the consistency argument.
    \begin{proof}
        First, by following the proof of Theorem 6.7.2 in
        \cite{tao2006analysis}, we can without loss of generality assume that $x_0=0$, $f(x_0)=0$. Additionally, if $Jf(0)=\Id$
        and $\| Jf(x) - \Id \| \leq 1/(2D^2)$ for $x \in B(x_0,\rho)$,
        the desired result holds with $V=B(0,\rho/2)$. Here, we use the fact that for any matrix $A$, $\|A_{:,j}\|_2\leq \|A\|$.

        Next, define $\Tilde{f}(x):=(Jf(0))^{-1}f(x)$, so that
        $J\Tilde{f}(0)=\Id$. Condition iii. implies that
          \begin{equation*}
            \| J\tilde{f}(x) - \Id \|
            = \| (Jf(0))^{-1}[Jf(x) - Jf(0)] \|
            \leq \| (Jf(0))^{-1}\|c/(2D^2)
              \leq 1/(2D^2),
          \end{equation*}
so from the previous argument, there exists a neighborhood $U'$
containing $0$, and $V'=B(0,\rho/2)$ such that the result is
valid. Since $f(x)=Jf(0)\tilde{f}(x)$, it follows that $f$ is a
bijection from $U'$ to $Jf(0)V'$. Therefore, we conclude that
$V=B(0,c\rho/2)\subseteq Jf(0)V'$. The differentiability assertions
follow as in Theorem 6.7.2 and Exercise 6.7.4 of \cite{tao2006analysis}.
    \end{proof}
By assumption, there exist positive constants $c', \rho$ independent of $n$ such that for $n$ large enough, we have $\big|\text{det}(J\mathcal{F}_n(x))\big|\geq c'$ for each $x\in B(\theta,\rho)$. As a result, it suffices to prove that $\mathcal{F}_n$ satisfies conditions (ii)  and (iii) in the above lemma.

\begin{proof}[Proof of Theorem \ref{thm:main} (1)]
Recall the factorization $\mathcal{F}_n=\Psi_n\circ\Phi$, in which $\Psi_n$ is
a polynomial map and $\Phi$ is continuously differentiable. By
Assumption~\ref{ass:minimum}, $\|B\|$ and $\|U_r\|$ are bounded by a constant
$L>0$ independent of $n$. Setting $F_r=BU_rU_r^{T}$, then $ \|F_r\|\leq\|B\|\,\|U_r\|^{2}\leq L^{3}$. By Lemma~\ref{lem:map.asym}, every coefficient of $\Psi_n$ is of the form
$\varphi\!\left(\textstyle\prod_{r\in w}F_r\right)$ for some word $w$ with letters $r\in \{1,\ldots,k\}$, and length $\leq D$, and therefore
\[
  \Bigl\|\varphi\Bigl(\textstyle\prod_{r\in w}F_r\Bigr)\Bigr\|
  \leq\Bigl(\max_{r}\|F_r\|\Bigr)^{|w|}
  \leq L^{3D}.
\]
Hence the coefficients of $\Psi_n$ are uniformly bounded in $n$. As a result, we have for each $x\in B(\theta,\rho)$,
\begin{align}\label{eq:Lipschitz}
    \lVert J\Psi_n({\Phi(x)})\rVert,\lVert J\Phi(x)\rVert,\lVert
  J\mathcal{F}_n(x)\rVert \leq C
\end{align}
for a constant $C$ independent of $n$. Consequently, we have that  for
a constant ${c = c'/C^{D-1}}$, 
\begin{equation}  \label{eq:sigmin}
  \sigma_{\min}(J\mathcal{F}_n(x))
  \geq  |\text{det}(J\mathcal{F}_n(x))|/{\lVert
  J\mathcal{F}_n(x)\rVert^{D-1}
  \geq c.}
\end{equation}
From $J\mathcal{F}_n(x) = J\Psi_n(\Phi(x))J\Phi(x)$, we have that for any $x,x'\in B(\theta,\rho)$,
    \begin{align*}
        & \lVert J\mathcal{F}_n(x')-J\mathcal{F}_n(x)\rVert \\
        &\leq \lVert \left[J\Psi_n(\Phi(x'))-J\Psi_n(\Phi(x))\right]J\Phi(x')\rVert+\lVert J\Psi_n(\Phi(x))\left[J\Phi(x')-J\Phi(x)\right]\rVert.
    \end{align*}
    As a result, by the continuous differentiability of $\Phi$ and $\Psi_n$ and the uniform boundedness of the coefficients of $\Psi_n$, it follows that for $n\geq n_0$ with $n_0$ sufficiently large, there exists $\bar{\rho}>0$ that does not depend on $n$ such that for any $x,x'\in B(\theta,\bar{\rho})$,
    \begin{align*}
         \lVert J\mathcal{F}_n(x')-J\mathcal{F}_n(x)\rVert\leq \frac{c}{2D^2}.
    \end{align*}    
    With a slight abuse of notation, we subsequently take $\rho$
      to be the minimum of $\bar{\rho}$ and $\rho$.
Apply Lemma~\ref{lem:inverse} to the functions $\mathcal{F}_n, n \geq
n_0$ at $x_0 = \theta$ and set $\rho' = c \rho/2$ and $u_n =
\mathcal{F}_n(\theta)$. For $n \geq n_0$ the map $\mathcal{F}_n$ restricts to a bijection between an open
neighborhood $U_n\subseteq B(\theta,\rho)$ of $\theta$ and the open ball
$B(u_n,\rho')$; denote the
corresponding inverse by $\mathcal{F}_n^{-1}\colon B(u_n,\rho')\to U_n$.

Recall from equation (\ref{eq:Fn}) that $\hat{\mu}_n =
\mathcal{F}_n(\theta)+{o_{a.s.}(1)}= u_n+{o_{a.s.}(1)}$. Then we
conclude that for sample outcomes $\omega$ in a set of probability
one, for
{ $n \geq n_1(\omega) \geq n_0$,} we have $\hat{\mu}_n(\omega) \in
B(u_n, \rho')$ and 
the moment estimator
$\hat{\theta}_n(\omega)=\mathcal{F}_n^{-1}(\hat{\mu}_n(\omega))$
exists in $U_n$.  Moreover, for such $n$,
  \begin{equation*}
    \| \hat{\theta}_n - \theta \|
    = \| \mathcal{F}_n^{-1}(\hat{\mu}_n)-\mathcal{F}_n^{-1}(u_n) \|
    \leq M \| \hat{\mu}_n - u_n \|,
  \end{equation*}
using e.g. \cite[Theorem 9.19]{rudin1976principles}, where along with  $\|
A^{-1} \| \leq 1/\sigma_{\min}(A)$ 
and (\ref{eq:sigmin}), we have
\begin{align*}
  M & =  \sup_{y\in B(u_n,\rho')}\lVert
      J\mathcal{F}_n^{-1}(y)\rVert=\sup_{x\in U_n}\lVert
      (J\mathcal{F}_n(x))^{-1}\rVert \\
    & \leq \frac{1}{\inf_{x\in U_n}\sigma_{\min}(J\mathcal{F}_n(x))}
      \leq c^{-1},
\end{align*}
so that $ \| \hat{\theta}_n - \theta \| \convas 0.$

While this argument does not rule out the existence of another
solution $\hat{\theta}_n' \in \Theta \backslash U_n$, any a.s. consistent
sequence $\hat{\theta}_n'$ must eventually coincide with
$\hat{\theta}_n$.
Indeed, from (\ref{eq:Lipschitz}), it follows that $\mathcal{F}_n$ is
Lipschitz on $B(\theta,\rho)$ and so there exists $\rho_0 < \rho$ such
that $\mathcal{F}_n$ maps $B(\theta,\rho_0)$ into
$B(\mathcal{F}_n(\theta), \rho')$ for $n > n_0$. Thus $B(\theta,
\rho_0) \subset U_n$ and so if $\hat{\theta}_n'(\omega) \to \theta$ in
addition to $\hat{\theta}_n(\omega) \to \theta$ then for $n >
n_2(\omega)$ both $\hat{\theta}_n(\omega)$ and
$\hat{\theta}_n'(\omega) \in B(\theta, \rho_0) \subset U_n$ and since
$\mathcal{F}_n$ is a bijection there,
necessarily $\hat{\theta}_n'(\omega) = \hat{\theta}_n(\omega)$.
\end{proof}

\subsubsection{Central Limit Theorem}\label{subsec:proof.clt}
By Lemma \ref{lem:deteqv.mmts}, there exist deterministic equivalent moments $\deteqv{\mu}_n$, defined in Section \ref{subsec:deteqv}, such that 
\begin{align*}
    \hat{\mu}_n-\deteqv{\mu}_n=o_{a.s.}(1).
\end{align*}
Applying the central limit theorem for the empirical moments $\hat{\mu}_n$ stated in Theorem \ref{thm:clt}, we have that there exists $P_n\in\mathbb{R}^{D\times D}$, $R_n\in\mathbb{R}^{D}$, such that
\begin{align}\label{eq:clt.mu.n}
    P_n^{-1/2}\left(p(\hat{\mu}_n-\deteqv{\mu}_n) - R_n\right)\convd N(0,\Id _D).
\end{align}
Before proceeding with the proof, we introduce the following variant of the Delta method, which differs from the standard formulation in that we do not
  require any convergence of the mappings $H_{n}$.

A sequence of matrices $\{ A_n \}$ is said to be
  \textit{well-conditioned} if there exist positive constants $c < C$
  such that $c \leq \sigma_{\min}(A_n) \leq \sigma_{\max}(A_n) \leq C$.
\begin{lem}[Delta Method Variant]\label{lem:delta.var}
    Let $H_n:\mathbb{R}^D\rightarrow\mathbb{R}^D$ be twice
    differentiable at $\alpha_n$, with Jacobian denoted by
    $J{H_n}(\alpha_n)$ assumed to be well-conditioned. Denote
    $H_n=(h_{n1}(\cdot),\ldots,h_{nD}(\cdot))^T$. Assume that
    $\left|\frac{\partial^2}{\partial^{a_1}\ldots\partial^{a_D}}h_i(x)\right|\leq
    M$ with $|\sum a_i|=2$ for $x\in B(\alpha_n,r)$. If there exist a
    deterministic bias vector $\mu_n$ with bounded entries, and a well
    conditioned covariance matrix $\Lambda_n$, such that
    \begin{align*}
        \Lambda_n^{-1/2}\left[n(\hat{\alpha}_n-\alpha_n)-\mu_n\right]\convd N(0,\Id_k),
    \end{align*}
    then
    \begin{align*}
        \Lambda_n'^{-1/2}\left[n(H_n(\hat{\alpha}_n)-H_n(\alpha_n))-\mu'_n\right]\convd N(0,\Id_k),
    \end{align*}
    where $\Lambda_n'=J{H_n}(\alpha_n)\Lambda_nJ{H_n}(\alpha_n)^T,\quad \mu_n'=J{H_n}(\alpha_n)\mu_n$.
\end{lem}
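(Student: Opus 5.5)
The plan is a second-order Taylor expansion of each component $h_{ni}$ around $\alpha_n$, followed by Slutsky's lemma. Set $Z_n := n(\hat{\alpha}_n-\alpha_n)$. The hypothesis says $\Lambda_n^{-1/2}(Z_n-\mu_n)\convd N(0,\Id)$. Since $\Lambda_n$ is well-conditioned and $\mu_n$ is bounded, $Z_n = \mu_n + \Lambda_n^{1/2}W_n$ with $W_n$ tight, so $Z_n = O_P(1)$. In particular $\hat{\alpha}_n-\alpha_n = O_P(n^{-1})$. Hence $P(\hat{\alpha}_n\in B(\alpha_n,r))\to 1$, and we may work on this event.

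\textbf{Step 1 (expansion).} On the event $\{\hat{\alpha}_n\in B(\alpha_n,r)\}$, Taylor's theorem with Lagrange remainder applied to each $h_{ni}$ along the segment from $\alpha_n$ to $\hat\alpha_n$ (which lies in the ball) gives
\begin{equation*}
H_n(\hat{\alpha}_n)-H_n(\alpha_n) = JH_n(\alpha_n)(\hat{\alpha}_n-\alpha_n) + \epsilon_n,
\qquad |\epsilon_{n,i}| \leq \tfrac{1}{2} D^2 M \,\|\hat{\alpha}_n-\alpha_n\|^2 .
\end{equation*}
The bound uses only the uniform bound $M$ on second partials over the ball, so it is uniform in $n$; no convergence of $H_n$ is needed. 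Multiplying by $n$ yields $n(H_n(\hat\alpha_n)-H_n(\alpha_n)) = JH_n(\alpha_n) Z_n + n\epsilon_n$, with $\|n\epsilon_n\| \leq C M \|Z_n\|^2/n = O_P(n^{-1}) = o_P(1)$.

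\textbf{Step 2 (linear part).} Write $A_n := JH_n(\alpha_n)$. Then $A_n Z_n - \mu_n' = A_n(Z_n-\mu_n) = A_n\Lambda_n^{1/2}W_n$. Let $O_n := \Lambda_n'^{-1/2}A_n\Lambda_n^{1/2}$. Then $O_nO_n^T = \Lambda_n'^{-1/2}A_n\Lambda_nA_n^T\Lambda_n'^{-1/2} = \Id$, so $O_n$ is orthogonal and
\begin{equation*}
\Lambda_n'^{-1/2}(A_nZ_n-\mu_n') = O_nW_n .
\end{equation*}
For each $n$ the matrix $O_n$ is a deterministic orthogonal matrix, while $W_n\convd N(0,\Id)$. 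I will show $O_nW_n\convd N(0,\Id)$ by a subsequence argument. Every subsequence has a further subsequence along which $O_n\to O$ by compactness of the orthogonal group. Along it, the continuous mapping theorem and Slutsky's lemma give $O_nW_n\convd ON(0,\Id) = N(0,\Id)$. Since the limit does not depend on the subsequence, the full sequence converges. This is the one step needing care, because $\Lambda_n, A_n$ are not assumed to converge; rotation invariance of the standard Gaussian removes the issue.

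\textbf{Step 3 (remainder).} $\Lambda_n' = A_n\Lambda_nA_n^T$ has smallest eigenvalue at least $\sigma_{\min}(A_n)^2\sigma_{\min}(\Lambda_n)\geq c^3$, so $\|\Lambda_n'^{-1/2}\|$ is bounded. Therefore $\Lambda_n'^{-1/2}n\epsilon_n = o_P(1)$. The event $\{\hat\alpha_n\notin B(\alpha_n,r)\}$ has vanishing probability. Slutsky's lemma then combines Steps 2 and 3 to give the claimed convergence. The dimension index $k$ in the statement is $D$.
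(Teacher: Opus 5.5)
Your proof is correct and follows the paper's argument essentially step for step. Both use a second-order Taylor bound with the uniform constant $M$ to make the scaled remainder $o_P(1)$, then write the linear term as $O_nW_n$ with $O_n=\Lambda_n'^{-1/2}JH_n(\alpha_n)\Lambda_n^{1/2}$ orthogonal, and conclude via compactness of the orthogonal group and a subsequence argument. You are slightly more explicit than the paper in two places: you handle the event $\{\hat\alpha_n\notin B(\alpha_n,r)\}$, and you verify that $\Lambda_n'$ is well-conditioned rather than asserting it.
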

 \begin{proof}
   Here we adapt the proof of Theorem 3.1 in \cite{vaart_1998}. From
   the differentiability of $H_n$, we have, when
     $\hat{\alpha}_n \in B(\alpha_n,r)$,
   \begin{align*}
      & \lVert H_n(\hat{\alpha}_n)-H_n(\alpha_n)-J{H_n}(\alpha_n)(\hat{\alpha}_n-\alpha_n)\rVert\leq \frac{DM}{2}\lVert \hat{\alpha}_n-\alpha_n\rVert^2.
   \end{align*}
   By Slutsky's theorem, we have that $n\lVert\hat
   \alpha_n-\alpha_n\rVert^2=o_p(1)$, and since $\Lambda_n'$ is well-conditioned,
   \begin{align*}
      &\Lambda_n'^{-1/2}\left[n(H_n(\hat{\alpha}_n)-H_n(\alpha_n))-\mu'_n\right]\\
      =&\Lambda_n'^{-1/2}\left[J{H_n}(\alpha_n)n(\hat{\alpha}_n-\alpha_n)-\mu'_n\right] +o_P(1)\\
      =&\Lambda_n'^{-1/2}J{H_n}\Lambda_n^{1/2}\Lambda_n^{-1/2}\left[n(\hat{\alpha}_n-\alpha_n)-\mu_n\right] +o_P(1)\\
      =&A_nZ_n+o_p(1),
    \end{align*}
where $A_n:=\Lambda_n'^{-1/2}J{H_n}(\alpha_n)\Lambda_n^{1/2}$, and $Z_n:=\Lambda_n^{-1/2}\left[n(\hat{\alpha}_n-\alpha_n)-\mu_n\right]\convd N(0,\Id_k)$. A direct computation gives
\begin{align*}
  A_nA_n^{T}
  =\Lambda_n'^{-1/2}J{H_n}(\alpha_n)\Lambda_n J{H_n}(\alpha_n)^{T}\Lambda_n'^{-1/2}
  =\Lambda_n'^{-1/2}\Lambda_n'\Lambda_n'^{-1/2}=\Id_D,
\end{align*}
so each $A_n$ is orthogonal. The set of $D\times D$ orthogonal matrices is compact, so every subsequence ${A_{n_j}}$ has a sub-subsequence $A_{n_{j_k}}\to A_\infty\in O(D)$. Along that sub-subsequence, by Slutsky's theorem, we have that $A_{n_{j_k}}Z_{n_{j_k}}\convd =N(0,\Id_D)$. Because every subsequence has a sub-subsequence converging to the same limit $N(0,\Id_D)$, the full sequence $A_nZ_n\convd N(0,\Id_D)$.
\end{proof}
\begin{proof}[Proof of Theorem \ref{thm:main} (2)]
First, we validate that when setting $H_n=\mathcal{F}_n^{-1}$ the conditions of Lemma \ref{lem:delta.var} are satisfied.
Using the chain rule and the formula for the derivative of the inverse of a matrix, we have
\begin{align*}
    \frac{\partial}{\partial y_i}J(\mathcal{F}_n^{-1})(y)
    &=\frac{\partial}{\partial y_i}\left[(J\mathcal{F}_n(\mathcal{F}_n^{-1}(y)))^{-1}\right]\\
    &=-J(\mathcal{F}_n^{-1})(y)\cdot \frac{\partial}{\partial y_i}J\mathcal{F}_n(\mathcal{F}_n^{-1}(y)) \cdot J(\mathcal{F}_n^{-1})(y)\\
    &=-J(\mathcal{F}_n^{-1})(y)\cdot \left(\sum_{r=1}^D\frac{\partial J\mathcal{F}_n}{\partial x_r}(\mathcal{F}_n^{-1}(y))\frac{\partial [\mathcal{F}_n^{-1}]_r}{\partial y_i}\right) \cdot J(\mathcal{F}_n^{-1})(y).
\end{align*}
As a result, by the boundedness of the first and second derivatives of
$\mathcal{F}_n$, and the boundedness of each entry of
$J(\mathcal{F}_n^{-1})$, we conclude that the second derivatives of
$\mathcal{F}_n^{-1}$ are also bounded.
Apply Lemma \ref{lem:delta.var} to the CLT established in
\eqref{eq:clt.mu.n} with $H_n = \mathcal{F}_n^{-1}, \alpha_n =
\tilde{\mu}_n, \hat{\alpha}_n = \hat{\mu}_n$ etc.
Setting $\tilde{\theta}_n = \mathcal{F}_n^{-1}(\tilde{\mu}_n)$ and
$\tilde{D}_n = J \mathcal{F}_n^{-1}(\tilde{\mu}_n)$, we get
 \begin{equation*}
        (\tilde{D}_nP_n
        \tilde{D}_n^T)^{-1/2}\left[p\left(\hat{\theta}_n
            -\tilde{\theta}_n\right)-\tilde{D}_nR_n\right]\convd N(0,\Id_D).
\end{equation*}
Together with Lemma \ref{lem:deteqv.mmts} and the consistency of MoM
estimators $\hat{\theta}$, we have that the difference $\tilde{\theta}_n-\theta =
\mathcal{F}_n^{-1}(\deteqv{\mu}_n)-\mathcal{F}_n^{-1}(\hat{\mu}_n)+\mathcal{F}_n^{-1}(\hat{\mu}_n)-\theta\rightarrow0$.
Since the functions $J \mathcal{F}_n$ are uniformly continuous and
satisfy (\ref{eq:sigmin}),
\begin{equation*}
  \tilde{D}_n - D_n = [J \mathcal{F}_n(\tilde{\theta}_n)]^{-1}
                        - [J \mathcal{F}_n(\theta)]^{-1}
                    \to 0,    
\end{equation*}
and hence $D_n^{-1} \tilde{D}_n \to \Id$. We may therefore replace
$\tilde{D}_n$ with $D_n$ in the penultimate display, recovering
Theorem \ref{thm:main}(2). \qedhere
\end{proof} 

\newpage
\bibliographystyle{plainnat}
\bibliography{reference}  

\newpage

\begin{appendix}

    \section{Definition of Matrix Normal}\label{sec:matrix-normal}
   
   \begin{definition}
   Given non-negative definite matrices $U$ ($n \times n$) and $V$ ($p \times p$),
   we say that an $n \times p$ matrix $Y \sim N(0, U \otimes V)$ if there exist
   matrices $A$, $B$ and $Z$ of conforming dimensions so that
   \begin{enumerate}
     \item[(i)] $Y = AZB$,
     \item[(ii)] $A$ and $B$ are deterministic with $AA^T = U$, $B^T B = V$,
     \item[(iii)] $Z$ is random with i.i.d.\ $N(0,1)$ entries.
   \end{enumerate}
   \end{definition}
   
   \begin{rem}\label{rem:non-uniqueness}
   The non-uniqueness of $A$, $B$ in (ii) does not invalidate the
   definition.
   Indeed, using the characteristic function $\varphi_Y(T) = \mathbb{E}\exp(i\,\Tr\,T^TY)$
   and its specific form $\varphi_Z(S) = \exp\!\bigl(-\tfrac{1}{2}\Tr(S^TS)\bigr)$,
   we have
   \[
     \varphi_Y(T) = \mathbb{E}\exp(i\,\Tr\,T^T\! AZB) = \mathbb{E}\exp(i\,\Tr\,S^TZ)
   \]
   with $S^T= BT^TA$ (from cyclicity of trace) and
   \[
     \Tr(S^TS) = \Tr(BT^TAA^TT B^T) = \Tr(T^TU T V).
   \]
   Thus $\varphi_Y(T)$ depends on $A$ and $B$ only through $U$ and $V$. 
   \end{rem}
   
   \begin{rem}\label{rem:different-dimensions}
   In consequence, we can verify that two representations are equal in distribution:
   $A_1 Z_1 B_1 =_d A_2 Z_2 B_2$ by verifying that $A_1 A_1^T= A_2 A_2^T$ and
   $B_1^TB_1 = B_2^TB_2$.
   This is especially useful when $A_i (n \times a_i), Z_i (a_i \times
   b_i)$ and $B_i (b_i \times p)$ have different dimensions.  
   \end{rem} 
   
   \begin{rem}\label{rem:generative-def}
   The most common definition (e.g., \cite{muirhead2009aspects}) requires that $U$
   and $V$ be invertible. This generative definition allows $U$ and $V$ also to be
   singular, as is needed for our applications. It is a particular case of the
   definition given in \cite[p.~69]{gupta2018matrix}.
   \end{rem}
   
   \section{Asymptotic bias and covariance}\label{sec:asym.bias.cov}
   In this section, we give explicit formulas for the asymptotic bias and covariance of the
   MoM estimators, in particular, the functions $\rho_n$ and $\sigma_n^{2}$ of
   Theorem~\ref{thm:clt}, defined in Lemma~2.4 of \cite{xie2024cltlinearspectralstatistics}. Define
   \begin{align*}
        A(z)=\Id+\sum_{r=1}^k\deteqv{g}_2^{(r)}(z)\Lambda_r,\quad B(z)=\Id+\sum_{r=1}^k\deteqv{g}_1^{(r)}(z)\Gamma_r,
   \end{align*}
   so that $\deteqv{R}(z)=-zB(z)$ for $\deteqv{R}$ defined in \cite{xie2024cltlinearspectralstatistics}. For each $a,b=0,1,\ldots,k$, further define 
   \begin{gather*}
           \bar{h}_{2}^{ab}=\frac{1}{p}\Tr \left[A^{-1}(z_1)\Lambda_aA^{-1}(z_2)\Lambda_b\right],~\Xi_2^{ab}=\frac{1}{z_1z_2}\frac{1}{p}\Tr \left[B^{-1}(z_1)\Gamma_aB^{-1}(z_2)\Gamma_b\right],
   \end{gather*}
   with the convention $\Lambda_{0} := \Id$ and $\Gamma_{0} := \Id$. Throughout, whenever
   a quantity is presented as a vector or matrix (rather than a single entry),
   its indices are understood to range over $\{1,\ldots,k\}$, excluding
   $0$; thus $\bar h_{2} := (\bar h_{2}^{ab})_{ab}$ and
   $\Xi_2 := (\Xi_{2}^{ab})_{ab}$ are $k\times k$ matrices. Let
   \begin{align*}
       S=\Id -\bar{h}_2\Xi_2, 
   \end{align*}
   then the asymptotic covariance function is
   \begin{align*}
       \sigma_n^2(z_1,{z}_2)&=-{2}\frac{\partial^2}{\partial z_2\partial z_1}\log\det(S).
   \end{align*}
   For each $a,b=0,1,\ldots,k$, define 
   \begin{gather*}
       \bar{h}^{abc}_3=\frac{1}{N}\Tr \left[A^{-1}\Lambda_aA^{-1}\Lambda_bA^{-1}\Lambda_c\right],\quad \Xi_3^{abc}=-\frac{1}{z^3}\frac{1}{N}\Tr \left[B^{-1}\Gamma_aB^{-1}\Gamma_bB^{-1}\Gamma_c\right],
   \end{gather*}
   and 
   \begin{gather*}
       \Xi_2^{\cdot 0}=(\Xi_2^{a0})_a\in\mathbb{R}^k,~\Xi_3^{\cdot 0\cdot}= (\Xi_3^{a0b})_{ab},~\Xi_3^{\cdot\cdot c}=(\Xi_3^{ab c})_{ab},~\bar{h}_3^{\cdot\cdot c}=(\bar{h}_3^{ab c})_{ab}\in\mathbb{R}^{k\times k}.
   \end{gather*}
   Further define
   \begin{align*}
       m^c=\Tr\left(S^{-1}\bar{h}_3^{\cdot\cdot c}\Xi_2\right),\quad \deteqv{d}_{0}&=\Tr\left(S^{-1}\bar{h}_2\Xi_3^{\cdot 0\cdot}\right)-m^T\Xi_2^{\cdot 0},\quad \deteqv{d}^c=\Tr\left(S^{-1}\bar{h}_2\Xi_3^{\cdot\cdot c}\right)-m^T\Xi_2^{\cdot c},
   \end{align*}
   and $m=(m^c)_c$, $\deteqv{d}=(\deteqv{d}^c)_c$.
   Then the asymptotic bias function is
   \begin{align*}
       \mu_n(z)&=\deteqv{d}_{0}+ \tilde{d}^TS^{-1}\bar{h}_2\Xi_2^{\cdot 0}.
   \end{align*}

   \section{Proof of Lemma \ref{lem:param.bdd.jacobian}}\label{app:proof.nested}
       The cornerstone of our argument is the Cauchy-Binet formula, stated below. 
       \begin{lem}[Cauchy-Binet Formula]\label{lem:Cachy-Binet}
           Let $A$ be an $m\times n$ matrix, and $B$ be an $n\times m$ matrix, with $n>m$. 
           Denote by $[n]$ the set $\{1,\ldots,n\}$, and by ${[n] \choose m}$ the set of $m$-combinations of $[n]$, that is, subsets of $[n]$ with size $m$. For $S\in {[n] \choose m}$, let $A_{[m]\times S}$ be the $m\times m$ minor of $A$ coming from the columns $S$, and similarly let $B_{ S\times [m]}$ be the $m\times m$ minor of $B$ coming from the rows $S$. Then,
           \begin{align}
               \text{det}(AB)=\sum_{S\in {[n] \choose m}}\text{det}(A_{[m]\times S})\text{det}(B_{ S\times [m]}).
           \end{align}
       \end{lem}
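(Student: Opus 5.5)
The plan is to prove the Cauchy--Binet formula directly from the multilinearity and alternating property of the determinant in the columns, with no appeal to earlier results of the paper. Write $a_1,\ldots,a_n\in\mathbb{R}^m$ for the columns of $A$, so that the $j$-th column of $AB$ is
\[
(AB)_{:,j}=\sum_{q=1}^n B_{qj}\,a_q .
\]

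First, I will expand $\det(AB)$ by multilinearity in each of the $m$ columns. This gives a sum over all maps $f:[m]\to[n]$:
\[
\det(AB)=\sum_{f:[m]\to[n]}\Bigl(\prod_{j=1}^m B_{f(j)j}\Bigr)\det\bigl(a_{f(1)},\ldots,a_{f(m)}\bigr).
\]
Because the determinant is alternating, every term with $f$ non-injective vanishes, since two columns then coincide.

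Next, I will group the injective maps $f$ according to their image $S=\{s_1<\cdots<s_m\}\in{[n]\choose m}$. Each injective $f$ with image $S$ can be written uniquely as $f(j)=s_{\sigma(j)}$ for a permutation $\sigma\in\mathfrak{S}_m$. Reordering the columns back to increasing order gives
\[
\det\bigl(a_{f(1)},\ldots,a_{f(m)}\bigr)=\operatorname{sgn}(\sigma)\det(A_{[m]\times S}).
\]
Hence the contribution of the block indexed by $S$ is
\[
\det(A_{[m]\times S})\sum_{\sigma\in\mathfrak{S}_m}\operatorname{sgn}(\sigma)\prod_{j=1}^m B_{s_{\sigma(j)}\,j}.
\]

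Finally, I will identify the inner sum as the Leibniz expansion of $\det(B_{S\times[m]})$. The $(i,j)$ entry of $B_{S\times[m]}$ is $B_{s_i j}$, so the sum is exactly $\sum_{\sigma}\operatorname{sgn}(\sigma)\prod_j (B_{S\times[m]})_{\sigma(j),j}$, which is the Leibniz formula for $\det(B_{S\times[m]})$ (the column-indexed form, equivalent to the row form since $\det M=\det M^T$). Summing over $S$ then yields the stated formula.

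The only delicate point is the sign bookkeeping in the second step. The sign arising from permuting the columns of $A$ must match the sign in the Leibniz expansion of the minor of $B$. I will fix the convention $f(j)=s_{\sigma(j)}$ and use the same $\sigma$ in both places, which makes the two signs agree. The argument uses only $m\le n$; for $n<m$ every $f$ is non-injective and both sides are zero.
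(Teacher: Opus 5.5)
Your proof is correct: expanding $\det(AB)$ by multilinearity in the columns, discarding non-injective index maps $f$ by the alternating property, and matching the sign $\operatorname{sgn}(\sigma)$ from reordering the columns of $A$ with the column-indexed Leibniz expansion of $\det(B_{S\times[m]})$ is the standard argument. Your convention $f(j)=s_{\sigma(j)}$ keeps the two signs consistent.

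The paper itself gives no proof of this lemma. It invokes Cauchy--Binet as a known fact and applies it directly to $J\mathcal{F}_n^{(1)}=J\Psi^{(1)}J\Phi^{(1)}$. Your argument therefore supplies a self-contained derivation that the paper omits. It also covers the case $n=m$, and the degenerate case $n<m$, both of which the stated hypothesis $n>m$ excludes.
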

       Below, in Sections \ref{subsec:poly} to \ref{subsec:step}, we
       prove Lemma~\ref{lem:param.bdd.jacobian} for each of the
       eigenvalue decay models introduced in Section~\ref{sec:main}. We
       treat the step decay model last, as it involves the largest number of parameters. It suffices to establish the result for $\mathcal{F}_n^{(1)}(\theta_1;\hat{\theta}_2,\ldots,\hat{\theta}_k)$ with $\hat{\theta}_2,\ldots,\hat{\theta}_k$ already estimated. The goal is to show that there exists a parameter space $\Theta$ and a constant $c' > 0$ such that for all $\theta_1 \in \Theta$, $|\text{det}(J{\mathcal{F}}_n^{(1)}(\theta_1;\hat\theta_{2},\ldots,\hat\theta_{k}))|>c'$ for sufficiently large $n$. The remaining proof follows from induction. 
       
       \subsection{Polynomial eigenvalue decay}\label{subsec:poly}       
       Let $\theta_1=(\tau, \pi)$, with $\tau > 0$, $0 < \pi < 1$. The polynomial decay model is given by $g_{1}(x;\theta_1) = \tau(1-x/\pi)^s_+$ for a given integer $s>0$. 
       Here, the mapping ${\mathcal{F}}_n^{(1)}:\mathbb{R}^2\rightarrow\mathbb{R}^2$ is constructed as the composition of the mappings ${\mathcal{F}}_n^{(1)}=\Psi^{(1)}\circ\Phi^{(1)}$, where $\Phi^{(1)}:\mathbb{R}^2\rightarrow\mathbb{R}^{k+1}$ maps $\theta_1$ 
       to $(\phi({g}_1),\phi({g}_1\hat{g}_2),\ldots,\phi({g}_1\hat{g}_k),\phi({g}_1^2))$, 
        with $\hat{g}_r(x)=g_r(x;\hat{\theta}_r)$, and $\Psi^{(1)}:\mathbb{R}^{k+1}\rightarrow\mathbb{R}^2$ is a quadratic function of these moments. Taking the Jacobian of each, we have
       \begin{align*}
           J\Phi^{(1)}&=\begin{bmatrix}
               \frac{\partial}{\partial \tau}\phi(g_1)& \frac{\partial}{\partial \pi}\phi(g_1)\\
                \frac{\partial}{\partial \tau}\phi(g_1\hat{g}_2)&\frac{\partial}{\partial \pi}\phi(g_1\hat{g}_2)\\
                ...&...\\
                \frac{\partial}{\partial \tau}\phi(g_1\hat{g}_k)&\frac{\partial}{\partial \pi}\phi(g_1\hat{g}_k)\\
                \frac{\partial}{\partial \tau}\phi(g_1^2)& \frac{\partial}{\partial \pi}\phi(g_1^2)
           \end{bmatrix},
       \end{align*}
       and
       \begin{align*}
            J\Psi^{(1)} &= \begin{bmatrix}
               \varphi(F_{1}) &0&...&0&0\\
               J_{21}&2\varphi(F_{1}) \varphi(F_{2})&...&2\varphi(F_{1}) \varphi(F_{k})&\varphi(F_{1})^2,  
           \end{bmatrix}
       \end{align*}
       where $J_{21}=2\phi(g_1)\varphi(F_1^2)+2\sum_{r=2}^k\phi(\hat{g}_r)\varphi(F_1F_r).$ 
       For $r=2,...,k$, define
       \begin{align*}
           c_r&:=\frac{\partial}{\partial \tau}\phi(g_1)\frac{\partial}{\partial \pi}\phi(g_1\hat{g}_r)-\frac{\partial}{\partial \pi}\varphi(g_1)\frac{\partial}{\partial \tau}\phi(g_1\hat{g}_r).
       \end{align*}
       Applying the Cauchy-Binet formula in Lemma \ref{lem:Cachy-Binet}, we have that
       \begin{align*}
           |J{\mathcal{F}}_n^{(1)}|=\varphi({F}_1^2)\left(\varphi({F}_1)c_1+2\sum_{r=2}^k\varphi({F}_r)c_r\right).
       \end{align*}
       Note that by definition,  we have $\varphi({F}_r)=\varphi(B_1U_rU_r^T)>0$ for each $r$. As a result, it suffices to prove that 
       $c_1,\ldots,c_k$ satisfy
       \begin{align}\label{eq:cond}
           c_r\geq 0,\quad\sum_rc_r> 0.
       \end{align}
       
       Plugging in the decay model ${g}_1(x)={\tau}(1-x/{\pi})^s_+$, we have, 
       \begin{align*}
           c_1&=\frac{\partial}{\partial \pi}\int_0^1 g_1(x,\pi,\tau)dx\frac{\partial}{\partial \tau}\int_0^1g_1^2(x,\pi,\tau)dx-\frac{\partial}{\partial \tau}\int_0^1 g_1(x,\pi,\tau)dx\frac{\partial}{\partial \pi}\int_0^1g_1^2(x,\pi,\tau)dx\\
             &=\frac{\tau}{s+1}\frac{2\pi\tau}{2s+1}-\frac{\pi}{s+1}\frac{\tau^{2}}{2s+1}=\frac{\pi\tau^{2}}{(s+1)(2s+1)}>0\\
             c_r&=\frac{\partial}{\partial \pi}\int_0^1 g_1(x,\pi,\tau)dx\frac{\partial}{\partial \tau}\int_0^1 g_1(x,\pi,\tau)\hat{g}_r(x)dx
           -\frac{\partial}{\partial \tau}\int_0^1 g_1(x,\pi,\tau)dx\frac{\partial}{\partial \pi}\int_0^1 g_1(x,\pi,\tau)\hat{g}_r(x)dx\\
           &=\frac{\tau}{s+1}\int_0^{\pi}(1-x/\pi)^s\hat{g}_r(x)dx-\frac{\pi}{s+1}\frac{s\tau}{\pi^2}\int_0^{\pi}(1-x/\pi)^{s-1}x\hat{g}_r(x)dx\\
           &=\frac{\tau}{s+1}\int_0^{\pi}h(x)\hat{g}_r(x)dx,
       \end{align*}
       where
       \begin{align*}
           h(x) =(s+1)(1-x/\pi)^s-s(1-x/\pi)^{s-1}.
       \end{align*}
       Then it is easy to check that when $x<\pi/(s+1)$, $h(x)>0$, when $x=\pi/(s+1)$, $h(x)=0$, when $x>\pi/(s+1)$, $h(x)<0$, and $\int_0^{\pi}h(x)=0$. Since $\hat{g}_r(x)$ follows one of the decay models introduced in Section \ref{sec:main}, we have that the smaller $x$ is, the more (or equal) weight is given by $\hat{g}_r(x)$. As a result, we always have $\int_0^\pi h(x)\hat{g}_r(x)dx\geq 0$. In other words, the condition (\ref{eq:cond}) is satisfied. 
       \subsection{Exponential eigenvalue decay} \label{subsec:expo}
       Let $\theta_1=(\tau_1,\tau_2)$, where $\tau_1,\tau_2>0$. The decay function is given by
       $g_1(x) = \tau_1\exp(-\tau_2x)$. Since $\theta_1$ also has two degrees of freedom, the structures of $\Phi^{(1)}$, $\Psi^{(1)}$, $J\Phi^{(1)}$, and $J\Psi^{(1)}$ remain unchanged from the polynomial decay model. Consequently, the determinant formula for $J{\mathcal{F}}_n^{(1)}$ remains the same by an application of the Cauchy–Binet formula, yielding
       \begin{align*}
           |J{\mathcal{F}}_n^{(1)}|=\varphi({F}_1^2)\left(\varphi({F}_1)c_1+2\sum_{r=2}^k\varphi({F}_r)c_r\right).
       \end{align*}
       Therefore, it suffices to establish that $c_r \geq 0$ and $\sum_r c_r > 0$. Under the exponential decay model, we have
       \begin{align*}
           c_1&=\frac{\partial}{\partial \tau_1}\int_0^1 g_1(x,\tau_1,\tau_2)dx\frac{\partial}{\partial \tau_2}\int_0^1g_1^2(x,\tau_1,\tau_2)dx\\
           &\quad\quad\quad\quad\quad\quad\quad\quad\quad\quad-\frac{\partial}{\partial \tau_2}\int_0^1 g_1(x,\tau_1,\tau_2)dx\frac{\partial}{\partial \tau_1}\int_0^1g_1^2(x,\tau_1,\tau_2)dx\\
             &=\frac{1}{\tau_2}(1-e^{-\tau_2})(-2\tau_1^2)(1-e^{-2\tau_2})-(-\tau_1)(1-e^{-\tau_2})\frac{\tau_1}{\tau_2}(1-e^{-2\tau_2})\\
             &=-\frac{\tau_1^2}{\tau_2}(1-e^{-\tau_2})(1-e^{-2\tau_2}),
       \end{align*}
       and for $r=2,\ldots,k$,
       \begin{align*}
           c_r&=\frac{\partial}{\partial \tau_1}\int_0^1 g_1(x,\tau_1,\tau_2)dx\frac{\partial}{\partial \tau_2}\int_0^1 g_1(x,\tau_1,\tau_2)\hat{g}_r(x)dx\\
           &\quad\quad\quad\quad\quad\quad\quad\quad\quad\quad-\frac{\partial}{\partial \tau_2}\int_0^1 g_1(x,\tau_1,\tau_2)dx\frac{\partial}{\partial \tau_1}\int_0^1 g_1(x,\tau_1,\tau_2)\hat{g}_r(x)dx\\
           &=\frac{1}{\tau_2}(1-e^{-\tau_2})(-\tau_1\tau_2)\int_0^1e^{-\tau_2x}\hat{g}_r(x)dx-(-\tau_1)(1-e^{-\tau_2})\int_0^1e^{-\tau_2x}\hat{g}_r(x)dx=0.
       \end{align*}

       \subsection{Step eigenvalue decay}\label{subsec:step}
        Let $\theta_1=(t_{1},\ldots,t_{s-1},a_{1},\ldots,a_{s})$, where $t_j>0$, ${\textstyle\sum}_{j=1}^{s-1}t_j<1$, with $d_1=2s-1$. The decay function is 
        \begin{align*}
        g_1(x) = a_1\mathbbm{1}[0\leq x\leq t_1]+{\textstyle\sum}_{j=2}^{s} a_j\mathbbm{1}[{\textstyle\sum}_{i=1}^{j-1}t_i<x\leq {\textstyle\sum}_{i=1}^{j}t_i].
        \end{align*}
       Given $\hat{\theta}_2,\ldots,\hat{\theta}_k$, the mapping ${\mathcal{F}}_n^{(1)}:\mathbb{R}^{d_1}\rightarrow \mathbb{R}^{d_1}$ is the composition of ${\Psi}^{(1)}\in\mathbb{R}^{d}\rightarrow \mathbb{R}^{d_1}$ and ${\Phi}^{(1)}\in\mathbb{R}^{d_1}\rightarrow \mathbb{R}^{d}$, with $d=\frac{(d_1+2)(d_1+3)!}{12}$, where $\Phi^{(1)}$ maps ${\theta}_1$ to 
       \begin{align}\label{eq:step.moments}
           \{\phi({g}_1^{l_1}\hat{g}_2^{l_2}\ldots\hat{g}_k^{l_k}):l_1>0,l_2,\ldots,l_k\geq 0,{\textstyle\sum}_{r=1}^kl_r\leq d_1\},
       \end{align}
       and ${\Psi}_n^{(1)}$ is polynomial in these moments. 
       
       {To characterize the structure of $J{\Psi}_n^{(1)}$, we order the moments in (\ref{eq:step.moments}) as follows: first by the total degree $\sum_{r=1}^k l_r$, then respectively by $l_2\ldots l_k$ and finally by $l_1$. Under this ordering, the sequence begins with $\phi(g_1)$, followed by $\phi(g_1 \hat{g}_2), \dots, \phi(g_1 \hat{g}_k)$ and $\phi(g_1^2)$. For the $j$th element in the ordered list, denote the total degree as $d_j$. Under this ordering, the mapping ${\Psi}_{nl}^{(1)}$ depends only on elements of total degree at most $l$. Consequently, the Jacobian $J\Psi_{n}^{(1)} \in \mathbb{R}^{d_1 \times d}$ exhibits a staircase structure, satisfying the sparsity pattern
       \begin{align*}
           (J\Psi_{n}^{(1)})_{l,j} = 0 \quad \text{for all } j \text{ such that } d_j > l,\quad l=1,\dots,d_1,\quad j=1,\dots,d.
       \end{align*}
       Here, we partition the indices $1,\ldots,d$ into steps $S_l = \{j : d_j = l\}$ for $l=1,\dots,d_1$.}
        
       {Applying the Cauchy-Binet formula in Lemma \ref{lem:Cachy-Binet} to $J{\mathcal{F}}_n^{(1)}=J{\Psi}_n^{(1)}\cdot J{\Phi}^{(1)}$, we observe that the staircase structure of $J\Psi_{n,1}$ restricts the non-vanishing terms in the determinant expansion. In particular, a $d_1 \times d_1$ submatrix of $J\Psi_{n,1}$ can be non-singular only if it satisfies a sequential selection constraint. Let $c_l$ denote the number of columns selected from the $l$-th step $S_l$. For each $l = 1, \dots, d_1$, the submatrix must select at least one column from the corresponding step $S_l$, i.e., $c_l > 0$, unless the total column budget $d_1$ has already been exhausted by the preceding steps, with $\sum_{i=1}^{l-1} c_i = d_1$. Consequently, any submatrix contributing to the determinant must fill the steps in order, potentially selecting multiple columns from earlier steps but not bypassing any intermediate step.}
       
       Consider the submatrix made up of the last column in each step, which correspond to partial derivatives with respect to $\{\phi(g_1),\ldots,\phi(g_1^{d_1})\}$. The determinant of this submatrix is $\varphi(F_1)^{d_1(d_1+1)/2}$. Now consider the corresponding submatrix in $J{\Phi}^{(1)}$, which represents the partial derivatives of $\{\phi(g_1),\ldots,\phi(g_1^{d_1})\}$ with respect to $\theta_1=(t_1,\ldots,t_{s-1},a_1,\ldots,a_{s})$. {Without loss of generality, assume that $a_1>\ldots>a_s\geq 0$.} From the proof of Proposition 1 in \cite{baiyaochen2010}, this submatrix takes the form
       \begin{align*}
           \begin{bmatrix}
               a_1-a_{s}&\ldots&a_{s-1}-a_{s}&t_1&\ldots&t_{s}\\
               a_1^2-a_{s}^2&\ldots&a_{s-1}^2-a_{s}^2&2t_1a_1&\ldots&2t_{s}a_{s}\\
               \vdots\\
               a_1^{2s-1}-a_{s}^{2s-1}&\ldots&a_{s-1}^{2s-1}-a_{s}^{2s-1}&({2s-1})t_1a_1^{2s-2}&\ldots&({2s-1})t_{s}a_{s}^{2s-2}
           \end{bmatrix}
       \end{align*}
       with non-zero determinant. In the Leibniz formula of the determinant, it is easy to show that the term corresponding to the highest order in the largest eigenvalue $a_1$ is 
       \begin{align*}
           &(2s-1)t_1a_1^{3s-2}(s-1)t_{s}a_{s}^{s-2}\prod_{i=2}^{s-1}(a_i^{s-1+i}-a_{s}^{s-1+i})(i-1)t_ia_i^{i-2}\\
          -&st_1a_1^{3s-2}t_{s}\prod_{i=2}^{s-1}(a_i^{2s-i}-a_{s}^{2s-i})(s+1-i)t_ia_i^{s-i}\\
          =&s^{-1}(\prod_{i=1}^{s}it_i)a_1^{3s-2}\left[(2s-1)a_{s}^{s-2}\prod_{i=2}^{s-1}(a_i^{s-1+i}-a_{s}^{s-1+i})a_i^{i-2}-s\prod_{i=2}^{s-1}(a_{s+1-i}^{s-1+i}-a_{s}^{s-1+i})a_{s+1-i}^{i-2}\right].
       \end{align*}
       As a result, we have that the component contributing to the determinant of $J{\mathcal{F}}_n^{(1)}$ corresponding to $\{\varphi(g_1),\ldots,\varphi(g_1^{d_1})\}$ scales at the rate of $O(a_1^{3s-2})$ as $a_1$ increases.
       
       Recall that any submatrix of $J{\Psi}_n^{(1)}$contributing to the determinant must include at least one column from each step of the staircase when descending from the first to the last. Consequently, all other contributing terms can scale with $a_1$ at most at the rate $O(a_1^{3d_1-3})$. Thus, for sufficiently large $a_1$, $|J{\mathcal{F}}_n^{(1)}|$ remains bounded away from zero.
       
       \section{Additional proofs}
       \subsection{Proof of Lemma \ref{lem:free.equiv}}    \label{app:proof.free.equiv}
           First, we prove the existence of the sequence $(\mathcal{A}_m,\varphi_m)$. For each $m$, let $(\mathcal{A}_{m,1},\varphi_m^{(1)})$ be an NCP that contains $k$ free circular elements $c_1,\ldots,c_k$. Let  $(\mathcal{A}_{m,2},\varphi_m^{(2)})$ be the NCP $(\mathbb{C}^{m \times m},m^{-1}\Tr )$ that contains the original $\{D_m^{(\nu)}\}_{1\leq \nu\leq \ell }$. Then, the existence of $(\mathcal{A}_m,\varphi_m)$ follows directly from Theorem \ref{thm:free.prod}. 
       
           To establish \eqref{eq:conv.moments.deteqv}, we adapt the proof of Theorem 11.2.1 in \cite{bose2021random}. The principal differences are twofold. First, Theorem 11.2.1 derives an asymptotic limit under the assumption that $$(D_m^{(1)},\ldots, D_m^{(\ell)})\rightarrow (d_1,\ldots,d_{\ell})$$ for some $d_1,\ldots,d_{\ell}$, whereas we work with finite-$n$ equivalents $(d_m^{(1)},\ldots,d_m^{(\ell)})$. Second, \cite{bose2021random} consider Gaussian elliptic matrices with correlation $\rho$ and the associated free elliptic elements, denoted $\{E_m^{(\nu)}\}_{1\leq \nu\leq k}$ and $\{e_\nu\}_{1\leq \nu\leq m}$ \cite[Section 7.4,~ Definition 3.6.2]{bose2021random}, with normalized Gaussian elliptic matrices denoted by $\bar{E}_m:=m^{-1/2}E_m$. In contrast, we restrict attention to the case $\rho=0$ corresponding to Gaussian i.i.d. matrices with variance $m^{-1}$ and free circular elements, which we denote by $\{C_m^{(\nu)}\}_{1\leq \nu\leq m}$ and $\{c_\nu\}_{1\leq \nu\leq m}$ respectively. 
       
           We now describe the necessary modifications in the simplest case $k=1$, $\ell=1$. Following the notation in Theorem 11.2.1 in \cite{bose2021random}, let $a_m^{(l)}=\mu_{l}(d_m^{(1)},{d_m^{(1)}}^*)$ and $A_m^{(l)}=\mu_{l}(D_m^{(1)},{D_m^{(1)}}^*)$ denote monomials of degree $q_l$. Throughout the proof, our notation $m$, $k$, and $\mu_l$ correspond to $n$, $m$, and $m_l$, respectively, in \cite{bose2021random}. Steps 1--4 proceed exactly as in \cite{bose2021random}. The first modification arises in Step 5. In place of equation (11.13), we have by construction, for any even number $p=2l$, and any permutation $\sigma$ of $\{1,\ldots,2l\}$,
           \begin{align*}
               {(\varphi_m)}_{\sigma}(a_m^{(1)},\ldots, a_m^{(2l)})=\frac{1}{m}\Tr _{\sigma}(A_m^{(1)},\ldots,A_{m}^{(2l)}).
           \end{align*}
           Consequently, we obtain for $\epsilon_1,\ldots,\epsilon_{2l}\in \{1,*\}$,
           \begin{align*}
               \left|\frac{1}{m}\E\Tr ({C}_m^{\epsilon_1}A_m^{(1)}\ldots {C}_m^{\epsilon_{2l}}A_m^{(2l)})-\sum_{\pi\in\text{NC}_2(2l)}\kappa_{\pi}(c^{\epsilon_1},\ldots,c^{\epsilon_{2l}}){\varphi_{m}}_{\pi\gamma_{2l}}[a_m^{(1)},\ldots,a_m^{(2l)}]\right|\rightarrow 0.
           \end{align*}
           Finally, freeness of $c$ and $d_m^{(1)}$ follows from the same argument as in Step 6 of \cite{bose2021random}, and the extension to $k>1$ or $\ell>1$ follows as in Step 7. 
       
   \subsection{Proof of Lemma \ref{lem:deteqv.mmts}}\label{app:deteqv.mmts}
   
    Using $BX = 0$, then writing $B = \Gamma \Gamma^T$ for $\Gamma\in\mathbb{R}^{n\times n}$ and $\alpha_r = G_r \Sigma_r^{1/2}$ for $G_r \in \mathbb{R}^{I_s
     \times p}$ with i.i.d. standard normal entries, we have
   \begin{equation*}
     S_n = p^{-1} \check{\mathsf{Y}}^T  \check{\mathsf{Y}}, \qquad
      \check{\mathsf{Y}} = \sum_{r=1}^k \Gamma^T U_r G_r \Sigma_r^{1/2}.
   \end{equation*}
   In general both the rows and the columns of $\check{\mathsf{Y}}$ are
   correlated. If, however, we assume that $\Sigma_1, \ldots, \Sigma_k$
   are simultaneously diagonalizable, then we can transform
   $\check{\mathsf{Y}}$ to have independent columns. Indeed, there then
   exists an orthonormal matrix $V$, and non-negative diagonal matrices
   $L_1,\ldots,L_r$ such that  $\Sigma_r=VL_rV^T$ for each $r$.
   Now set $\check{\mathsf{Y}} = \mathsf{Y} V^T$, so that
   \begin{equation*}
     \mathsf{Y} =  \sum_r \Gamma^T U_r G_r V L_r^{1/2}
     \deq  \sum_r \Gamma_r^{1/2} X_r L_r^{1/2} 
   \end{equation*}
   has independent columns, 
   where $\Gamma_r = \Gamma^T U_r U_r^T \Gamma$ is non-negative definite
   and the independent matrices $X_r \in \mathbb{R}^{n \times p}$ have
   i.i.d. standard normal entries.
   Remark \ref{rem:different-dimensions} has extra detail on the equality in distribution.
   
   Now the ``companion'' matrix to $S_n$, namely
   \begin{equation*}
     B_n = p^{-1} \check{\mathsf{Y}} \check{\mathsf{Y}}^T
       = p^{-1} \mathsf{Y}  \mathsf{Y}^T
   \end{equation*}
   has precisely the form studied in \cite{xie2024cltlinearspectralstatistics}.
   
   Therefore, from Lemma 2.2 there, for each
   $z\in\mathbb{C}^+$, there exists unique $z$-dependent values
   $\deteqv{g}_i^{(r)}(z)$ with $z \deteqv{g}_1^{(r)}\in
   \overline{\mathbb{C}^+}$ and 
   $\deteqv{g}_2^{(r)} \in \mathbb{C}^+\cup\{0\}$,
   such that:
       \begin{align*}
       &\begin{cases}
           z\deteqv{g}_1^{(r)}(z)&=-\frac{1}{p}\Tr \left(({\sum_{s=1}^k}\deteqv{g}_2^{(s)}(z)L_s+\Id )^{-1}L_r\right),r=1,\ldots,k\\
           z\deteqv{g}_2^{(r)}(z)&=-\frac{1}{p}\Tr \left(({\sum_{s=1}^k}\deteqv{g}_1^{(s)}(z)\Gamma_s+\Id )^{-1}\Gamma_r\right),r=1,\ldots,k.
           \end{cases}
       \end{align*}
       The function
       $\tilde{m}_n^\circ:\mathbb{C}^+\rightarrow\mathbb{C}^+$
       given by\footnote{Note that we write $\tilde{m}_n^\circ$ and
     $\tilde{F}_n^\circ$ here for the quantities $\tilde{m}_n,
     \tilde{F}_n$ in \cite{xie2024cltlinearspectralstatistics}. In \cite{xie2024cltlinearspectralstatistics} the primary focus is on the spectrum
     of $B_n$, whereas here our interest is in the companion matrix $S_n$.
   }
       \begin{equation*}
         z\deteqv{m}_n^\circ(z)=-\frac{1}{n}\Tr \left({\textstyle\sum}_{r=1}^k\deteqv{g}_1^{(r)}(z)\Gamma_r+\Id  \right)^{-1}
       \end{equation*}
       defines the Stieltjes transform of a probability measure
       $\tilde{F}_n^\circ$ on $\mathbb{R}$ such that almost surely
       \begin{align*}
           F^{B_n}(x)-\deteqv{F}_n^\circ(x)\rightarrow 0
       \end{align*}
       at each $x \in \R$.
       Moreover,
           \begin{align*}
           &\iain{m_n^\circ(z)-\deteqv{m}_n^\circ(z)}\rightarrow 0
           \end{align*}
           pointwise almost surely for all $z \in \mathbb{C}^+$,
       where $m_n^\circ(z)$ is the Stieltjes transform of $F^{B_n}$. 
       
       Under the construction described above, we have
       \begin{align*}
         F^{S_n} = \left(1-\frac{n}{p}\right)I_{[0,\infty)}+
         \frac{n}{p} \iain{F^{B_n}.}
       \end{align*}
       Define the deterministic equivalent law $\deteqv{F}_n$ of $F^{S_n}$ as 
       \begin{align*}
         \deteqv{F}_n = \left(1-\frac{n}{p}\right)I_{[0,\infty)}+
         \frac{n}{p}\iain{\deteqv{F}_n^\circ}
       \end{align*}
       Then the Stieltjes transform of $\deteqv{F}_n$, denoted $\deteqv{m}_n$, is
       \begin{align*}
           \deteqv{m}_n =
         -\frac{1-n/p}{z}+\frac{n}{p}\iain{\deteqv{m}_n^\circ},
       \end{align*}
       and
       \begin{equation*}
         m_n(z) - \tilde{m}_n(z) = \frac{n}{p} [m_n^\circ(z) -
         \tilde{m}_n^\circ(z)].
       \end{equation*}
       Finally, let
       $\iain{\hat{\mu}_n^\circ}=(\frac{1}{p} \iain{ \Tr
         B_n,\ldots,\frac{1}{p}\Tr B_n^D})$, then we easily have
         that
       \begin{align*}
           \hat{\mu}_n= \iain{\hat{\mu}_n^\circ}.
       \end{align*} 
       As a result, the convergence of moments $\hat{\mu}_n-\deteqv{\mu}_n\rightarrow0$ is an intermediate result that appears in the proof of Corollary 3.10 in \cite{fan2017eigenvalue}, and follows directly from the definition of asymptotic equivalence in $\mathcal{D}$-law given in Definition 3.4 of \cite{fan2017eigenvalue}.
       \subsection{Proof of Lemma \ref{lem:nested.indep}}\label{subsec:proof.nested.indep}
       First, we introduce a useful lemma.
       \begin{lem}\label{lem:multi.normal.indep}
           Let $Z\in\mathbb{R}^{m\times n}$ have i.i.d. standard normal entries. For deterministic matrices $A_1,A_2,B_1,B_2$ of conformable sizes, define
           \begin{align*}
               X=A_1 Z B_1,\quad Y=A_2 Z B_2.
           \end{align*}
           Then $X$ and $Y$ are independent if and only if
           \begin{align*}
               A_1A_2^T=0 \quad\text{or}\quad B_1^T B_2=0.
           \end{align*}
       \end{lem}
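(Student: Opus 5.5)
The plan is to use the fact that $(X,Y)$ is jointly Gaussian, so independence is equivalent to vanishing cross-covariance, and then to compute that cross-covariance explicitly as a Kronecker product.

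First, note that every entry of $X$ and of $Y$ is a linear combination of the i.i.d.\ $N(0,1)$ entries of the single matrix $Z$. Hence $(\operatorname{vec}X,\operatorname{vec}Y)$ is a centred jointly Gaussian vector. For such a vector, $X$ and $Y$ are independent if and only if $\operatorname{Cov}(X_{ij},Y_{kl})=0$ for all index quadruples. This characterization still holds when the vector is degenerate, for example when $A_1=0$ and $X\equiv 0$. It follows from factorization of the joint characteristic function $\exp(-\tfrac12 t^T\Sigma t)$, which splits into the two marginal characteristic functions exactly when the off-diagonal block of $\Sigma$ vanishes.

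Second, compute the cross-covariance. Write
\begin{equation*}
X_{ij}=\sum_{a,b}(A_1)_{ia}Z_{ab}(B_1)_{bj},
\qquad
Y_{kl}=\sum_{c,d}(A_2)_{kc}Z_{cd}(B_2)_{dl}.
\end{equation*}
Using $\mathbb{E}[Z_{ab}Z_{cd}]=\mathbbm{1}[a=c,\,b=d]$ gives
\begin{equation*}
\operatorname{Cov}(X_{ij},Y_{kl})=\sum_{a,b}(A_1)_{ia}(A_2)_{ka}(B_1)_{bj}(B_2)_{bl}=(A_1A_2^T)_{ik}\,(B_1^TB_2)_{jl}.
\end{equation*}
So the cross-covariance array is the Kronecker product $(A_1A_2^T)\otimes(B_1^TB_2)$. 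This is the same mechanism as the characteristic-function computation in Remark \ref{rem:non-uniqueness}.

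Third, conclude the equivalence. A Kronecker product $M\otimes N$ vanishes if and only if $M=0$ or $N=0$. Indeed, if $M_{ik}\neq 0$ and $N_{jl}\neq 0$, then the entry $M_{ik}N_{jl}$ of the product is nonzero. Combining this with the first step gives the lemma in both directions.

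I do not expect a genuine obstacle. The only point needing care is that the joint-Gaussian independence criterion holds for possibly singular covariance matrices, which is exactly what the characteristic-function argument covers. This lemma is what will then be applied to the representation $S_{nr}=p^{-1}\sum_{s,t}\alpha_s^TU_s^TB_rU_t\alpha_t$, through the condition $B_rU_tU_t^TB_s=0$ in Assumption~\ref{assum:nested.indep}(ii), to obtain Lemma \ref{lem:nested.indep}.
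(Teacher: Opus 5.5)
Your proposal is correct and matches the paper's proof: both reduce independence to a vanishing cross-covariance via joint Gaussianity and identify that cross-covariance as a Kronecker product. The paper writes it as $(B_1^TB_2)\otimes(A_1A_2^T)$ using $\operatorname{vec}(X)=(B_1^T\otimes A_1)\operatorname{vec}(Z)$, whereas you compute it entrywise; you also spell out two points the paper leaves implicit, namely that the criterion holds for singular covariances and that $M\otimes N=0$ if and only if $M=0$ or $N=0$.
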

       \begin{proof}
           By the vectorization identity,
           \[
           \operatorname{vec}(X)=(B_1^\top\otimes A_1)\operatorname{vec}(Z),
           \qquad
           \operatorname{vec}(Y)=(B_2^\top\otimes A_2)\operatorname{vec}(Z).
           \]
           Hence
           \[
           \begin{pmatrix}
           \operatorname{vec}(X)\\
           \operatorname{vec}(Y)
           \end{pmatrix}
           =
           \begin{pmatrix}
           B_1^\top\otimes A_1\\
           B_2^\top\otimes A_2
           \end{pmatrix}
           \operatorname{vec}(Z),
           \]
           so $(\operatorname{vec}(X),\operatorname{vec}(Y))$ is jointly Gaussian with mean zero.
           Its cross-covariance is
           \begin{align*}
           \operatorname{Cov}(\operatorname{vec}(X),\operatorname{vec}(Y))
           &=(B_1^\top\otimes A_1)\,
           \operatorname{Cov}(\operatorname{vec}(Z))\,
           (B_2^\top\otimes A_2)^\top\\
           &=(B_1^\top\otimes A_1)(B_2\otimes A_2^\top)\\
           &=(B_1^\top B_2)\otimes(A_1A_2^\top),
           \end{align*}
           using $\operatorname{Cov}(\operatorname{vec}(Z))=\Id$ and the mixed-product property
           $(P\otimes Q)(R\otimes S)=(PR)\otimes(QS)$.
           Since $(\operatorname{vec}(X),\operatorname{vec}(Y))$ is jointly Gaussian,
           $X$ and $Y$ are independent if and only if
           \[
           \operatorname{Cov}(\operatorname{vec}(X),\operatorname{vec}(Y))=0,
           \]
           equivalently,
           \[
           (B_1^\top B_2)\otimes(A_1A_2^\top)=0.
           \]
           \end{proof}
           
           \begin{proof}[Proof of  Lemma \ref{lem:nested.indep}]
            Write $B_r =  Q_r Q_r^T$ with $Q_r$ chosen so that $Q_r^T Q_r$ is
            invertible. Assumption \ref{assum:nested.indep}(ii) is then equivalent
            to
            \begin{equation}
              \label{eq:4ii-equiv}
              Q_r^T U_t U_t^T Q_s = 0 \qquad \text{for } \qquad 1 \leq r < s \leq
              t \leq k.
            \end{equation}
            Independence of the matrices
                   $\{S_{nr}\}_{r=1}^k$ can be reduced to independence of the projected data
                   $\{Q_r^TY^{(r)}\}_{r=1}^k$, where $Y^{(r)}=\sum_{t=r}^k
                   U_tZ_t\Sigma_t^{1/2}$, and the $Z_t$ are independent matrices
                   with i.i.d. standard normal entries.
           Suppose without loss of generality that $r < s$.
            It therefore suffices to show, for every $t \geq r$ and $s > r$ that
            $Q_r^TU_tZ_t\Sigma_t^{1/2} \ \perp\!\!\!\perp\ Q_s^TU_tZ_t\Sigma_t^{1/2}.$
                   By Lemma~\ref{lem:multi.normal.indep}, this holds whenever
                   $Q_r^TU_tU_t^TQ_s=0$.
            If $r \leq t < s$, then $B_s U_t = Q_s Q_s^T U_t = 0$ by Assumption
            \ref{assum:nested.indep}(i), and so $Q_s^TU_t = 0$, since $Q_s^T Q_s$
            is invertible.
            When $ t \geq s$, the required condition follows from
            (\ref{eq:4ii-equiv}).   
            \end{proof}
            
       \subsection{Proof of Lemma \ref{lem:nested}}\label{subsec:proof.nested}
   
   By the nested structure, we have
   \begin{align*}
       \left|\text{det}(\mathcal{F}_n)\right|=\left|\prod_r\text{det}\left({\mathcal{F}}_n^{(r)}\right)\right|
   \end{align*}
   is also lower bounded away from zero. Following similar arguments in the proof of Theorem \ref{thm:main}, we have that for large $n$, almost surely, the moment estimator $\hat{\theta}_n'$ exist and converges to $\theta$. 
   
   From \eqref{eq:nested.clt.mu}, we have
   \begin{align*}
       {P_{n}}^{-1/2}\left(p(\hat{\mu}_{n}-\deteqv{\mu}_{n}) - R_{n}\right)\convd N(0,\Id _{D}),
   \end{align*} 
   where
   \begin{align*}
       R_n = \begin{bmatrix} R_{n1}^T &\ldots & R_{nk}^T \end{bmatrix}^T \in \mathbb{R}^D, \quad
       P_n = \text{diag}(P_{n1}, \dots, P_{nk}) \in \mathbb{R}^{D \times D}.
   \end{align*}
   Further applying Lemma \ref{lem:delta.var}, we conclude the proof with $Q_n=D_nR_n+p(\mathcal{F}_n^{-1}(\deteqv{\mu}_n)-\theta)$.

       \subsection{Asymptotic normality of MoM estimators when oracle eigenvectors are known}\label{app:oracle.clt}
       Consider the class of nested models defined in Section~\ref{sec:sequential.nested} with no intercept, i.e., $X=0$. Then the design matrices satisfy
       \begin{align*}
           &I_1\leq ...\leq I_k\leq n,\quad\text{col}(U_1)\subset \ldots \subset\text{col}(U_k).
           \end{align*}
       Furthermore, assume that each $U_r$ is a deterministic, block-diagonal incidence matrix with nonnegative entries and full column rank. For example, $U_r$ may represent a group assignment matrix that assigns each individual to the corresponding group. That is, $U_r[i,j]=1$ if the $i$th individual is assigned to the $j$th group, and $U_r[i,j]=0$ otherwise. 
       Define the projection matrix
       \begin{align*}
           B_1 &=U_1(U_1^TU_1)^{-1}U_1^T,
       \end{align*}
       and consider the quadratic form matrix
       \begin{align*}
           S_n = \frac{1}{p}Y^TB_1Y.
       \end{align*}
       We solve for the method of moments estimators $\hat{\theta}_n$ following the joint estimation procedure outlined in \eqref{eq:emp.fix.oracle.main}, with the method of moments estimator $\hat{\theta}_n$ defined as the solution to the moments equation
       \begin{align*}
           \hat{\mu}_n = \mathcal{F}_n(\hat{\theta}_n).
       \end{align*}
       To establish the asymptotic normality of $\hat{\theta}_n$, we adopt a similar approach to the proof of Theorem \ref{thm:main}. Specifically, we first establish a central limit theorem for the moments $\hat{\mu}_n$, from which the asymptotic distribution of the method-of-moments estimator $\hat{\theta}_n'$ follows via the Delta method. The difference here is that the deterministic equivalent law of $S_n$ and the corresponding central limit theorem of $\hat{\mu}_n$ are derived leveraging the nested structure of the model and the particular form of the matrix $B_1$, and are valid without relying on the simultaneous diagonalizability assumption. Here, we derive the asymptotic normality of $\hat{\mu}_n$. The application of the Delta method is analogous to the proof given in Section \ref{subsec:proof.clt}. 
       
       By the nature of the incidence matrices, we have that $U_1^TU_1$ is diagonal. Under the nested model, we further have that $U_1^TU_r$ has full row rank and that $U_1^TU_rU_r^TU_1$ is diagonal and positive definite. Additionally, considering a matrix $A\in\mathbb{R}^{m_1\times m_2}$, $m_1<m_2$ and a standard Gaussian vector $x\in\mathbb{R}^{m_2}$, there exists a standard Gaussian vector $z\in\mathbb{R}^{m_1}$ such that  $Ax\deq (AA^T)^{1/2}z$.
       As a result, by Gaussianity of the random effects, 
       \begin{align*}
           (U_1^TU_1)^{-1/2}U_1^TU_r\alpha_r\deq L_r^{1/2}X_r\Sigma_r^{1/2},
       \end{align*}
       where $L_r^{1/2}=(U_1^TU_1)^{-1/2}(U_1^TU_rU_r^TU_1)^{1/2}$ is diagonal and positive definite, and $X_r\in\mathbb{R}^{I_1\times p}$ has i.i.d. standard Gaussian entries. Thus, we can express the sum of squares matrix as
       \begin{align*}
           S_n\deq\frac{1}{I_1}{\sum_{r,s=1}^k\Sigma_{r}^{1/2}}X_{r}^TL_r^{1/2}L_s^{1/2}X_{s}\Sigma_{s}^{1/2}.
       \end{align*}
       Let $T_{j}=\sum_{r=1}^kl_{j,r}\Sigma_r$ and denote the $j$th row of $X_r$ by $x_{j,r}$, then by the independence of the random effects there further exist i.i.d. standard Gaussian vectors 
       $x_j\in\mathbb{R}^p$ such that $\sum_{r=1}^kl_{j,r}\Sigma_r^{1/2}x_{j,r}\deq T_{j}^{1/2}x_j$. Therefore, we can alternatively express $S_n$ as
       \begin{align*}
           S_n&\deq\frac{1}{I_1}\sum_{j=1}^{I_1} T_{j}^{1/2}x_jx_j^TT_{j}^{1/2}, \quad T_{j}=\sum_{r=1}^kl_{j,r}\Sigma_r.
       \end{align*}
       In particular, $S_n$ coincides in form with the random matrix model analyzed in \cite{xie2024cltlinearspectralstatistics}. Theorem~2.3 of that reference therefore yields a central limit theorem for the moments $\hat{\mu}_n$ analogous to Theorem~\ref{thm:clt}, upon replacing $\Gamma_r$ by $\Sigma_r$ and $\lambda_{j,r}$ by $l_{j,r}$.

       \section{Additional details on software}
   
       The code is available at \url{https://github.com/ran-xie/mlmm-mom}. The implementation has three main components. First, to evaluate the map $\Psi$ constructed in Lemma~\ref{lem:map.asym}, we developed routines to enumerate noncrossing pair partitions and compute the Kreweras complement of a given noncrossing pair partition. These combinatorial routines underlie the evaluation of $\Psi_n$, and hence of the map $\mathcal{F}_n$.
   
       Second, the method-of-moments estimator $\hat{\theta}_n$ is obtained by solving the nonlinear system
       \begin{align*}
           \hat{\mu}_n=\mathcal{F}_n(\hat{\theta}_n).
       \end{align*}
       Because the raw moments $\hat{\mu}_n^{(\ell)}$ grow polynomially in $\|p^{-1}S_n\|$, direct solution of this system may be numerically unstable. To improve stability, we introduce the transformation $\mathcal{H}$ and define
       \begin{align*}
           \widehat{\widehat{\mu}}_n
           =\mathcal{H}(\hat{\mu}_n)
           =\left(\hat{\mu}_{n}^{(1)},
           \frac{\hat{\mu}_{n}^{(2)}}{\hat{\mu}_{n}^{(1)}},
           \ldots,
           \frac{\hat{\mu}_{n}^{(D)}}{\hat{\mu}_{n}^{(D-1)}}\right).
       \end{align*}
       We then solve the transformed system $\widehat{\widehat{\mu}}_n-\widehat{\mathcal{F}}_n(\hat{\theta}_n)=0$, 
       where $\widehat{\mathcal{F}}_n=\mathcal{H}\circ\mathcal{F}_n$. This transformation improves numerical stability and yields more reliable convergence of the root-finding procedure. In the simulation experiments, the algorithm is initialized at a uniform vector. In practical applications, when prior information on the underlying parameters is available, a more informed initialization may substantially improve computational efficiency.
   
       Finally, the asymptotic bias and covariance of the MoM estimators are evaluated using the Python package \textit{spectral-clt}, available at \url{https://github.com/ran-xie/spectral-clt}.

       \section{Additional details on experiments}\label{app:experiments}
       
       \subsection{Asymptotic properties of MoM estimators under Scenario 1} 
       We first establish the asymptotic boundedness of the functionals $\varphi(F_1^l F_2^s)$ for $l, s \geq 0$. Leveraging the specific structure of $F_1$ and $F_2$, and the boundedness of the $J_i$s, for $l,s\geq 1$, we can derive the following formulas for the coefficients:
       \begin{align}
           &\varphi(F_1^l)=\frac{1}{p}\sum_i(J_i-n^{-1}J_i^2)^l=\frac{1}{p}\sum_iJ_i^l+o_n(1),\quad\varphi(F_2^l)=\varphi(B_1)=\frac{I-1}{p},\nonumber\\
           &\varphi(F_2^lF_1^s)=\varphi(B_1^{l+1}U_1U_1^TF_1^{s-1})=\varphi(F_1^s)=\frac{1}{p}\sum_iJ_i^s+o_n(1).\label{eq:f1f2.main}
       \end{align} 
       Under the assumption that $\{J_i\}$ are independent and identically distributed with bounded moments, and letting $\tilde{c} := C/c$, the expressions derived in \eqref{eq:f1f2.main} and the strong law of large numbers imply the following almost sure bounds as $n \to \infty$:
       \begin{align*}
           \E[J_1^l]\tilde{c}^{-1}\leq \varphi(F_1^l)\leq \E[J_1^l]\tilde{c},\quad \tilde{c}^{-1}\leq\varphi(F_2^l)\leq \tilde{c},\quad \E[J_1^s]\tilde{c}^{-1}\leq\varphi(F_2^lF_1^s)\leq \E[J_1^s]\tilde{c}.
       \end{align*}
       Since $\Bar{\mathcal{F}}_n^{(1)}$ and $\Bar{\mathcal{F}}_n^{(2)}$ are polynomial functions of $\theta_1$ and $\theta_2$, respectively, it follows that the Jacobian determinants $\lvert \det(J \mathcal{F}_n^{(1)}) \rvert$, $\lvert \det(J \mathcal{F}_n^{(2)}) \rvert$ are bounded away from zero for sufficiently large $n$. Consequently, applying Lemma~\ref{lem:nested}, we obtain
       \begin{enumerate}
               \item for large $n$, almost surely, the moment estimator $\hat{\theta}_n'=(\hat{\pi},\hat{\tau}_1,\hat{\tau}_2)$ exist and converges to $\theta$,
           \item let $\mathcal{F}_n=(\mathcal{F}_n^{(1)},\mathcal{F}_n^{(2)})$, for $D_n:={(J\mathcal{F}_n(\theta))^{-1}}$, as $n\rightarrow \infty$,
           \begin{align}\label{eq:clt.fullsib.main}
               (D_nP_nD_n^T)^{-1/2}\left[p\left(\hat{\theta}_n'-\theta\right)-Q_n\right]\convd N(0,\Id_3)
           \end{align}
            for some $P_n\in\mathbb{R}^{3\times 3}$, $Q_n\in\mathbb{R}^3$.
           \end{enumerate}
       To characterize the structure of $P_n$ and $Q_n$, we define the vector of empirical moments
       $$
       \hat{\mu}_n'=(p^{-1}\Tr S_{n1},p^{-1}\Tr S_{n1}^2,p^{-1}\Tr S_{n2}),$$ 
       and recall that $\mathcal{F}_n=(\mathcal{F}_n^{(1)},\mathcal{F}_n^{(2)})$ denotes the mapping from $\hat{\theta}_n'$ to $\hat{\mu}_n'$. Our strategy involves first establishing a central limit theorem for the moments $\hat{\mu}_n'$, from which the asymptotic distribution of the method-of-moments estimator $\hat{\theta}_n'$ follows via the Delta method. As detailed in Section \ref{subsec:deteqv}, the empirical spectral distribution $F^{S_{n2}}$ admits a deterministic equivalent law $\tilde{F}_{n2}$, given by
       \begin{align*}
           \deteqv{F}_{n2} = \left(1-\frac{n}{p}\right)I_{[0,\infty)}+ \frac{n}{p}\deteqv{F}_{n2}',
       \end{align*}
       where $\deteqv{F}_{n2}'$ is a probability measure whose Stieltjes transform $\deteqv{m}_{n2}(z)$ is determined by the system of equations:
       \begin{align*}
       &\quad z\deteqv{m}_{n2}=-\frac{1}{n}\Tr \left(\deteqv{g}_1B_2+\Id  \right)^{-1},\\ 
       &\begin{cases}
           z\deteqv{g}_1&=-\frac{1}{N}\Tr \left((\deteqv{g}_2\Sigma_2+\Id )^{-1}\Sigma_2\right)\\
           z\deteqv{g}_2&=-\frac{1}{N}\Tr \left((\deteqv{g}_1B_2+\Id )^{-1}B_2\right)
           \end{cases}
       \end{align*}
       Similarly, the deterministic equivalent law $\deteqv{F}_{n1}$ for $F^{S_{n1}}$ is defined as
       \begin{align*}
           \deteqv{F}_{n1} = \left(1-\frac{n}{p}\right)I_{[0,\infty)}+ \frac{n}{p}\deteqv{F}_{n1}',
       \end{align*}
       where the Stieltjes transform $\deteqv{m}_{n1}(z)$ of $\deteqv{F}_{n1}'$ satisfies
       \begin{align}
       &\quad z\deteqv{m}_{n1}=-\frac{1}{n}\Tr \left(\deteqv{g}_1^{(1)}\Gamma_1+\deteqv{g}_1^{(2)}\Gamma_2+\Id  \right)^{-1},\\ 
       &\begin{cases}
           z\deteqv{g}_1^{(r)}&=-\frac{1}{N}\Tr \left((\deteqv{g}_2^{(1)}\Sigma_1+\deteqv{g}_2^{(2)}\Sigma_2+\Id )^{-1}\Sigma_r\right),r=1,2\\
           z\deteqv{g}_2^{(r)}&=-\frac{1}{N}\Tr \left((\deteqv{g}_1^{(1)}\Gamma_1+\deteqv{g}_1^{(2)}\Gamma_2+\Id )^{-1}\Gamma_r\right),r=1,2
           \end{cases},
       \end{align}
       with $\Gamma_1=B_1U_1U_1^TB_1$, $\Gamma_2=B_1U_2U_2^TB_1=B_1$. From $\tilde{F}_{n1}$ and $\tilde{F}_{n2}$, we define the deterministic equivalent moments:
       \begin{align*}
           \deteqv{\mu}_{n}^{(1)} &= \left(\int xd\deteqv{F}_{n1}(x),\int x^2d\deteqv{F}_{n1}(x)\right),\\
          \deteqv{\mu}_{n}^{(2)} &= \int xd\deteqv{F}_{n2}(x),
           \quad \deteqv{\mu}_n=(\deteqv{\mu}_{n}^{(1)},\deteqv{\mu}_{n}^{(2)}).
       \end{align*}
       By Theorem \ref{thm:clt}, the empirical moments satisfy a central limit theorem: there exist $R_{n1}\in\mathbb{R}^2$, $P_{n1}\in\mathbb{R}^{2\times 2}$, and scalars $R_{n2}$, $P_{n2}$, such that as $n\rightarrow\infty$,
           \begin{align*}
           &P_{n1}^{-1/2}\left(p(\hat{\mu}_{n}^{(1)}-\deteqv{\mu}_{n}^{(1)}) - R_{n1}\right)\xrightarrow{d}  N(0,\Id_2),\\
           &P_{n2}^{-1/2}\left(p(\hat{\mu}_{n}^{(2)}-\deteqv{\mu}_{n}^{(2)}) - R_{n2}\right)\xrightarrow{d}  N(0,1).
           \end{align*}
       Since $B_1B_2=0$, by Gaussianity we have that $S_{n1}$ and $S_{n2}$ are independent. Defining
       \begin{align}\label{eq:fullsib.rn.pn}
           R_n=\begin{bmatrix}
               R_{n1}\\
               R_{n2}
           \end{bmatrix},\quad P_n=\begin{bmatrix}
               P_{n1}&0\\
               0&P_{n2}
           \end{bmatrix},
           \end{align}
       it follows that  $P_{n}^{-1/2}\left(p(\hat{\mu}_{n}-\deteqv{\mu}_{n}) - R_{n}\right)\rightarrow N(0,\Id_3)$.
       Finally, by the Delta method, we conclude that (\ref{eq:clt.fullsib.main}) holds with
       $$
       Q_n=D_nR_n+p(\mathcal{F}_n^{-1}(\deteqv{\mu}_n)-\theta).
       $$

       \subsection{Asymptotic properties of MoM estimators under Scenarios 2, 3, 5}
       In Section~\ref{app:oracle.clt}, we derived a central limit theorem for the moments $\hat{\mu}_n$ in the nested model without intercept, without relying on simultaneous diagonalizability. Specializing that argument to the full-sibling model establishes the CLT for $\hat{\mu}_n$ under Scenarios~2, 3, and~5: there exist $P_n\in\mathbb{R}^{3\times 3}$ and $R_n\in\mathbb{R}^{3}$ such that, as $n\rightarrow\infty$, $ P_n^{-1/2}\left(p(\hat{\mu}_n-\deteqv{\mu}_n) - R_n\right)\rightarrow N(0,\Id_3).$

       In Scenario~2, simultaneous diagonalizability holds, so a CLT can alternatively be obtained from the general framework in Section~\ref{subsec:clt}.
       
       Under Scenario~3, the Delta method differs slightly from the general development in Section~\ref{subsec:proof.clt}. The model is misspecified in that $\mathcal{F}_n$ is constructed following the procedure in Section \ref{sec:main.results} under the assumption that $\Sigma_1$ and $\Sigma_2$ are simultaneously diagonalizable, but in reality they are not simultaneously diagonalizable. Hence
       \begin{align*}
           \hat{\mu}_n\neq_{a.s} \mathcal{F}_n(\theta).
       \end{align*}
       Consequently, $\hat{\theta}=\mathcal{F}_n^{-1}(\hat{\mu}_n)$ is inconsistent for $\theta$. By Lemma~\ref{lem:deteqv.mmts},
       \begin{align*}
           (\hat{\theta}-\theta)-(\mathcal{F}_n^{-1}(\deteqv{\mu}_n)-\theta) \rightarrow 0\quad \text{a.s.},
       \end{align*}
       where the bias correction $\mathcal{F}_n^{-1}(\deteqv{\mu}_n)-\theta$ need not vanish.
       For the same reason, $D_n':={J\mathcal{F}_n^{-1}(\deteqv{\mu}_n)}$ 
       need not converge to {$D_n=(J\mathcal{F}_n(\theta))^{-1}$}, as in Theorem~\ref{thm:main}. Consequently, under Scenario~3 the method-of-moments estimator $\hat{\theta}_n$ satisfies $(D_n'P_nD_n'^T)^{-1/2}\left(p(\hat{\theta}_n-\theta)-Q_n\right)\rightarrow N(0,\Id_3),$
       where $Q_n$ is given by $$Q_n=D_n'R_n+p(\mathcal{F}_n^{-1}(\deteqv{\mu}_n)-\theta).$$
    \subsection{Empirical and theoretical correlation matrix between the MoM estimators} \label{subsec:corr}
    In Table \ref{tab:corr}, we report the empirical and theoretical correlation matrices between the MoM estimators for Scenarios 1, 2, 3, and 5. We observe that both in the sequential estimation procedure and the nested estimation procedure, the MoM estimators of parameters in different levels of variance components are correlated.

       \begin{table}[ht]
\centering
\caption{Empirical and theoretical correlation matrices of the MoM estimators $(\hat\tau_1,\hat\rho,\hat\tau_2)$.}
\label{tab:corr}
\renewcommand{\arraystretch}{1.1}

\begin{tabular}{cc}
\multicolumn{2}{c}{\textbf{Scenario 1}} \\[0.3em]

Empirical &
Theoretical \\

$\begin{pmatrix}
1.000 & -0.636 & -0.066\\
 & 1.000 & -0.275\\
 &  & 1.000
\end{pmatrix}$
&
$\begin{pmatrix}
1.000 & -0.653 & -0.058\\
 & 1.000 & -0.303\\
 & & 1.000
\end{pmatrix}$
\\[2em]

\multicolumn{2}{c}{\textbf{Scenario 2}} \\[0.3em]

Empirical &
Theoretical \\

$\begin{pmatrix}
    1.0& -0.1243& -0.0066\\
    & 1.0& -0.9724\\
    & & 1.0    
\end{pmatrix}$
&
$\begin{pmatrix}
    1.0& -0.0818& -0.0639\\
    & 1.0& -0.9718\\
    & & 1.0
\end{pmatrix}$
\\[2em]

\multicolumn{2}{c}{\textbf{Scenario 3}} \\[0.3em]

Empirical &
Theoretical \\

$\begin{pmatrix}
    1.0& -0.6424& 0.4306\\
    & 1.0& -0.9291\\
    & & 1.0   
  \end{pmatrix}$
&
$\begin{pmatrix}
    1.0& -0.6283& 0.4051\\
    & 1.0& -0.9255\\
    & & 1.0
\end{pmatrix}$
\\[2em]

\multicolumn{2}{c}{\textbf{Scenario 5}} \\[0.3em]

Empirical &
Theoretical \\

$\begin{pmatrix}
    1.0& -0.6805& 0.5289\\
    & 1.0& -0.9549\\
    & & 1.0
    \end{pmatrix}$
&
$\begin{pmatrix}
    1.0& -0.6579& 0.4993\\
    & 1.0& -0.953\\ 
    & & 1.0
\end{pmatrix}$

\end{tabular}

\end{table}

       \section{Simulations under the two-way unbalanced model}\label{app:half.sib}
       Consider an experimental design in which each of the $I_1$ sires is mated with $M$ dams, so that the total number of dams is $I_2=I_1M$. The $m$th dam mated with the $i$th sire produces $J_{im}$ offspring, yielding a total sample size $n=\sum_{i=1}^{I_1}\sum_{m=1}^{M}J_{im}$. Let $J_i=\sum_{m=1}^{M}J_{im}$ denote the total number of offspring from sire $i$. For each offspring, we observe $p$ phenotypic traits. The observed trait matrix $Y\in\mathbb{R}^{n\times p}$ is assumed to follow the model
   \begin{align}\label{eq:Y}
       Y=X\beta+U_1\alpha_1+U_2\alpha_2+U_3\alpha_3,
       \qquad
       \alpha_r\sim N(0,\Id_{I_r}\otimes\Sigma_r).
   \end{align}
   Here $X=\mathbbm{1}_n$ is the fixed-effects design matrix, $\beta=\mu^T$ is the fixed effect,
   \[
   U_1=\diag(\mathbbm{1}_{J_1},\ldots,\mathbbm{1}_{J_{I_1}})\in\mathbb{R}^{n\times I_1}
   \]
   assigns each offspring to its sire,
   \[
   U_2=\diag(\mathbbm{1}_{J_{11}},\ldots,\mathbbm{1}_{J_{1M}},\ldots,\mathbbm{1}_{J_{I_11}},\ldots,\mathbbm{1}_{J_{I_1M}})\in\mathbb{R}^{n\times I_2}
   \]
   assigns each offspring to its dam, and $U_3=\Id_n$ corresponds to the individual-level error. The matrices $\alpha_r\in\mathbb{R}^{I_r\times p}$ represent the random effects at the three levels.
   
   In the simulation study, we take $I_1=100$ sires, each mated with $M=2$ dams, and observe $p=200$ phenotypic traits. The offspring counts $\{J_{im}\}$ are modeled as i.i.d. random variables uniformly distributed on $\{1,2\}$: $J_{im}\sim \Unif\{1,2\}$.

   We assume that the eigenvalues of $\Sigma_1$ follow an exponential decay parameterized by $\theta_1=(\tau_1,\rho_1)$, those of $\Sigma_2$ follow a linear decay parameterized by $\theta_2=(\tau_2,\rho_2)$, and those of $\Sigma_3$ follow a step-function model parameterized by $\theta_3=(\tau_3,\tau_4,\rho_3)$:
   \begin{align*}
       g_1(x)&=\tau_1\exp(-\rho_1x), \quad g_2(x)=\tau_2(1-\rho_2x)_+, \\
       g_3(x)&=\tau_3\mathbbm{1}[0\le x\le \rho_3]+\tau_4\mathbbm{1}[\rho_3<x\le 1].
   \end{align*}
   The true parameter values are set to
   \begin{align}\label{eq:halfsib.params}
       \tau_1=2,\quad \rho_1=8,\quad \tau_2=1,\quad \rho_2=3,\quad
       \tau_3=1,\quad \tau_4=0,\quad \rho_3=0.2.
   \end{align}
   Figure~\ref{fig:half.sib.sigma.eigs} displays the corresponding functions $g_1$, $g_2$, and $g_3$.
   
       \begin{figure}
           \centering
           \includegraphics[width=0.8\linewidth]{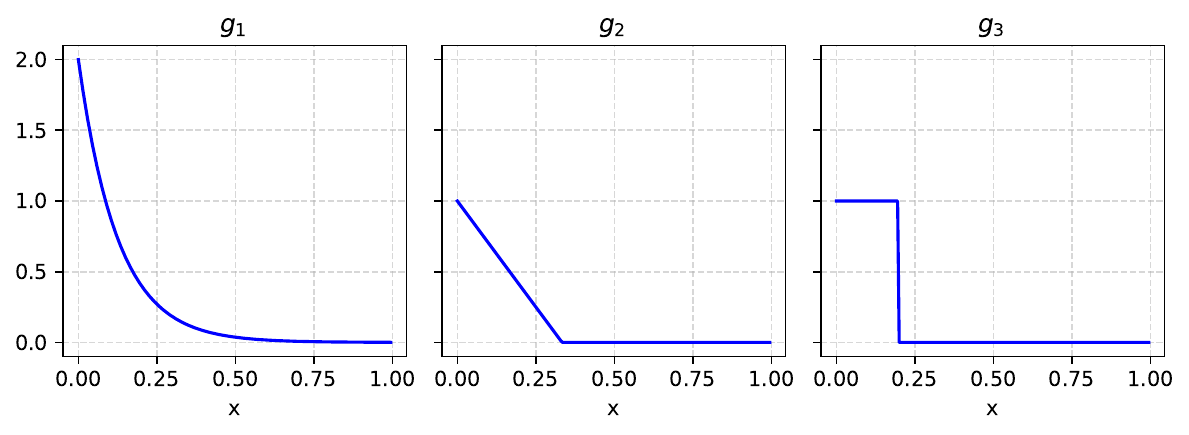}
           \caption{Exponential decay function $g_1$, linear decay function $g_2$ and step decay function $g_3$ with true parameters set in (\ref{eq:halfsib.params}).}
           \label{fig:half.sib.sigma.eigs}
       \end{figure}
   
       We estimate $\theta_3$, $\theta_2$, and $\theta_1$ sequentially following the procedure outlined in Section~\ref{sec:sequential.nested}. As in Section~\ref{subsec:simu.main}, we repeat the experiment 1000 times and divide the resulting estimators into 20 groups. Figure~\ref{fig:hist.fullsibhalfsib} shows histograms of the MoM estimators. Table~\ref{tab:table1} compares the empirical bias and standard deviation of the estimators with the corresponding theoretical values predicted by the central limit theorem. Figure~\ref{fig:normal.approx} presents $99\%$ normal-approximation confidence intervals for the empirical bias and standard deviation. 
       
       Among the parameters corresponding to the largest eigenvalue at each level, $\tau_3$ is estimated most accurately, followed by $\tau_2$ and then $\tau_1$.  This pattern is consistent with our earlier findings in the full sibling model simulations in Section \ref{subsec:simu.main} and reflects two key features of the parameter estimation procedure. First, estimation at inner levels requires separating the cumulative contributions of multiple random-effect components and is therefore more difficult. Second, because of the nested structure, inner levels involve fewer effective random variables, which leads to greater variability. Additionally, the estimator of $\rho_1$ exhibits substantially larger bias and standard deviation than the other estimators, likely in part because of the relatively large magnitude of $\rho_1$.

       Overall, the theoretical and empirical biases and standard deviations are in close agreement. Moreover, for each parameter, the theoretical bias and standard deviation lie within the corresponding $99\%$ confidence intervals constructed from the empirical estimates.

       \begin{figure}
           \centering       
           \includegraphics[width=\linewidth]{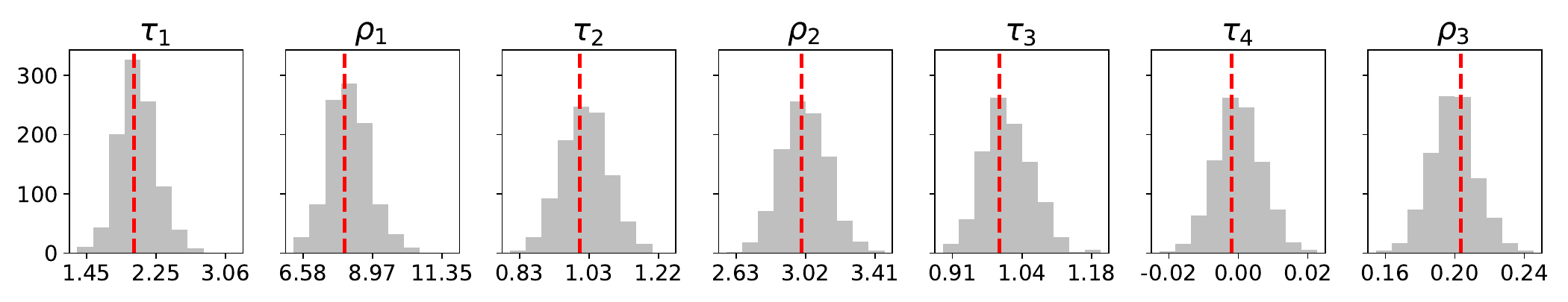}
           \caption{Histogram of MoM estimators, with true parameter value marked in red. }
           \label{fig:hist.fullsibhalfsib}
       \end{figure}

       \begin{table}
           \centering
            \caption{Theoretical and empirical bias and standard deviations of the MoM estimators. }
           \begin{tabular}{c c c c c c c c c}
               & & $\tau_1$ & $\rho_1$ & $\tau_2$ & $\rho_2$ & $\tau_3$ & $\tau_4$ & $\rho_3$\\
                \hline
               \multirow{2}{*}{bias}&empirical &0.041	&0.1627	&0.0148	&0.0178	&0.0172	&0.0018	&-0.0042 \\
               &theoretical&0.0572	&0.2373	&0.0174	&-0.0077	&0.0195	&0.0025	&-0.0095\\
               \multirow{2}{*}{sd}&empirical&0.2269	&0.7334	&0.0665	&0.1306	&0.0492	&0.0062	&0.0137 \\
               &theoretical&0.222	&0.7096	&0.0667	&0.13	&0.047	&0.0061	&0.0131

           \end{tabular}
         \label{tab:table1}
       \end{table}
       
       \begin{figure}
           \centering
           \includegraphics[width=0.57\linewidth]{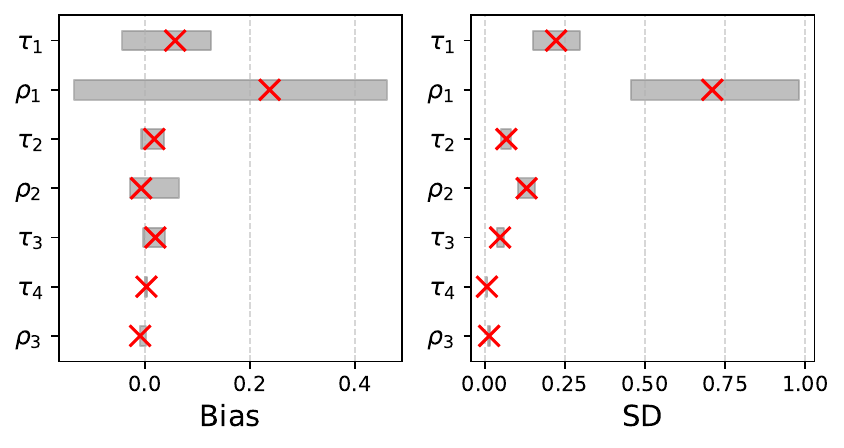}
           \caption{Normal approximation 99\% confidence intervals for the bias and standard deviation of method of moments estimators, computed using 20 empirical estimates, each based on 50 MoM estimators. The theoretical bias and standard deviations of the MoM estimators are marked in red.}
           \label{fig:normal.approx}
       \end{figure}

       \end{appendix}
       

\end{document}